\begin{document}

%%%%%%%%%%%%%%%%%%%%%%%
%% Title information %%
%%%%%%%%%%%%%%%%%%%%%%%

\title{Taming Concurrency for Verification\\ Using Multiparty Session Types (Technical Report)}
\titlerunning{Taming Concurrency for Verification using \MPST (Tech.\ Rep.)}

%%%%%%%%%%%%%%%%%%%%%%%%
%% Author information %%
%%%%%%%%%%%%%%%%%%%%%%%%

\author{Kirstin Peters \and Christoph Wagner \and\\ Uwe Nestmann}
\institute{TU Berlin/TU Darmstadt, Germany}

\authorrunning{K. Peters, C. Wagner, and U. Nestmann}

\maketitle

%%%%%%%%%%%%%%
%% Abstract %%
%%%%%%%%%%%%%%

\begin{abstract}
	The additional complexity caused by concurrently communicating processes in distributed systems render the verification of such systems into a very hard problem.
	Multiparty session types were developed to govern communication and concurrency in distributed systems.
	As such, they provide an efficient verification method \wrt properties about communication and concurrency, like communication safety or progress.
	However, they do not support the analysis of properties that require the consideration of concrete runs or concrete values of variables.

	We sequentialise well-typed systems of processes guided by the structure of their global type to obtain interaction-free abstractions thereof.
	Without interaction, concurrency in the system is reduced to sequential and completely independent parallel compositions.
	In such abstractions, the verification of properties such as \eg data-based termination that are not covered by multiparty session types, but rely on concrete runs or values of variables, becomes significantly more efficient.
	
	This technical report provides proofs and additional material for the paper \cite{petersWagnerNestmann19}.
	\keywords{concurrency, verification, multiparty session types}
\end{abstract}

%%%%%%%%%%%%%%%%%%%
%% Introduction  %%
%%%%%%%%%%%%%%%%%%%

\section{Introduction}

Modern society is increasingly dependent on large-scale software systems that are distributed, collaborative, and communication-centred.
One of the techniques developed to handle the additional complexity caused by distributed actors are \emph{multiparty session types} (\MPST) \cite{hondaYoshidaCarbone08}. \MPST allow to specify the desired behaviour of communication protocols as by-design correct types that are used to verify the communication structure of software products. The properties guaranteed by well-typed processes cover communication safety (all processes conform to globally agreed communication protocols) and liveness properties such as deadlock-freedom. Their main advantage is that their verification method is extremely efficient---in comparison to \eg standard model checking.

\MPST were developed to govern communication and concurrency in distributed systems.
However, as it is typical for type systems, standard \MPST variants (without dependable types) do not support the analysis of properties that require the consideration of concrete runs or concrete values of variables.

The hardest part about the verification of distributed systems is the state space explosion that results from concurrent communication attempts, \ie the exponential blow-up that results from computing all possible combinations of potential communication partners.
The problem of concurrency mainly lies in the communication structure, which is already completely captured by \MPST.
We show that the knowledge of a program/system to be well-typed, allows us to sequentialise it following the structure of its global type and thereby to remove all communication.
Accordingly, we show how we can benefit from the effort we spend on an \MPST analysis of a system also for the verification of its properties that go beyond its communication structure.

We use the global type of a well-typed system to guide its sequentialisation.
We refer to the result as \emph{sequential global process} (\SGP), although it might still contain parallel compositions, albeit only on completely independent parts.
Since the structure of communication was already verified by the well-typedness proof, we can reduce communication to value updates.
More precisely, we map well-typed systems that interact concurrently, to \SGP-systems without any interaction mechanisms or name binders.
Such \SGP-systems consist of a vector of variables with values and a \SGP-process that simulates the data flow of the original system.
Therefore, we translate the reception of data in communication into updates of the vector in the \SGP-system.
By removing the communication we remove also the problem of state space explosion.
Our translation is valid if the considered process is well-typed \wrt a (set of) global type(s).
Thereby, we sequentialise communications that may happen concurrently in the original system but are sequential in global types.
Note that such communications are always causally independent of each other, thus ordering them does not significantly influence the behaviour of the system, \eg it does not influence what values are computed.
Apart from such sequentialisations the original system and its abstraction into a \SGP-system behave similarly.

\smallskip
\textbf{Contributions.}
We provide an algorithm to remove communication from well-typed systems and thereby sequentialise them, while preserving the evolution of data of the original system.
Deriving this algorithm was technically challenging but the result is a simple rewriting function and easy to automate.

Then we prove that, provided that the original system was well-typed, the algorithm produces a \SGP-system that is closely related to the original system:
the original system and its abstraction are related by a variant of operational correspondence \cite{gorla10} and are coupled similar \cite{parrowCoupled92}.
With that, the derived \SGP-system is a good abstraction of the original system that can be used instead of the original to verify properties on concrete data.
Since the mapping into \SGP-systems is usually linear and because \SGP-systems do not contain any form of interaction or binders, properties can be checked more efficiently.

Finally, we provide a mapping---that is again a simple rewriting algorithm---from \SGP-processes into \textsf{Promela}, the input language of the model checker \textsf{Spin} \cite{spin,spin91}.
With that, the properties that are not already guaranteed by the \MPST analysis but require the consideration of concrete runs or concrete data can be checked.
Since the main challenge here is the sequentialisation of concurrent systems into interaction-free abstractions, the translation of \SGP-systems into \textsf{Promela} is simple and can be used as a role model to obtain similar mappings for other model checkers.

\smallskip
\textbf{Overview.}
In Section~\ref{sec:AppendixMPST} we extend Section~2 of \cite{petersWagnerNestmann19} and introduce multiparty session types (including the things that are missing in \cite{petersWagnerNestmann19} such as local types, projection, and typing rules).
Section~\ref{sec:basicProperties} proves the basic properties of the introduced type system.
Section~3 of \cite{petersWagnerNestmann19} introduces \SGP-systems and a mapping that translates well-typed systems into \SGP-systems.
In Section~\ref{sec:proofs} we prove the relations between the original systems and their abstractions into \SGP-systems as they are described in Section~4 of \cite{petersWagnerNestmann19}.
Then, Section~5 of \cite{petersWagnerNestmann19} illustrates how the sequentialisation can be used to verify properties of the original system.
Section~\ref{sec:examples} introduces some small examples to illustrate this method.

%%%%%%%%%%
%% MPST %%
%%%%%%%%%%

\section{Multiparty Session Types}
\label{sec:AppendixMPST}

In the following we extend Section~2 of \cite{petersWagnerNestmann19}. In particular, we introduce some additional concepts of multiparty session types such as local types, derive the notion of well-typedness, and show some standard properties.

Multiparty session types describe global behaviours as \emph{sessions}, \ie units of conversations. The participants of such sessions are called \emph{roles}. \emph{Global types} specify protocols from a global point of view, whereas \emph{local types} describe the behaviour of individual roles within a protocol. \emph{Projection} ensures that a global type and its local types are consistent. These types are used to reason about processes formulated in a \emph{session calculus}. Most of the existing session calculi are extensions of the well-known $ \pi $-calculus \cite{milnerParrowWalker92} with specific operators adapted to correlate with local types.

Similar to \cite{synchMPST,yoshida10}, we assume that roles, \ie the identifiers for participants, are natural numbers.
Assume a countably infinite set of names. Names are used to denote channels and variables that may stand for a channel or some value. In the session calculus we distinguish \emph{shared channels} that are used outside of sessions (to initialise sessions) and \emph{session channels} that are used within sessions.

\subsection{Global Types, Local Types, and Projection}
\label{sec:types}

Global types describe protocols from a global point of view on systems by interactions between roles. They are used to formalise specifications that describe the desired properties of a system.
We inherit the definition of global types from \cite{synchMPST}, but unify the transmission of values and branching into a single construct as done in \cite{DemangeonHonda12}.

\begin{definition}[Global Types]
	\label{def:globalTypes}
	The \emph{global types} are given by
	\begin{align*}
		\GT &\deffTerms
		\GCom{\Role_1}{\Role_2}{\Set{ \GLab{\Label_i}{\tilde{\Sort}_i}{\GT_i} }_{i \in \indexSet}}
		\sepTerms \GPar{\GT_1}{\GT_2}
		\sepTerms \GRec{\TVar}{\GT} \sepTerms \TVar
		\sepTerms \GEnd
	\end{align*}
	where $ \Role_1, \Role_2 $ are roles, $ \Label_i $ are labels, $ \tilde{\Sort}_i $ are sequences of sorts, $ \indexSet $ are non-empty finite index sets, and $ \TVar $ are type variables.
\end{definition}

The global type $ \GCom{\Role_1}{\Role_2}{\Set{ \GLab{\Label_i}{\tilde{\Sort}}{\GT_i} }_{i \in \indexSet}} $ specifies a communication from role $ \Role_1 $ to $ \Role_2 $, where $ \Role_1 $ picks a label $ \Label_i $, \ie one of the indexed set of options, transmits values of the sorts $ \tilde{\Sort}_i $ and then the type proceeds with $ \GT_i $.
The parallel composition $ \GPar{\GT_1}{\GT_2} $ allows to combine two independent global types $ \GT_1 $ and $ \GT_2 $, where independence means that these two global types do not share roles.
The operators $ \GRec{\TVar}{\GT} $ and $ \TVar $ introduce recursion,
whereas successful termination of a global type is specified by $ \GEnd $.

Let $ \Roles{\cdot} $ return the roles used in a global type (or a process as introduced later).

Global types describe systems from a global point of view.
To link them with the local points of view of processes they are projected onto their roles to obtain local types.
Again, we use the local types of \cite{synchMPST}, where we combine communication and branching into single constructs for the sender and the receiver as done in \cite{DemangeonHonda12}.

\begin{definition}[Local Types]
	\label{def:localTypes}
	The \emph{local types} are given by
	\begin{align*}
		\LT &\deffTerms
		\LSend{\Role}{\Set{ \LLab{\Label_i}{\tilde{\Sort}_i}{\LT_i} }_{i \in \indexSet}}
		\sepTerms \LGet{\Role}{\Set{ \LLab{\Label_i}{\tilde{\Sort}_i}{\LT_i} }_{i \in \indexSet}}
		\sepTerms \LRec{\TVar}{\LT} \sepTerms \TVar
		\sepTerms \LEnd
	\end{align*}
	where \Role are roles, $ \Label_i $ are labels, $ \tilde{\Sort}_i $ are sequences of sorts, \indexSet are non-empty finite index sets, and $ \TVar $ are type variables.
\end{definition}

The two local end points of communication are the types $ \LSend{\Role}{\Set{ \LLab{\Label_i}{\tilde{\Sort}_i}{\LT_i} }_{i \in \indexSet}} $ for the sender, where the role~\Role indicates the receiver, and $ \LGet{\Role}{\Set{ \LLab{\Label_i}{\tilde{\Sort}_i}{\LT_i} }_{i \in \indexSet}} $ for the receiver, where the role~\Role indicates the sender.
Recursion with the constructs $ \LRec{\TVar}{\LT} $ and $ \TVar $ and successful termination represented by $ \LEnd $ are similar to global types.

The partial mapping from global types onto their roles is called \emph{projection}.
It is undefined for parallel global types that share a role and communications that branch such that roles that are neither the sender nor the receiver have to behave differently.
The first case reflects that parallel composition on global types defines independence, \ie parallel global types specify the behaviour of partitions of distributed systems that do not interact.
The latter case ensures that if a process---the sender of this communication---decides to branch then only processes that are informed about this decision can adapt their behaviour accordingly.
If for a global type \GT projection is defined for all its roles then we call this type \emph{projectable}.

\begin{definition}[Projection]
	\label{def:projection}
	\emph{Projection} of a global type \GT onto a role \Role[p], written as $ \Proj{\GT}{\Role[p]} $ is defined as:
	\[ \begin{array}{c}
		\Proj{\left( \GCom{\Role_1}{\Role_2}{\Set{ \GLab{\Label_i}{\tilde{\Sort}_i}{\GT_i} }_{i \in \indexSet}} \right)}{\Role[p]} =
			\begin{cases}
				\LSend{\Role_2}{\Set{ \LLab{\Label_i}{\tilde{\Sort}_i}{\left( \Proj{\GT_i}{\Role[p]} \right)} }_{i \in \indexSet}} & \text{if } \Role[p] = \Role_1 \neq \Role_2\\
				\LGet{\Role_1}{\Set{ \LLab{\Label_i}{\tilde{\Sort}_i}{\left( \Proj{\GT_i}{\Role[p]} \right)} }_{i \in \indexSet}} & \text{if } \Role[p] = \Role_2 \neq \Role_1\\
				\bigsqcup_{\Role[p], i \in \indexSet} \left( \Proj{\GT_i}{\Role[p]} \right) & \text{if } \Role[p] \notin \Set{ \Role_1, \Role_2 }
			\end{cases}\\
		\Proj{\left( \GPar{\GT_1}{\GT_2} \right)}{\Role[p]} =
			\begin{cases}
				\Proj{\GT_i}{\Role[p]} & \text{if } \Role[p] \in \GT_i \text{ and } \Role[p] \notin \GT_j, i \neq j \in \Set{ 1, 2 }\\
				\LEnd & \text{if } \Role[p] \notin \GT_1 \text{ and } \Role[p] \notin \GT_2
			\end{cases}\\
		\Proj{\left( \GRec{\TVar}{\GT} \right)}{\Role[p]} = \LRec{\TVar}{\left( \Proj{\GT}{\Role[p]} \right)}
		\hspace{2em}
		\Proj{\TVar}{\Role[p]} = \TVar
		\hspace{2em}
		\Proj{\GEnd}{\Role[p]} = \LEnd
	\end{array} \]
	and undefined for all missing cases.
\end{definition}

In the last case of the rule for communication---when projecting onto a role that does not participate in this communication---we map to:
\begin{align*}
	\bigsqcup_{\Role[p], i \in \Set{1, \ldots, n}} \left( \Proj{\GT_i}{\Role[p]} \right) = \left( \Proj{\GT_1}{\Role[p]} \right) \sqcup_{\Role[p]} \ldots \sqcup_{\Role[p]} \left( \Proj{\GT_n}{\Role[p]} \right)
\end{align*}
The operation $ \sqcup_{\Role[p]} $ is (similar to \cite{yoshida10}) inductively defined as:
\begin{align*}
	\LT \sqcup_{\Role[p]} \LT &= \LT\\
	\left( \LGet{\Role}{\indexSet_1} \right) \sqcup_{\Role[p]} \left( \LGet{\Role}{\indexSet_2} \right) &= \LGet{\Role}{\left( \indexSet_1 \sqcup_{\Role[p]} \indexSet_2 \right)}\\
	\indexSet \sqcup_{\Role[p]} \emptyset &= \indexSet\\
	\indexSet \sqcup_{\Role[p]} \left( \Set{ \LLab{\Label}{\tilde{\Sort}}{\LT} } \cup \indexSet[J] \right) &=\\
		& \hspace{-4em} \begin{cases}
			\Set{ \LLab{\Label}{\tilde{\Sort}}{\left( \LT' \sqcup_{\Role[p]} \LT \right)} } \cup \left( \left( \indexSet \setminus \Set{ \LLab{\Label}{\tilde{\Sort}}{\LT'} } \right) \sqcup_{\Role[p]} \indexSet[J] \right) & \text{if } \LLab{\Label}{\tilde{\Sort}}{\LT'} \in \indexSet\\
			\Set{ \LLab{\Label}{\tilde{\Sort}}{\LT} } \cup \left( \indexSet \sqcup_{\Role[p]} \indexSet[J] \right) & \text{if } \Label \notin \indexSet
		\end{cases}
\end{align*}
where $ \Label \notin \indexSet $ is short hand for $ \nexists \tilde{\Sort}', \LT'.\, \LLab{\Label}{\tilde{\Sort}'}{\LT'} \in \indexSet $, and is undefined in all other cases.

The projection of the global type \GAsys in Example~\ref{exa:globalTypeAuctioneer} onto the Roles~\Role[A], \Role[B1], and \Role[B2] is given below by the local types \LAsysA, \LAsysB, and \LAsysC, respectively.
\begin{align*}
	\LAsysA &= \Proj{\GAsys}{\Role[A]} = \LGet{B1}{\LLab{\Label[bid]}{\Sort[Int]}{\LSend{B2}{\LLab{\Label}{\Sort[Int]}{\LRec{\TVar}{}}}}}\\
		& \hspace{2em} \big( \LGet{B2}{} \big\{
		\begin{array}[t]{l}
			\LLab{\Label[bid]}{\Sort[Int]}{\LSend{B1}{\LLab{\Label}{\Sort[Int]}{\LGet{B1}{}}}} \{
				\begin{array}[t]{l}
					\LLab{\Label[bid]}{\Sort[Int]}{\LSend{B2}{\LLab{\Label}{\Sort[Int]}{\TVar}}},\\
					\LLab{\Label[no]}{}{\LSend{B2}{\LLab{\Label[s]}{\Sort[Int]}{\LEnd}}} \},
				\end{array}\\
			\LLab{\Label[no]}{}{\LSend{B1}{\LLab{\Label[s]}{\Sort[Int]}{\LEnd}}} \big\} \big)
		\end{array}\\
	\LAsysB &= \Proj{\GAsys}{\Role[B1]} = \LSend{A}{\LLab{\Label[bid]}{\Sort[Int]}{\LRec{\TVar}{}}}\big( \LGet{A}{} \big\{
		\begin{array}[t]{l}
			\LLab{\Label}{\Sort[Int]}{\LSend{A}{}} \{
				\begin{array}[t]{l}
					\LLab{\Label[bid]}{\Sort[Int]}{\TVar},\\
					\LLab{\Label[no]}{}{\LEnd} \},
				\end{array}\\
			\LLab{\Label[s]}{\Sort[Int]}{\LEnd} \big\} \big)
		\end{array}\\
	\LAsysC &= \Proj{\GAsys}{\Role[B2]} = \LGet{A}{\LLab{\Label}{\Sort[Int]}{\LRec{\TVar}{}}}\big( \LSend{A}{} \big\{
		\begin{array}[t]{l}
			\LLab{\Label[bid]}{\Sort[Int]}{\LGet{A}{}} \{
				\begin{array}[t]{l}
					\LLab{\Label}{\Sort[Int]}{\TVar},\\
					\LLab{\Label[s]}{\Sort[Int]}{\LEnd} \},
				\end{array}\\
			\LLab{\Label[no]}{}{\LEnd} \big\} \big)
		\end{array}
\end{align*}

\subsection{Session Calculus}
\label{sec:sessionCalculus}

Global types (and the local types that are derived from them) can be considered as specifications that describe the desired properties of the considered distributed system. To analyse such systems they are implemented in a session calculus. Again we use a version of the session calculus in \cite{synchMPST}, where we unify communication and branching into a single construct as done in \cite{DemangeonHonda12}. Moreover, instead of using different session channels, we annotate the session channel that is unique for each session with the roles (as it was done in \cite{DemangeonHonda12,bocchiChenDemangeonHondaYoshida13}).

\begin{definition}[Processes]
	\label{def:sessionCalculus}
	The \emph{processes} of the session calculus are given by
	\begin{align*}
		\PT &\deffTerms
		\PReq{\Role[2]..\Role[n]}{\PT}
		\sepTerms \PAcc{\Role}{\PT}
		\sepTerms \PSend{\Role_1}{\Role_2}{\Label}{\tilde{\expr}}{\PT}
		\sepTerms \PGet{\Role_2}{\Role_1}{\Set{ \PLab{\Label_i}{\tilde{\Args}_i}{\PT_i} }_{i \in \indexSet}}\\
		&\sepTerms \PCond{\cond}{\PT_1}{\PT_2}
		\sepTerms \PPar{\PT_1}{\PT_2}
		\sepTerms \PEnd
		\sepTerms \PRes{\Chan}{\PT}
		\sepTerms \PRep{\PVar}{\PT} \sepTerms \PVar
	\end{align*}
	where $ \Chan[a] $ are shared channels, $ \Role[2], \ldots, \Role[n], \Role, \Role_1, \Role_2 $ are roles, $ \Chan $ are session channels, $ \Label, \Label_i $ are labels, $ \tilde{\expr} $ are sequences of expressions to calculate values, $ \tilde{\Args}_i $ are sequences of variables, $ \indexSet $ are non-empty finite index sets, \cond are boolean conditions, and \PVar are process variables.
\end{definition}

A process initialises a session with $ \PReq{\Role[2]..\Role[n]}{\PT} $ inviting via the shared channel \Chan[a] other processes to play the roles $ \Role[2], \ldots, \Role[n] $ in a session \Chan, \ie with the session channels \Chan. Then the inviting process itself becomes \Role[1] in this session and proceeds after transmitting the invitations as \PT.
Processes can accept such an invitation to play role \Role in a session \Chan that they receive on a shared channel \Chan[a] with $ \PAcc{\Role}{\PT} $ and then proceed as \PT.
Within session \Chan role $ \Role_1 $ can transmit to role $ \Role_2 $ with $ \PSend{\Role_1}{\Role_2}{\Label}{\tilde{\expr}}{\PT} $ a label \Label and a sequence of values $ \tilde{\expr} $ and then proceed as \PT or
$ \Role_1 $ can receive from $ \Role_2 $ with $ \PGet{\Role_1}{\Role_2}{\Set{ \PLab{\Label_i}{\tilde{\Args}_i}{\PT_i} }_{i \in \indexSet}} $ one of the labels $ \Label_i $ from an indexed set of options together with a sequence of values to substitute $ \tilde{\Args}_i $ in the continuation $ \PT_i $.
The conditional $ \PCond{\cond}{\PT_1}{\PT_2} $ allows a process to proceed as $ \PT_1 $ if \cond holds or else as $ \PT_2 $.
Since we want to use the model checker \textsf{Spin} later, we restrict expressions \expr and conditions \cond to functions that are known by \textsf{Promela}, the input language of \textsf{Spin}.
\textsf{Promela} captures a wide range of functions such as basic logical and arithmetical operators.
With $ \PPar{\PT_1}{\PT_2} $ the processes $ \PT_1 $ and $ \PT_2 $ are composed in parallel.
Successful termination is denoted by $ \PEnd $.
With $ \PRes{\Chan}{\PT} $ we restrict the scope of the session channel \Chan to \PT.
With $ \PRep{\PVar}{\PT} $ we define recursion using process variables $ \PVar $.

We usually omit the curly brackets in branching with only one alternative, \ie abbreviate $ \GCom{\Role_1}{\Role_2}{\Set{ \GLab{\Label}{\tilde{\Sort}}{\GT} }} $ by $ \GCom{\Role_1}{\Role_2}{\GLab{\Label}{\tilde{\Sort}}{\GT}} $ and $ \PGet{\Role_1}{\Role_2}{\Set{ \PLab{\Label}{\tilde{\Args}}{\PT} }} $ by $ \PGet{\Role_1}{\Role_2}{\PLab{\Label}{\tilde{\Args}}{\PT}} $.
We also often omit trailing $ \PEnd $.
Throughout the paper we use '$.$' to denote sequential composition, where the part before the '$.$' is called \emph{prefix} and the sub-term(s) in the scope of the '$.$' are \emph{guarded} by this prefix.
Moreover, conditionals guard both of their sub-terms.
Similarly, we use round brackets to denote \emph{binders}, where the variables (for names, types, or processes) within the brackets are \emph{bound} in the following sub-term.
A name is free if it is not bounded.
Let $ \Names{M} $ denote the set of names and $ \FreeNames{M} $ denote the set of free names in $ M $, where $ M $ is a type or a process.
We assume that all process variables in processes and type variables in types are bounded and guarded, where process variables have to be guarded by communication prefixes.

A substitution $ \Set{ \Subst{\Args[y]_1}{\Args_1}, \ldots \Subst{\Args[y]_n}{\Args_n} } = \Set{ \Subst{\tilde{\Args[y]}}{\tilde{\Args}} } $ is a finite mapping from names to names, where the names in $ \tilde{\Args} $ are pairwise distinct. The application of a substitution on a term $ \PT\Set{ \Subst{\tilde{\Args[y]}}{\tilde{\Args}} } $ is defined as the result of simultaneously replacing all free occurrences of $ \Args_i $ by $ \Args[y]_i $, possibly applying alpha-conversion to avoid capture or name clashes. For all names $ n \notin \tilde{\Args} $ the substitution behaves as the identity mapping.
We naturally extend substitution of names to the substitution of process variables by terms.

We use structural congruence ($ \equiv $) to abstract from syntactically different but semantically similar processes, where $ \equiv $ is the least congruence that satisfies alpha-conversion ($ \equiv_{\alpha} $) and the rules:
\vspace{-0.5em}
\[ \begin{array}{c}
	\PPar{\PT}{\PEnd} \equiv \PT
	\hspace{2em} \PPar{\PT_1}{\PT_2} \equiv \PPar{\PT_2}{\PT_1}
	\hspace{2em} \PPar{\PT_1}{\left( \PPar{\PT_2}{\PT_3} \right)} \equiv \PPar{\left( \PPar{\PT_1}{\PT_2} \right)}{\PT_3}\\
	\PRep{\PVar}{\PT} \equiv \PT\Set{ \Subst{\PRep{\PVar}{\PT}}{\PVar} }
	\hspace{2em} \PRes{\Chan}{\PRes{\Chan'}{\PT}} \equiv \PRes{\Chan'}{\PRes{\Chan}{\PT}}
	\hspace{2em} \PRes{\Chan}{\PEnd} \equiv \PEnd\\
	\PRes{\Chan}{\left( \PPar{\PT_1}{\PT_2} \right)} \equiv \PPar{\PT_1}{\PRes{\Chan}{\PT_2}} \quad \text{if } \Chan \notin \FreeNames{\PT_1}
\end{array} \]

\begin{figure}[t]
	\[ \begin{array}{c}
		\left( \textsf{Link} \right) \dfrac{}{\PPar{\PReq{\Role[2]..\Role[n]}{\PT_1}}{\PPar{\PAcc{\Role[2]}{\PT_2}}{\PPar{\ldots}{\PAcc{\Role[n]}{\PT_n}}}} \Step \PRes{\Chan}{\left( \PPar{\PT_1}{\PPar{\PT_2}{\PPar{\ldots}{\PT_n}}}} \right)} \vspace{0.75em}\\
		\left( \textsf{Com} \right) \dfrac{j \in \indexSet}{\PPar{\PSend{\Role_1}{\Role_2}{\Label_j}{\tilde{\expr}}{\PT}}{\PGet{\Role_2}{\Role_1}{\Set{ \PLab{\Label_i}{\tilde{\Args}_i}{\PT_i} }_{i \in \indexSet}}} \Step \PPar{\PT}{\left( \PT_j\Set{ \Subst{\tilde{\expr}}{\tilde{\Args}_j} } \right)}} \vspace{0.75em}\\
		\left( \textsf{If-T} \right) \dfrac{\cond}{\PCond{\cond}{\PT_1}{\PT_2} \Step \PT_1}
		\hspace{1.5em}
		\left( \textsf{If-F} \right) \dfrac{\neg\cond}{\PCond{\cond}{\PT_1}{\PT_2} \Step \PT_2} \vspace{0.75em}\\
		\left( \textsf{Par} \right) \dfrac{\PT_1 \Step \PT_1'}{\PPar{\PT_1}{\PT_2} \Step \PPar{\PT_1'}{\PT_2}}
		\hspace{1.5em}
		\left( \textsf{Res} \right) \dfrac{\PT \Step \PT'}{\PRes{\Chan}{\PT} \Step \PRes{\Chan}{\PT'}} \vspace{0.75em}\\
		\left( \textsf{Struc} \right) \dfrac{\PT_1 \equiv \PT_2 \quad \PT_2 \Step \PT_2' \quad \PT_2' \equiv \PT_1'}{\PT_1 \Step \PT_1'}
	\end{array} \]
	\vspace{-1em}
	\caption{Reduction Semantics of the Session Calculus.}
	\label{fig:reductionSemantics}
\end{figure}

The reduction semantics of the session calculus is given by the rules in Figure~\ref{fig:reductionSemantics}.
The Rule~\textsf{Link} initialises a session \Chan on the roles $ \Role[1], \ldots, \Role[n] $, where \Role[1] requested the session on channel \Chan[a] and each \Role[i] participates in the session as $ \PT_i $.
Communication within a session \Chan is described by Rule~\textsf{Com}, where in the case of matching roles and labels the continuations of sender and receiver are unguarded and the variables $ \tilde{\Args} $ are replaced by the values $ \tilde{\expr} $ in the receiver.
The Rules~\textsf{If-T} and \textsf{If-F} reduce conditionals as expected.
The remaining rules allow for steps in various contexts and are standard.

In contrast to the standard \piCal (as \eg in \cite{milnerParrowWalker92}) the communication prefixes of the session calculus mention an explicit acting role next to the channel, regardless whether the prefix is used to initialise a session or to transmit a message within a session.

\begin{definition}[Actor]
	A process $ \PT $ has an \emph{actor on $ \Actor{c}{r_1} $} if $ \PT $ has an unguarded subterm of the form $ \PReq[\Chan[c]]{\Role[2]..\Role[n]}{\PT} $ with $ \Role_1 = \Role[1] $ or $ \PAcc[\Chan[c]]{\Role_1}{\PT} $ (for session invitations) or an unguarded subterm of the form $ \PSend[\Chan[c]]{\Role_1}{\Role_2}{\Label}{\tilde{\expr}}{\PT} $ or $ \PGet[\Chan[c]]{\Role_1}{\Role_2}{\Set{ \PLab{\Label_i}{\tilde{\Args}_i}{\PT_i} }_{i \in \indexSet}} $ (for communication).
Let $ \Actors{\PT} $ be set of actors in \PT.
\end{definition}

\noindent
If unambiguous, \ie if there is only one session, we omit the session channel and abbreviate actors by their role.

As described in \cite{synchMPST,hondaYoshidaCarbone08,hondaYoshidaCarbone16}, global types are projected onto to their roles into so-called local types (compare to Section~\ref{sec:types}) that are then used to build type environments.
Intuitively, a process is \emph{well-typed \wrt a global type} if it behaves as specified in the type.
Therefore, the process is compared in a static analysis with type environments that are derived from the global type.
Similarly, a system that implements more than a single session is \emph{well-typed \wrt $ \Set{ \left( \GT_i, \Chan_i \right) }_{i \in \indexSet} $} if each session $ \Chan_i $ behaves as specified in the global type $ \GT_i $ and if the interleaving of different sessions does not introduce deadlocks (compare to \cite{BettiniAtall08}).
We formally define well-typed processes for the above variant of \MPST in Section~\ref{sec:wellTypedProcesses} and show basic properties---in particular subject reduction, linearity, and error-freedom---in Section~\ref{sec:basicProperties}.
Moreover, we rely on the observation that in well-typed processes different actors of the same session are composed in parallel, whereas all actions of the same actor are composed sequentially.

\begin{lemma}[Actors are Sequential]
	\label{lem:actorsAreSequential}
	If \PT is well-typed then all actions of the same actor are composed sequentially in \PT.
\end{lemma}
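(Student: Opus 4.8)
The plan is to establish the lemma by induction on the derivation of well-typedness of $ \PT $ (as defined in Section~\ref{sec:wellTypedProcesses}), proving the operational form of the claim: for every actor $ \Actor{c}{r} $ there is no subterm $ \PPar{Q_1}{Q_2} $ of $ \PT $ such that both $ Q_1 $ and $ Q_2 $ contain an (unguarded) action of $ \Actor{c}{r} $. Equivalently, all prefixes belonging to one actor lie on a single branch and are therefore composed by prefixing, \ie sequentially. Two ingredients drive the argument. First, by Definition~\ref{def:localTypes} local types are purely sequential---there is no parallel constructor among the local types---so the behaviour that a fixed role $ r $ exhibits within a session $ c $ is, as prescribed by its local type, a single sequential thread. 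Second, the session typing is linear: the session-channel part of the type environment records at most one local type per endpoint $ \Actor{c}{r} $, and this endpoint is treated as a linear resource.

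The induction proceeds over the process constructors. For the prefix cases (send, receive, session request, and session accept) the action is composed by prefixing with its continuation, so the actions of the actor are sequential by construction and the claim follows from the induction hypothesis applied to the continuation. For the conditional $ \PCond{\cond}{\PT_1}{\PT_2} $ both branches are guarded, so no new unguarded actor is introduced at this level and the induction hypothesis applies to each branch. The cases for restriction $ \PRes{\Chan}{\PT'} $ and recursion $ \PRep{\PVar}{\PT'} $ are handled directly by the induction hypothesis; note that recursion variables are guarded by communication prefixes (by our global assumption on processes), so unfolding only produces actions that are already guarded and hence remain on the sequential thread of their guarding prefix.

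The decisive case, and the one I expect to be the main obstacle, is parallel composition $ \PPar{\PT_1}{\PT_2} $. Here I would appeal to the splitting of the type environment performed by the typing rule for parallel composition: its linear (session) part is partitioned disjointly between $ \PT_1 $ and $ \PT_2 $, so the endpoint $ \Actor{c}{r} $ carries a non-$ \LEnd $ local type in the environment of at most one of the two components. The supporting fact I need is that an unguarded action on $ \Actor{c}{r} $ forces the environment of that subterm to carry a corresponding non-$ \LEnd $ local type $ \LT $ for $ \Actor{c}{r} $; granting this, $ \PT_1 $ and $ \PT_2 $ cannot both contain an action of the same actor, which rules out the only way two actions of one actor could be composed non-sequentially. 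The delicate point is to make this correspondence between syntactic actors and entries of the type environment precise---and to verify that the partition in the parallel rule is indeed disjoint on the relevant endpoint, covering uniformly both session communication and the initialisation prefixes that introduce $ c $. This is exactly the linearity property of the type system, which must be read off from the typing rules of Section~\ref{sec:wellTypedProcesses}.
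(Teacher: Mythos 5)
Your proposal is correct and takes essentially the same route as the paper's proof: both arguments rest on the sequentiality of local types, the linearity of the session environment (at most one assignment per actor, split disjointly by $\otimes$ in the parallel rule), and the fact that action prefixes require a corresponding assignment for their actor in the environment. The paper states this compactly and non-inductively—invoking coherence to obtain the unique initial assignment ($\LInv{a}{r}$ or $\LLoc{r}{\LT}$) per role—whereas you organise the same ingredients as an induction on the typing derivation with parallel composition as the decisive case, but the substance coincides.
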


\subsection{Well-Typed Processes}
\label{sec:wellTypedProcesses}

Processes are combined with type environments into \emph{type judgements}.
A judgement is of the form $ \Gamma \vdash \PT \triangleright \Delta $, where $ \Gamma $ is a global type environment connecting shared channels with their global types and values with their sorts, \PT is a process, and $ \Delta $ is a session environment containing the projections of global types onto their roles and some additional control information.
Type environments are sets of assignments, but we usually omit the curly brackets and write $ \Gamma, A $ (or $ \Delta, A $) for the union of $ \Gamma $ (or $ \Delta $) and $ \Set{ A } $.

\begin{definition}[Type Environments]
	The \emph{global type environments} and the \emph{session environments} are given by
	\begin{align*}
		\Gamma & \deffTerms \emptyset \sepTerms \Gamma, \GGlob{a}{G} \sepTerms \Gamma, \GGlobS{a}{G}{s} \sepTerms \Gamma, \GVal{\Args}{\Sort} \sepTerms \Gamma, \GPVar{\PVar}{\Delta}\\
		\Delta & \deffTerms \emptyset \sepTerms \Delta, \LInv{a}{r} \sepTerms \Delta, \LLoc{r}{\LT}
	\end{align*}
	where \Chan[a] are shared channels, \GT are global types, \Chan are session channels, \Args are names, \Sort are sorts, \PVar are process variables, \Role are roles, and \LT are local types.
\end{definition}

Assignments \GGlob{a}{G} connect a global type of a session with the shared channel that is used to initialise this session.
The type system ensures that shared channels are used exactly once.
Therefore, we add the session channel \Chan, \ie rewrite \GGlob{a}{G} into \GGlobS{a}{G}{s}, after the shared channel was used and require that $ \Gamma, \GGlob{a}{G} $ implies $ \nexists \GT', \Chan \logdot \GGlob{a}{G'} \in \Gamma \lor \GGlobS{a}{G'}{s} \in \Gamma $.
Similarly, we assume that $ \Gamma, \GGlobS{a}{G}{s} $ implies $ \nexists \Chan[a]', \GT', \Chan' \logdot \GGlob{a}{G'} \in \Gamma \lor \GGlobS{a}{G'}{s'} \in \Gamma \lor \GGlobS{a'}{G'}{s} \in \Gamma $.
Assignments \GVal{\Args}{\Sort} state that variable \Args is of sort \Sort, which implies that \Args is no session channel.
We assume that the sort of variables is unique, \ie that $ \Gamma, \GVal{\Args}{\Sort} $ implies $ \nexists \Sort'\logdot \GVal{\Args}{\Sort'} \in \Gamma $.
We write $ \Gamma \Vdash \GVal{\expr}{\Sort} $ if for all names $ v $ in \expr there is some $ \GVal{v}{\Sort_v} $ in $ \Gamma $ and if with these sorts for its variables \expr is of sort \Sort for all possible evaluations.
We abbreviate $ \Gamma, \GVal{\expr_1}{\Sort_1}, \ldots, \GVal{\expr_n}{\Sort_n} $ by $ \Gamma, \GVal{\tilde{\expr}}{\tilde{\Sort}} $ and $ \Gamma \Vdash \GVal{\expr_1}{\Sort_1}, \ldots, \Gamma \Vdash \GVal{\expr_n}{\Sort_n} $ by $ \Gamma \Vdash \GVal{\tilde{\expr}}{\tilde{\Sort}} $.
Assignments \GPVar{\PVar}{\Delta} save the current state of a session environment connected to a process variable \PVar, in order to check recursive processes.
We assume that $ \Gamma, \GPVar{\PVar}{\Delta} $ implies $ \nexists \Delta'\logdot \GPVar{\PVar}{\Delta'} \in \Gamma $.

The obligation \LInv{a}{r} tells us that the considered process needs to invite role~\Role via the shared channel \Chan[a].
We assume that $ \Delta, \LInv{a}{r} $ implies $ \LInv{a}{r} \notin \Delta $.
Assignments \LLoc{r}{\LT} connect the local type \LT to the role~\Role in the session~\Chan.
We assume that $ \Delta, \LLoc{r}{\LT} $ implies that $ \nexists \LT' \logdot \LLoc{r}{\LT'} \in \Delta $.
Let $ \Delta_1 \otimes \Delta_2 = \Delta_1 \cup \Delta_2 $ if $ \Delta_1 \cap \Delta_2 = \emptyset $ and undefined else.

\begin{figure}[tp]
	\[ \begin{array}{c}
		\left( \textsf{Req} \right) \dfrac{\Roles{\GT} = \Set{ 1, \ldots, n } \quad \Gamma, \GGlobS{a}{G}{s} \vdash \PT \triangleright \Delta, \LLoc{1}{\Proj{\GT}{\Role[1]}}}{\Gamma, \GGlob{a}{G} \vdash \PReq{\Role[2]..\Role[n]}{\PT} \triangleright \Delta, \LInv{a}{1}}
		\vspace{0.75em}\\
		\left( \textsf{Acc} \right) \dfrac{\Role \in \Roles{\GT} \quad \Gamma, \GGlobS{a}{G}{s} \vdash \PT \triangleright \Delta, \LLoc{r}{\Proj{\GT}{\Role}}}{\Gamma, \GGlob{a}{G} \vdash \PAcc{\Role}{\PT} \triangleright \Delta, \LInv{a}{r}}
		\vspace{0.75em}\\
		\left( \textsf{Send} \right) \dfrac{j \in \indexSet \quad \Gamma \Vdash \GVal{\tilde{\expr}}{\tilde{\Sort}_j} \quad \Gamma \vdash \PT \triangleright \Delta, \LLoc{\Role_1}{\LT_j}}{\Gamma \vdash \PSend{\Role_1}{\Role_2}{\Label_j}{\tilde{\expr}}{\PT} \triangleright \Delta, \LLoc{\Role_1}{\LSend{\Role_2}{\Set{ \LLab{\Label_i}{\tilde{\Sort}_i}{\LT_i} }_{i \in \indexSet}}}}
		\vspace{0.75em}\\
		\left( \textsf{Get} \right) \dfrac{\indexSet \subseteq \indexSet[J] \quad \forall i \in \indexSet\logdot \Gamma, \GVal{\tilde{\Args}_i}{\tilde{\Sort}_i} \vdash \PT_i \triangleright \Delta, \LLoc{\Role_2}{\LT_i}}{\Gamma \vdash \PGet{\Role_2}{\Role_1}{\Set{ \PLab{\Label_j}{\tilde{\Args}_j}{\PT_j} }_{j \in \indexSet[J]}} \triangleright \Delta, \LLoc{\Role_2}{\LGet{\Role_1}{\Set{ \LLab{\Label_i}{\tilde{\Sort}_i}{\LT_i} }_{i \in \indexSet}}}}
		\hspace{2em}
		\left( \textsf{End} \right) \dfrac{}{\Gamma \vdash \PEnd \triangleright \emptyset}
		\vspace{0.75em}\\
		\left( \textsf{Cond} \right) \dfrac{\Gamma \vdash \PT_1 \triangleright \Delta \quad \Gamma \vdash \PT_2 \triangleright \Delta}{\Gamma \vdash \PCond{\cond}{\PT_1}{\PT_2} \triangleright \Delta}
		\hspace{2em}
		\left( \textsf{Par} \right) \dfrac{\Gamma \vdash \PT_1 \triangleright \Delta_1 \quad \Gamma \vdash \PT_2 \triangleright \Delta_2}{\Gamma \vdash \PPar{\PT_1}{\PT_2} \triangleright \Delta_1 \otimes \Delta_2}
		\vspace{0.75em}\\
		\left( \textsf{Res} \right) \dfrac{\Delta, \LLoc{1}{\Proj{\GT}{\Role[1]}}, \ldots, \LLoc{n}{\Proj{\GT}{\Role[n]}} \Step^* \Delta' \quad \Roles{\GT} = \Set{ \Role[1], \ldots, \Role[n] } \quad \Gamma, \GGlobS{a}{G}{s} \vdash \PT \triangleright \Delta'}{\Gamma, \GGlob{a}{G} \vdash \PRes{\Chan}{\PT} \triangleright \Delta, \LInv{a}{1}, \ldots, \LInv{a}{n}}
		\vspace{0.75em}\\
		\left( \textsf{Rec} \right) \dfrac{\Gamma, \GPVar{\PVar}{\Delta} \vdash \PT \triangleright \Delta}{\Gamma \vdash \PRep{\PVar}{\PT} \triangleright \Delta}
		\hspace{2em}
		\left( \textsf{Var} \right) \dfrac{}{\Gamma, \GPVar{\PVar}{\Delta} \vdash \PVar \triangleright \Delta}
	\end{array} \]
	\vspace{-1em}
	\caption{Typing Rules.}
	\label{fig:typingRules}
\end{figure}

Type judgements are derived from the typing rules in Figure~\ref{fig:typingRules}, where we equate within type judgements processes modulo alpha conversion, local types modulo the unfolding of their recursion by the rule $ \LRec{\TVar}{\LT} = \LT\Set{ \Subst{\LRec{\TVar}{\LT}}{\TVar} } $, and session environments modulo terminated types by the rule $ \Delta, \LLoc{\Role}{\LEnd} = \Delta $.

For each session invitation $ \PReq{\Role[2]..\Role[n]}{\PT} $, Rule~\textsf{Req} requires \GGlob{a}{G} in the global environment and rewrites it into \GGlobS{a}{G}{s} to mark that \Chan[a] was used to invite the session \Chan.
Then it checks whether the global type has the invited number of roles, consumes the obligation \LInv{a}{1} from the session environment and adds \LLoc{1}{\Proj{\GT}{\Role[1]}}, \ie requires that the continuation \PT behaves as specified by the projection of \GT to role~\Role[1] in the session environment.
Rule~\textsf{Acc} is similar for a process that accepts to participate as role~\Role[r] in the invited session.

Rule~\textsf{Send} checks whether the process sends if its local type requires this, the roles of the process and the local type match, the transmitted label is one of the labels specified in the local type by $ j \in \indexSet $, the transmitted expressions are of the required sorts by $ \Gamma \Vdash \GVal{\tilde{\expr}}{\tilde{\Sort}_j} $, and the continuation \PT behaves as specified by $ \LT_j $.
Rule~\textsf{Get} checks whether the process receives if its local type requires this, the roles of the process and the local type match, and each branch $ \PT_i $ with the variables $ \GVal{\tilde{\Args}_i}{\tilde{\Sort}_i} $ behaves as specified by $ \LT_i $.
Note that in contrast to \eg \cite{synchMPST,DemangeonHonda12,bocchiChenDemangeonHondaYoshida13,hondaYoshidaCarbone08,hondaYoshidaCarbone16} we allow that receivers implement unnecessary branches, to allow types to follow the reductions of the system and to deal with branches already ruled out by a former step.
However, since the sender is checked as well, the type system ensures that only branches that are specified by the type can happen.

Rule~\textsf{End} states that type judgements on successful termination are valid \wrt arbitrary global environments but only the empty session environment.
Rule~\textsf{Cond} checks whether both branches of a conditional behave similarly.
Instead, to check a parallel composition in processes, Rule~\textsf{Par} requires that it is possible to split the session environment into disjoint parts such that each parallel branch behaves as specified by one part of the session environment.

Rule~\textsf{Res} requires that there is an unused shared channel connected to a global type and that the current session environment extended by the projections of this global type can evolve such that \PT behaves as specified by the remainder of this extended session environment, where the relation $ \Step $ on session environments is given by the Rules~\textsf{Com'} and \textsf{Cut} below.
It is necessary to analyse systems that already entered a session.
To check recursion, the Rules~\textsf{Rec} and \textsf{Var} check whether the body of a recursive process reaches the same session environment after one iteration.

Coherence is used to describe the fact that a system implements all roles of the global types that belong the the considered sessions.

\begin{definition}[Coherence]
	A session environment $ \Delta $ is \emph{coherent \wrt a set of pairs of global types and pairwise distinct names $ \Set{ \left( \GT_i, \Chan[n]_i \right) }_{i \in \indexSet} $}, if for all session channels \Chan in $ \Delta $ there exists $ j \in \indexSet $ such that $ \Chan[n]_j = \Chan $ and $ \Delta \cap \Set{ \LLoc{\Role}{\LT} \mid \Role \text{ is a role and } \LT \text{ is a local type} } = \Set{ \LLoc{\Role}{\Proj{\GT_j}{\Role}} \mid \Role \in \Roles{\GT_j} }$ and for all its shared channels \Chan[a] there exists $ j \in \indexSet $ such that $ \Chan[n]_j = \Chan[a] $ and $ \Set{ \Role \mid \LInv{a}{r} \in \Delta } = \Roles{\GT_j} $.\\
	Moreover, $ \Delta $ is \emph{coherent \wrt a global type \GT} if $ \Delta $ is coherent \wrt $ \Set{ \left( \GT, \Chan[n] \right) } $ for some name $ \Chan[n] $ and $ \Delta $ is \emph{coherent} if $ \Delta $ is coherent \wrt some $ \Set{ \left( \GT_i, \Chan[n]_i \right) }_{i \in \indexSet} $.
\end{definition}

We map the reduction of communications in Rule~\textsf{Com} of Figure~\ref{fig:reductionSemantics} on the rule
\begin{align*}
	\left( \textsf{Com'} \right) \dfrac{j \in \indexSet}{
		\begin{array}{l}
			\Delta, \LLoc{\Role_1}{\LSend{\Role_2}{\Set{ \LLab{\Label_i}{\tilde{\Sort}_i}{\LT_i} }_{i \in \indexSet}}}, \LLoc{\Role_2}{\LGet{\Role_1}{\Set{ \LLab{\Label_i}{\tilde{\Sort}_i}{\LT_i'} }_{i \in \indexSet}}}\\
			\Step \Delta, \LLoc{\Role_1}{\LT_j}, \LLoc{\Role_2}{\LT_j'}
		\end{array}}
\end{align*}
and add a rule to remove superfluous branches of receivers
\begin{align*}
	\left( \textsf{Cut} \right) \dfrac{\emptyset \subset \indexSet[J] \subseteq \indexSet}{\Delta, \LLoc{\Role_2}{\LGet{\Role_1}{\Set{ \LLab{\Label_i}{\tilde{\Sort}_i}{\LT_i'} }_{i \in \indexSet}}} \Step \Delta, \LLoc{\Role_2}{\LGet{\Role_1}{\Set{ \LLab{\Label_i}{\tilde{\Sort}_i}{\LT_i'} }_{i \in \indexSet[J]}}}}
\end{align*}
such that session environments can follow the evolution of processes.

We call a process \emph{role-distributed} if it composes different actors of the same session in parallel (for all sessions and all actors of a session).

\begin{definition}[Well-Typed Processes, Single Session]
	Let \PT be a process, \GT a global type, $ \Gamma $ a global type environment, and $ \Delta $ a session environment.
	For processes with a single session we have:
	\begin{itemize}
		\item \PT is \emph{well-typed \wrt $ \Gamma $ and $ \Delta $} if \PT is role-distributed, $ \Delta $ is coherent, and $ \Gamma \vdash \PT \triangleright \Delta $.
		\item \PT is \emph{well-typed} if there are $ \Gamma, \Delta $ such that \PT is well-typed \wrt $ \Gamma, \Delta $.
		\item \PT is \emph{well-typed \wrt \GT} if there are $ \Gamma, \Delta $ such that \PT is role-distributed, \GT is the only global type in $ \Gamma $, $ \Delta $ is coherent \wrt \GT, and $ \Gamma \vdash \PT \triangleright \Delta $.
	\end{itemize}
\end{definition}

The definition of well-typed processes is more difficult for several interleaved sessions.
As described in \cite{BettiniAtall08}, we have to ensure that actions of different sessions do not cause deadlocks by cyclic dependencies.
Therefore, \cite{BettiniAtall08} introduce an \emph{interaction} type system for global progress in dynamically interleaved multiparty sessions.
The interaction type system introduced in \cite{BettiniAtall08} considers an asynchronous variant of \MPST there senders release their messages onto message queues from which receivers can read in a subsequent step.
To check for global progress in dynamically interleaved multiparty sessions, the interaction type system collects dependencies between interactions of different services, \ie different sessions and their associated shared channels.
Since \cite{BettiniAtall08} considers asynchronous communication, they collect the dependencies of a receiver to the interactions with other services in its continuation.
To obtain an interaction type system for the above synchronous \MPST variant, we have to consider dependencies also for senders, \ie treat senders in the same way as receivers.
Moreover, we have to extend the collection of dependencies also to session invitations, \ie session requests $ \PReq{\Role[2]..\Role[n]}{\PT} $ and their corresponding receivers $ \PAcc{\Role}{\PT} $ have to produce the same kind of dependencies towards interactions of other services as the communication prefixes within the respective session.
A process $ \PT $ is \emph{globally progressing} if it can be typed in the interaction type system.

\begin{definition}[Well-Typed Processes, Interleaved Sessions]
	Let \PT be a process without name clashes on session channels, $ \Set{ \left( \GT_i, \Chan_i \right) }_{i \in \indexSet} $ a set of pairs of global types and pairwise distinct session channels, $ \Gamma $ a global type environment, and $ \Delta $ a session environment.
	For processes with interleaved sessions we have:
	\begin{itemize}
		\item \PT is \emph{well-typed \wrt $ \Gamma $ and $ \Delta $} if \PT is role-distributed, $ \Delta $ is coherent, $ \Gamma \vdash \PT \triangleright \Delta $, and \PT is globally progressing.
		\item \PT is \emph{well-typed} if there are $ \Gamma, \Delta $ such that \PT is well-typed \wrt $ \Gamma, \Delta $.
		\item \PT is \emph{well-typed \wrt $ \Set{ \left( \GT_i, \Chan_i \right) }_{i \in \indexSet} $} if there are $ \Gamma, \Delta $ such that \PT is role-distributed, $ \Set{ \GT_i }_{i \in \indexSet } $ are the global types in $ \Gamma $, $ \Delta $ is coherent \wrt the types $ \Set{ \left( \GT_i, \Chan[n]_i \right) }_{i \in \indexSet} $ for some $ \Chan[n]_i $, $ \Gamma \vdash \PT \triangleright \Delta $, \PT is globally progressing, and for all $ i \in \indexSet $ either $ \Chan[n]_i = \Chan_i $ or $ \Gamma \vdash \PT \triangleright \Delta $ connects the shared channel $ \Chan[n]_i $ with $ \Chan_i $, \ie $ \GGlob{n_i}{G} \in \Gamma $ is transferred into \GGlobS{n_i}{G}{s_i} by one of the Rules~\textnormal{\textsf{Req}}, \textnormal{\textsf{Acc}}, or \textnormal{\textsf{Res}}.
	\end{itemize}
\end{definition}

\subsection{Basic Properties}
\label{sec:basicProperties}

Type judgements are preserved modulo structural congruence.

\begin{lemma}[Structural Congruence]
	\label{lem:structuralCongruence}
	If $ \Gamma \vdash \PT \triangleright \Delta $ and $ \PT \equiv \PT' $ then $ \Gamma \vdash \PT' \triangleright \Delta $.
\end{lemma}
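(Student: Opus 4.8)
The plan is to proceed by rule induction on the derivation of $\PT \equiv \PT'$, strengthening the statement to the biconditional $\Gamma \vdash \PT \triangleright \Delta$ if and only if $\Gamma \vdash \PT' \triangleright \Delta$; this makes the symmetry rule immediate and is preserved by transitivity (compose the two induction hypotheses) and by reflexivity (trivial). Since $\equiv$ is the least congruence containing $\equiv_{\alpha}$ and the listed axioms, the remaining cases are the base axioms and the congruence rules closing $\equiv$ under each process constructor. The $\equiv_{\alpha}$ case is discharged by the stated convention that type judgements equate processes modulo alpha conversion. The congruence cases are uniform: for a context such as $\PPar{\PT_1}{\PT_2} \equiv \PPar{\PT_1'}{\PT_2}$ obtained from a shorter derivation $\PT_1 \equiv \PT_1'$, I would invert the final typing rule to expose the judgement for the changed subterm, apply the induction hypothesis to it, and rebuild the derivation with the same rule; all side conditions, environment splits, and residual assignments are untouched, so $\Gamma$ and $\Delta$ are preserved.

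For the base axioms governing parallel composition, typing preservation reduces to algebraic properties of $\otimes$. Inverting \textsf{Par} on $\PPar{\PT}{\PEnd}$ gives $\Gamma \vdash \PT \triangleright \Delta_1$ and $\Gamma \vdash \PEnd \triangleright \Delta_2$, and \textsf{End} forces $\Delta_2 = \emptyset$, so the composite environment is $\Delta_1 \otimes \emptyset = \Delta_1$, yielding the unit law (the converse direction recombines via \textsf{End} and \textsf{Par}). Commutativity and associativity of $\PPar{\cdot}{\cdot}$ follow directly from the corresponding properties of $\otimes$, which is disjoint union. For the restriction axioms, inverting \textsf{Res} shows that $\PRes{\Chan}{\PRes{\Chan'}{\PT}}$ reassociates two applications of \textsf{Res} for distinct session channels whose evolution premises and coherence obligations do not interfere, so their order is irrelevant, while $\PRes{\Chan}{\PEnd} \equiv \PEnd$ holds because the empty body forces the empty session environment, and the invitation obligations $\LInv{a}{r}$ introduced by \textsf{Res} must then already be absent.

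The two delicate cases, which I expect to be the main obstacle, are recursion unfolding and scope extrusion. For $\PRep{\PVar}{\PT} \equiv \PT\Set{ \Subst{\PRep{\PVar}{\PT}}{\PVar} }$ I would first establish an auxiliary substitution lemma for process variables: if $\Gamma, \GPVar{\PVar}{\Delta} \vdash \PT \triangleright \Delta$ and $\Gamma \vdash Q \triangleright \Delta$ with $Q = \PRep{\PVar}{\PT}$, then $\Gamma \vdash \PT\Set{ \Subst{Q}{\PVar} } \triangleright \Delta$, proved by induction on the derivation of the first judgement and plugging in the derivation of $Q$ at each \textsf{Var} leaf, where the recorded environment matches by construction of \textsf{Rec}/\textsf{Var}; the unfolding axiom then follows by inverting \textsf{Rec}. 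For $\PRes{\Chan}{(\PPar{\PT_1}{\PT_2})} \equiv \PPar{\PT_1}{\PRes{\Chan}{\PT_2}}$ under $\Chan \notin \FreeNames{\PT_1}$, I would invert \textsf{Res} and \textsf{Par} and argue that, since $\Chan$ does not occur free in $\PT_1$, all local-type assignments $\LLoc{r}{\LT}$ and invitation obligations $\LInv{a}{r}$ for the restricted session lie in the part of the $\otimes$-split session environment typing $\PT_2$; the restriction can therefore be pushed inside $\PT_2$ without affecting the judgement for $\PT_1$, and the split recombines to the same $\Delta$. The care needed here is aligning the evolution premise $\Delta_0 \Step^* \Delta'$ of \textsf{Res} with the environment split of \textsf{Par}, so that the obligations and local types of the extruded session are attributed consistently on both sides.
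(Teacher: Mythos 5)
Your proof follows the same strategy as the paper's: induction on the rules used to derive $\PT \equiv \PT'$, inverting the typing derivation in each case and rebuilding it, relying on commutativity and associativity of $\otimes$, on the fact that $\PEnd$ is typable exactly under the empty session environment, and on the conventions equating processes modulo alpha conversion and local types modulo unfolding of recursion. The paper compresses all of this into a single paragraph, so your additional detail (the process-variable substitution lemma for the unfolding axiom, the $\otimes$-split bookkeeping for scope extrusion) merely spells out what the paper's proof leaves implicit.
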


\begin{proof}
	The proof is by induction on the rules of structural congruence that are used to obtain $ \PT \equiv \PT' $.
	In each case we derive from the structure of \PT and Figure~\ref{fig:typingRules} information about the proof of $ \Gamma \vdash \PT \triangleright \Delta $ and use them to show $ \Gamma \vdash \PT' \triangleright \Delta $.
	Thereby, we rely on the commutativity and associativity of $ \otimes $, the fact that a judgement for the process \PEnd can be derived if and only if the considered session environment is empty, and that we equate in judgements local types modulo the unfolding of recursion.
\end{proof}

Type judgements are preserved modulo the substitution of values by values of the same sort.

\begin{lemma}[Substitution]
	\label{lem:substitution}
	If $ \Gamma, \GVal{\Args}{\Sort}, \GVal{\Args[y]}{\Sort} \vdash \PT \triangleright \Delta $ then $ \Gamma, \GVal{\Args[y]}{\Sort} \vdash \PT\Set{ \Subst{\Args[y]}{\Args} } \triangleright \Delta $.
\end{lemma}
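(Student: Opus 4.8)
The plan is to prove the statement by induction on the derivation of $ \Gamma, \GVal{\Args}{\Sort}, \GVal{\Args[y]}{\Sort} \vdash \PT \triangleright \Delta $, i.e.\ on the last typing rule of Figure~\ref{fig:typingRules} (equivalently, on the structure of \PT, since the rules are essentially syntax-directed). Since the lemma is stated for an arbitrary global environment, the induction hypothesis is available for arbitrary $ \Gamma $, which is what lets the argument absorb the environment extensions introduced by individual rules---such as $ \GGlobS{a}{G}{s} $ in \textsf{Req}/\textsf{Acc}/\textsf{Res}, $ \GVal{\tilde{\Args}_i}{\tilde{\Sort}_i} $ in \textsf{Get}, and $ \GPVar{\PVar}{\Delta} $ in \textsf{Rec}---into the ``$ \Gamma $'' slot of the hypothesis. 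Throughout I would use that $ \Args $ is a value variable: by the side condition on $ \GVal{\Args}{\Sort} $ it is no session channel, and it is neither a shared channel nor a process variable. Hence $ \Set{ \Subst{\Args[y]}{\Args} } $ touches none of the channels $ \Chan[a], \Chan $ or the process variables in \PT and commutes with every term constructor; moreover the session environment $ \Delta $ contains only roles, channels, and local types (whose sorts are not variables), so $ \Delta $ is left unchanged, matching the conclusion.

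First I would dispatch the structural rules. For \textsf{End} the substitution is the identity and $ \Delta = \emptyset $, so nothing is to prove; for \textsf{Var} it is likewise the identity on \PVar. For \textsf{Par} the substitution distributes over $ \mid $ and the split into $ \Delta_1 \otimes \Delta_2 $ is unaffected, so I apply the induction hypothesis to each branch. For \textsf{Req}, \textsf{Acc}, \textsf{Res}, \textsf{Rec}, and \textsf{Cond} the substitution commutes with the prefix or constructor---it reaches neither a channel, nor a session-channel binder, nor a process variable, and the condition \cond is not constrained by \textsf{Cond}---so in each case I push the substitution into the subterm(s) and invoke the induction hypothesis on the premise(s), reading the added assignments as part of the ambient $ \Gamma $.

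The two cases that carry the real content are \textsf{Send} and \textsf{Get}. In \textsf{Send} the transmitted expressions $ \tilde{\expr} $ may mention $ \Args $, and the premise $ \Gamma, \GVal{\Args}{\Sort}, \GVal{\Args[y]}{\Sort} \Vdash \GVal{\tilde{\expr}}{\tilde{\Sort}_j} $ must be re-established for $ \tilde{\expr}\Set{ \Subst{\Args[y]}{\Args} } $ under $ \Gamma, \GVal{\Args[y]}{\Sort} $. Here I would establish the auxiliary fact that $ \Gamma, \GVal{\Args}{\Sort}, \GVal{\Args[y]}{\Sort} \Vdash \GVal{\expr}{\Sort'} $ implies $ \Gamma, \GVal{\Args[y]}{\Sort} \Vdash \GVal{\expr\Set{ \Subst{\Args[y]}{\Args} }}{\Sort'} $: replacing $ \Args $ by $ \Args[y] $ keeps every name of the expression sorted in the environment, and since $ \Args $ and $ \Args[y] $ share the sort $ \Sort $ the values they may range over coincide, so the expression still has sort $ \Sort' $ under all evaluations. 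This is exactly where the equal-sort hypothesis is used. In \textsf{Get} the received names $ \tilde{\Args}_i $ are binders: I would first use alpha-conversion---permitted by the definition of substitution and by our convention that bound names may be chosen fresh---to assume $ \tilde{\Args}_i $ disjoint from $ \Set{ \Args, \Args[y] } $, so that $ \Set{ \Subst{\Args[y]}{\Args} } $ commutes with the binder without capture; then the premise $ \Gamma, \GVal{\tilde{\Args}_i}{\tilde{\Sort}_i}, \GVal{\Args}{\Sort}, \GVal{\Args[y]}{\Sort} \vdash \PT_i \triangleright \Delta, \LLoc{\Role_2}{\LT_i} $ is in precisely the shape required by the induction hypothesis (with $ \GVal{\tilde{\Args}_i}{\tilde{\Sort}_i} $ absorbed into $ \Gamma $), and applying it to each branch in the typed index set yields the typings needed to re-apply \textsf{Get}.

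I expect the main obstacle to be the bookkeeping around binders in \textsf{Get} (and the session-channel binder in \textsf{Res}): one must make the freshness assumptions on $ \tilde{\Args}_i $ precise enough that $ \Args[y] $ is neither captured nor accidentally identified with a received variable, and check that these choices are consistent with the alpha-conversion already built into the judgements. The expression case \textsf{Send} is conceptually the crux but technically light, as it reduces to the single observation that substituting a name by another name of the same sort preserves the sort of every expression.
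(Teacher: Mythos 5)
Your proposal is correct and follows essentially the same route as the paper: induction on the typing rules, with the observation that local types (and hence $\Delta$) contain no names, and with the \textsf{Send} case resting on exactly the paper's key fact that $\Gamma, \GVal{\Args}{\Sort}, \GVal{\Args[y]}{\Sort} \Vdash \expr $ implies $ \Gamma, \GVal{\Args[y]}{\Sort} \Vdash \expr\Set{ \Subst{\Args[y]}{\Args} } $. Your treatment is merely more explicit than the paper's about absorbing environment extensions into $\Gamma$ and about alpha-converting the input binders in \textsf{Get}, which the paper leaves implicit.
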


\begin{proof}
	The proof is by induction on the typing rules used to obtain $ \Gamma, \GVal{\Args}{\Sort}, \GVal{\Args[y]}{\Sort} \vdash \PT \triangleright \Delta $.
	Since local types do not contain names, they are not affected by the substitution.
	Rule~\textsf{Send} is the only typing rule that checks for sorts of names.
	This case follows from the observation that $ \Gamma, \GVal{\Args}{\Sort}, \GVal{\Args[y]}{\Sort} \Vdash \expr $ implies $ \Gamma, \GVal{\Args[y]}{\Sort} \Vdash \expr\Set{ \Subst{\Args[y]}{\Args} } $.
\end{proof}

The sorts of fresh names are not relevant for type judgements.

\begin{lemma}[Fresh Name]
	\label{lem:freshName}
	If $ \Gamma, \GVal{\Args}{\Sort} \vdash \PT \triangleright \Delta $ and $ \Args \notin \FreeNames{\PT} $ then $ \Gamma \vdash \PT \triangleright \Delta $.
\end{lemma}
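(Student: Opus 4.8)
The plan is to proceed by induction on the derivation of $\Gamma \vdash \PT \triangleright \Delta$, writing $\Gamma = \Gamma', \GVal{\Args}{\Sort}$ and showing in each case that the assignment $\GVal{\Args}{\Sort}$ can be dropped, i.e.\ that $\Gamma' \vdash \PT \triangleright \Delta$. The guiding observation is that $\Args$, being fresh for \PT, cannot occur in any syntactic component of \PT that a typing rule inspects for sorts, so removing $\GVal{\Args}{\Sort}$ leaves every premise intact. Throughout I rely on the fact that the free names of each immediate subterm of \PT are contained in $\FreeNames{\PT}$ together with the names bound by the top-level constructor, and that—by the convention of equating judgements modulo alpha-conversion—those bound names may be chosen distinct from \Args. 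Note also that the side conditions on $\Gamma$ (uniqueness of sorts, of shared-channel assignments, etc.) are non-existence statements, hence preserved when the environment shrinks.

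The only rule that actually consults the sort of a name is \textsf{Send}, so this is the crux of the argument. There the relevant premise is $\Gamma \Vdash \GVal{\tilde{\expr}}{\tilde{\Sort}_j}$, and the names occurring in $\tilde{\expr}$ form a subset of the free names of the prefix, hence exclude \Args. I would record as an auxiliary observation—dual to the one used in the proof of Lemma~\ref{lem:substitution}—that if $\Gamma', \GVal{\Args}{\Sort} \Vdash \GVal{\expr}{\Sort}$ and \Args does not occur in \expr, then already $\Gamma' \Vdash \GVal{\expr}{\Sort}$, because the sorts witnessing the judgement are drawn from assignments for the names of \expr, none of which is \Args. With this the value premise survives the removal of $\GVal{\Args}{\Sort}$, and applying the induction hypothesis to the continuation (for which \Args is still fresh, as $\FreeNames{\PT} \subseteq \FreeNames{\PSend{\Role_1}{\Role_2}{\Label_j}{\tilde{\expr}}{\PT}}$) yields $\Gamma' \vdash \PT \triangleright \Delta, \LLoc{\Role_1}{\LT_j}$; re-applying \textsf{Send} closes the case.

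All remaining cases are routine propagation of the induction hypothesis. For \textsf{Req}, \textsf{Acc}, and \textsf{Res} the assignment $\GVal{\Args}{\Sort}$ is orthogonal to the manipulated shared-channel assignment $\GGlob{a}{G}$ / $\GGlobS{a}{G}{s}$, and freshness of \Args for the continuation follows because these constructors bind only a session channel, which is distinct from the value variable \Args. For \textsf{Get} I first alpha-rename so that the received variables $\tilde{\Args}_i$ differ from \Args, whence $\Args \notin \FreeNames{\PT_i}$ and the hypothesis applies to each branch under $\Gamma', \GVal{\tilde{\Args}_i}{\tilde{\Sort}_i}$. The rules \textsf{Cond} and \textsf{Par} simply delegate to subterms whose free names lie within $\FreeNames{\PT}$; \textsf{Rec} and \textsf{Var} concern only the process-variable part of the environment and introduce no names; and \textsf{End} is immediate, since $\PEnd$ is typeable under an arbitrary global environment. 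The main—indeed the only nontrivial—obstacle is therefore the bookkeeping for \textsf{Send}, namely establishing the above dual of the substitution observation for $\Vdash$.
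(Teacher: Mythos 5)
Your proposal is correct and follows essentially the same route as the paper: the paper's proof is a one-line induction on the typing rules used to derive $\Gamma, \GVal{\Args}{\Sort} \vdash \PT \triangleright \Delta$, which is exactly your strategy, with your treatment of \textsf{Send} via the observation that $\Gamma', \GVal{\Args}{\Sort} \Vdash \GVal{\expr}{\Sort}$ and $\Args$ not occurring in $\expr$ imply $\Gamma' \Vdash \GVal{\expr}{\Sort}$ being the natural dual of the observation the paper records for Lemma~\ref{lem:substitution}. Your write-up merely makes explicit the case analysis and alpha-conversion bookkeeping that the paper leaves implicit.
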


\begin{proof}
	The proof is by induction on the typing rules used to obtain $ \Gamma, \GVal{\Args}{\Sort} \vdash \PT \triangleright \Delta $.
\end{proof}

Subject reduction is a fundamental property of type systems that allows for static type checking.
It shows that if a type judgement for a process can be derived with a coherent session environment then we can also derive type judgements for its derivatives.

\begin{lemma}[Subject Reduction]
	\label{lem:subjectReduction}
	If $ \Gamma \vdash \PT \triangleright \Delta $, $ \Delta $ is coherent, and $ \PT \Step \PT' $ then there is $ \Delta' $ such that $ \Gamma \vdash \PT' \triangleright \Delta' $, $ \Delta' $ is coherent, and $ \Delta \Step \Delta' $.
\end{lemma}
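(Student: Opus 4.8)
The plan is to proceed by induction on the derivation of the reduction $ \PT \Step \PT' $, \ie by a case analysis on the last rule of Figure~\ref{fig:reductionSemantics} applied. For the non-communicating base cases \textsf{If-T}, \textsf{If-F}, and \textsf{Link} I expect the session environment to stay fixed, \ie $ \Delta' = \Delta $ (to match the statement, $ \Step $ has to be read reflexively for these cases), whereas the communicating base case \textsf{Com} produces a genuine \textsf{Com'} step on $ \Delta $. The context rules \textsf{Par} and \textsf{Res} are handled by the induction hypothesis, and \textsf{Struc} is reduced to the remaining cases via Lemma~\ref{lem:structuralCongruence}.

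For \textsf{If-T}/\textsf{If-F} I would invert Rule~\textsf{Cond}, which types both branches with the same $ \Delta $; hence $ \Gamma \vdash \PT' \triangleright \Delta $ and coherence is inherited unchanged. For \textsf{Link} I would invert the typing of $ \PPar{\PReq{\Role[2]..\Role[n]}{\PT_1}}{\ldots} $ (Rules~\textsf{Par}, \textsf{Req}, \textsf{Acc}) to read off that $ \Delta $ consists of the obligations $ \LInv{a}{1}, \ldots, \LInv{a}{n} $ together with a remainder $ \Delta_0 $, while each continuation $ \PT_i $ is typed against $ \LLoc{i}{\Proj{\GT}{\Role[i]}} $. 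Rebuilding the typing of $ \PRes{\Chan}{(\PPar{\PT_1}{\ldots})} $ by Rule~\textsf{Res} with the empty (zero-step) evolution $ \Step^* $ then yields $ \Delta' = \Delta $, using the bookkeeping that the shared channel behind $ \GGlob{a}{G} $ is consumed exactly once.

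The heart of the proof is the \textsf{Com} case, where $ \PT = \PPar{\PSend{\Role_1}{\Role_2}{\Label_j}{\tilde{\expr}}{\PT_s}}{\PGet{\Role_2}{\Role_1}{\Set{\PLab{\Label_i}{\tilde{\Args}_i}{\PT_i}}_{i \in \indexSet}}} $. Inverting Rules~\textsf{Par}, \textsf{Send}, and \textsf{Get} gives $ \Delta = \Delta_0, \LLoc{\Role_1}{\LSend{\Role_2}{\ldots}}, \LLoc{\Role_2}{\LGet{\Role_1}{\ldots}} $, where $ \PT_s $ is typed against $ \LT_j $ with $ \Gamma \Vdash \GVal{\tilde{\expr}}{\tilde{\Sort}_j} $, and each receiver branch $ \PT_i $ is typed against $ \LT_i' $ under the fresh assumptions $ \GVal{\tilde{\Args}_i}{\tilde{\Sort}_i} $. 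The decisive observation is that coherence of $ \Delta $ forces the sender and receiver local types to be projections of one and the same global communication, so their index sets and sorts coincide and in particular $ j $ indexes a receiver branch; this lets me fire Rule~\textsf{Com'} to obtain $ \Delta \Step \Delta' $ with $ \Delta' = \Delta_0, \LLoc{\Role_1}{\LT_j}, \LLoc{\Role_2}{\LT_j'} $ (using Rule~\textsf{Cut} first to discard superfluous receiver branches). It then remains to type $ \PT' = \PPar{\PT_s}{\PT_j\Set{\Subst{\tilde{\expr}}{\tilde{\Args}_j}}} $ against $ \Delta' $: the sender side is immediate, and for the receiver side I would combine the branch typing of $ \PT_j $ with the Substitution Lemma (Lemma~\ref{lem:substitution}) to replace the bound variables $ \tilde{\Args}_j $ by the transmitted expressions $ \tilde{\expr} $ of matching sorts.

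The two steps I expect to be the main obstacles both concern coherence rather than plain typing. First, re-establishing that $ \Delta' $ is coherent after a \textsf{Com} step is delicate because the step only rewrites the local types of $ \Role_1 $ and $ \Role_2 $, while every non-participating role $ \Role[p] $ keeps its merged type $ \bigsqcup_{\Role[p], i} \left( \Proj{\GT_i}{\Role[p]} \right) $; I must show this merged type remains consistent with the session's reduced global type $ \GT_j $, \ie that postponing the still-unobserved branch choice for roles not yet informed of it preserves coherence. This amounts to a separate sub-argument that coherence is preserved along the $ \Step $ relation on environments, and the third roles are exactly where it bites. Second, the context rule \textsf{Par} is problematic because coherence is a \emph{global} property of $ \Delta $ that is not inherited by the $ \otimes $-summands produced by Rule~\textsf{Par}, so the induction hypothesis cannot be applied to a subterm directly. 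I plan to circumvent this by carrying through the induction the weaker, subset-closed invariant that every local type occurring in the environment is a projection of its session's global type---which already suffices to match two communicating partners inside a subterm---and to recover full coherence of the top-level $ \Delta' $ only afterwards.
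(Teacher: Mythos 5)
Your proposal is correct and follows the same overall route as the paper's proof: induction on the reduction rules of Figure~\ref{fig:reductionSemantics}, with \textsf{If-T}/\textsf{If-F} and \textsf{Link} leaving the environment unchanged (the paper likewise reads the environment step reflexively, concluding $\Delta \Step^* \Delta$), the \textsf{Com} case handled by inversion of \textsf{Par}/\textsf{Send}/\textsf{Get}, the Substitution Lemma~\ref{lem:substitution} (the paper additionally invokes Lemma~\ref{lem:freshName} to discard the spent sort assumptions), and Rule~\textsf{Com'} on the environment, and \textsf{Struc} discharged by Lemma~\ref{lem:structuralCongruence}. The genuine difference is where the coherence repair is placed. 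The paper's \textsf{Par} case applies the induction hypothesis directly to the summand $\Gamma \vdash \PT_1 \triangleright \Delta_1$ --- even though $\Delta_1$ need not be coherent, which is exactly the defect you flag --- and restores coherence only after recombining: $\Delta_1' \otimes \Delta_2$ is defined but possibly incoherent because non-participating roles in $\Delta_2$ still carry merged receive types, and Rule~\textsf{Cut} prunes these superfluous branches (Rule~\textsf{Get} guaranteeing typability is unaffected), yielding a coherent $\Delta''$ with $\Delta \Step^* \Delta''$. The authors justify this looseness by remarking after the proof that coherence is used only to establish coherence of the resulting environment, so the typing half of the induction hypothesis is available unconditionally. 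Your alternative --- threading a subset-closed ``partial coherence'' invariant (all local types of a session are projections of one common current global type, without demanding that all roles be present) through the induction and recovering full coherence at the top level --- is the more rigorous fix for the same gap, at the cost of proving a restated lemma; the paper's version is more economical but, as written, its appeal to the induction hypothesis in the \textsf{Par} and \textsf{Res} cases is not strictly licensed by the lemma's own hypotheses. Both of the obstacles you single out are exactly where the real work lies; note only that in the base \textsf{Com} case no preliminary \textsf{Cut} is needed: under coherence (or under your invariant) the sender's and receiver's types are projections of the same communication and hence already share one index set, so \textsf{Cut} is required only for the third-party roles, i.e.\ precisely in the recombination step you discuss.
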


\begin{proof}
	Assume $ \Gamma \vdash \PT \triangleright \Delta $ and that $ \Delta $ is coherent.
	The proof is by induction on the reduction rules of Figure~\ref{fig:reductionSemantics} that are used to obtain $ \PT \Step \PT' $.
	\begin{description}
		\item[Case of Rule~\textsf{Link}:]
			In this case $ \PT = \PPar{\PReq{\Role[2]..\Role[n]}{\PT_1}}{\PPar{\PAcc{\Role[2]}{\PT_2}}{\PPar{\ldots}{\PAcc{\Role[n]}{\PT_n}}}} $ and $ \PT' = \PRes{\Chan}{\left( \PPar{\PT_1}{\PPar{\PT_2}{\PPar{\ldots}{\PT_n}}} \right)} $.
			By Figure~\ref{fig:typingRules}, then the proof of $ \Gamma \vdash \PT \triangleright \Delta $ starts with $ n - 1 $ applications of Rule~\textsf{Par} that splits the judgement into $ \Gamma \vdash \PReq{\Role[2]..\Role[n]}{\PT_1} \triangleright \Delta_1 $ and $ \Gamma \vdash \PAcc{\Role[i]}{\PT_i} \triangleright \Delta_i $ for $ 2 \leq i \leq n $ such that $ \Delta = \Delta_1 \otimes \ldots \otimes \Delta_n $.
			By the Rules~\textsf{Req} and \textsf{Acc}, then $ \Roles{\GT} = \Set{ \Role[1], \ldots, \Role[n] } $, $ \Gamma = \Gamma', \GGlob{a}{G} $, $ \Delta_i = \Delta_i', \LInv{a}{i} $, and $ \Gamma', \GGlobS{a}{G}{s} \vdash \PT_i \triangleright \Delta_i', \LLoc{i}{\Proj{\GT}{\Role[i]}} $ for all $ 1 \leq i \leq n $.
			Because of $ \Delta_1 \otimes \ldots \otimes \Delta_n $ and $ \Delta_i = \Delta_i', \LInv{a}{i} $, $ \Delta' = \Delta_1' \otimes \ldots \otimes \Delta_n' $ is defined.
			By $ n - 1 $ applications of Rule~\textsf{Par} and since $ \Delta_i', \LLoc{i}{\Proj{\GT}{\Role[i]}} $ implies that $ \Delta_i' $ does not contain an assignment for \Actor{\Chan}{\Role[i]}, then $ \Gamma', \GGlobS{a}{G}{s} \vdash \PPar{\PT_1}{\PPar{\ldots}{\PT_n}} \triangleright \Delta', \LLoc{1}{\Proj{\GT}{\Role[1]}}, \ldots, \LLoc{n}{\Proj{\GT}{\Role[n]}} $.
			By Rule~\textsf{Res} and the reflexivity of $ \Step^* $, then $ \Gamma \vdash \PT' \triangleright \Delta $.
			By the reflexivity of $ \Step^* $, then $ \Delta \Step^* \Delta $.
		\item[Case of Rule~\textsf{Com}:]
			In this case $ \PT = \PPar{\PSend{\Role_1}{\Role_2}{\Label_j}{\tilde{\expr}}{\PT[Q]}}{\PGet{\Role_2}{\Role_1}{\Set{ \PLab{\Label_i}{\tilde{\Args}_i}{\PT_i} }_{i \in \indexSet}}} $, $ j \in \indexSet $, and $ \PT' = \PPar{\PT[Q]}{\left( \PT_j\Set{ \Subst{\tilde{\expr}}{\tilde{\Args}_j} } \right)} $.
			By Figure~\ref{fig:typingRules}, then the proof of $ \Gamma \vdash \PT \triangleright \Delta $ starts with Rule~\textsf{Par} that splits the judgement into $ \Gamma \vdash \PSend{\Role_1}{\Role_2}{\Label_j}{\tilde{\expr}}{\PT[Q]} \triangleright \Delta_{\PT[Q]} $ and $ \Gamma \vdash \PGet{\Role_2}{\Role_1}{\Set{ \PLab{\Label_i}{\tilde{\Args}_i}{\PT_i} }_{i \in \indexSet}} \triangleright \Delta_{\PT} $ such that $ \Delta = \Delta_{\PT[Q]} \otimes \Delta_{\PT} $.
			By the Rules~\textsf{Send} and \textsf{Get} and coherence, then $ \Gamma \Vdash \GVal{\tilde{\expr}}{\tilde{\Sort}_j} $, $ \Delta_{\PT[Q]} = \Delta_{\PT[Q]}', \LLoc{\Role_1}{\LSend{\Role_2}{\Set{ \LLab{\Label_i}{\tilde{\Sort}_j}{\LT_i} }_{i \in \indexSet}}} $, $ \Gamma \vdash \PT[Q] \triangleright \Delta_{\PT[Q]}', \LLoc{\Role_1}{\LT_j} $, $ \Delta_{\PT} = \Delta_{\PT}', \LLoc{\Role_2}{\LGet{\Role_1}{\Set{ \LLab{\Label_i}{\tilde{\Sort}_j}{\LT_i'} }_{i \in \indexSet}}} $, and $ \Gamma, \GVal{\tilde{\Args}_j}{\tilde{\Sort}_j} \vdash \PT_j \triangleright \Delta_{\PT}', \LLoc{\Role_2}{\LT_j'} $.
			By the Lemmata~\ref{lem:substitution} and \ref{lem:freshName}, then $ \Gamma \vdash \PT_j\Set{ \Subst{\expr}{\tilde{\Args}_j} } \triangleright \Delta_{\PT}', \LLoc{\Role_2}{\LT_j'} $.
			Because $ \Delta_{\PT[Q]} \otimes \Delta_{\PT} $ is defined, $ \Delta_{\PT[Q]} = \Delta_{\PT[Q]}', \LLoc{\Role_1}{\LSend{\Role_2}{\Set{ \LLab{\Label_i}{\tilde{\Sort}_j}{\LT_i} }_{i \in \indexSet}}} $, and $ \Delta_{\PT} = \Delta_{\PT}', \LLoc{\Role_2}{\LGet{\Role_1}{\Set{ \LLab{\Label_i}{\tilde{\Sort}_j}{\LT_i'} }_{i \in \indexSet}}} $, $ \Delta' = \Delta_{\PT[Q]}', \LLoc{\Role_1}{\LT_j} \otimes \Delta_{\PT}', \LLoc{\Role_2}{\LT_j'} $ is defined and is coherent.
			By Rule~\textsf{Par}, then $ \Gamma \vdash \PT' \triangleright \Delta' $.
			By Rule~\textsf{Com'}, then $ \Delta \Step^* \Delta' $.
		\item[Case of Rule~\textsf{If-T}:]
			In this case $ \PT = \PCond{\cond}{\PT_1}{\PT_2} $, $ \PT' = \PT_1 $, and \cond is satisfied.
			By Figure~\ref{fig:typingRules}, then the proof of $ \Gamma \vdash \PT \triangleright \Delta $ starts with Rule~\textsf{Cond} and, thus, $ \Gamma \vdash \PT' \triangleright \Delta $.
			By the reflexivity of $ \Step^* $, then $ \Delta \Step^* \Delta $.
		\item[Case of Rule~\textsf{If-F}:]
			In this case $ \PT = \PCond{\cond}{\PT_1}{\PT_2} $, $ \PT' = \PT_2 $, and \cond is not satisfied.
			By Figure~\ref{fig:typingRules}, then the proof of $ \Gamma \vdash \PT \triangleright \Delta $ starts with Rule~\textsf{Cond} and, thus, $ \Gamma \vdash \PT' \triangleright \Delta $.
			By the reflexivity of $ \Step^* $, then $ \Delta \Step^* \Delta $.
		\item[Case of Rule~\textsf{Par}:]
			In this case $ \PT = \PPar{\PT_1}{\PT_2} $, $ \PT' = \PPar{\PT_1'}{\PT_2} $, and $ \PT_1 \Step \PT_1' $.
			By Figure~\ref{fig:typingRules}, then the proof of $ \Gamma \vdash \PT \triangleright \Delta $ starts with Rule~\textsf{Par} that splits the judgement into $ \Gamma \vdash \PT_1 \triangleright \Delta_1 $ and $ \Gamma \vdash \PT_2 \triangleright \Delta_2 $ such that $ \Delta = \Delta_1 \otimes \Delta_2 $.
			By the induction hypothesis, $ \Gamma \vdash \PT_1 \triangleright \Delta_1 $ and $ \PT_1 \Step \PT_1' $ imply that there is $ \Delta_1' $ such that $ \Gamma \vdash \PT_1' \triangleright \Delta_1' $, $ \Delta_1' $ is coherent, and $ \Delta_1 \Step^* \Delta_1' $.
			Because $ \Delta_1 \otimes \Delta_2 $ is defined and since Rule~\textsf{Com'} can only reduce local types, then $ \Delta' = \Delta_1' \otimes \Delta_2 $ is defined but not necessarily coherent.
			If $ \Delta' $ is not coherent then this is because of superfluous branches in receivers that we remove with Rule~\textsf{Cut}, while Rule~\textsf{Get} ensures that the validity of the type judgement is not affected by removing superfluous branches in the type.
			Let $ \Delta'' $ be the result of removing all superfluous branches from $ \Delta' $ such that $ \Delta'' $ is coherent and, with the Rules~\textsf{Com'} and \textsf{Cut}, $ \Delta \Step^* \Delta'' $.
			By Rule~\textsf{Par}, then $ \Gamma \vdash \PT' \triangleright \Delta'' $.
		\item[Case of Rule~\textsf{Res}:]
			In this case $ \PT = \PRes{\Chan}{\PT[Q]} $, $ \PT' = \PRes{\Chan}{\PT[Q]'} $, and $ \PT[Q] \Step \PT[Q]' $.
			By Figure~\ref{fig:typingRules}, then the proof of $ \Gamma \vdash \PT \triangleright \Delta $ starts with Rule~\textsf{Res} such that $ \Gamma = \Gamma', \GGlob{a}{G} $, $ \Delta = \Delta_{\PT[Q]}, \LInv{a}{1}, \ldots, \LInv{a}{n} $, $ \Delta_{\PT[Q]}, \LLoc{1}{\Proj{\GT}{\Role[1]}}, \ldots, \LLoc{n}{\Proj{\GT}{\Role[n]}} \Step^* \Delta_{\PT[Q]}'' $, $ \Roles{\GT} = \Set{ \Role[1], \ldots, \Role[n] } $, and $ \Gamma', \GGlobS{a}{G}{s} \vdash \PT[Q] \triangleright \Delta_{\PT[Q]}'' $.
			By the induction hypothesis, $ \Gamma', \GGlobS{a}{G}{s} \vdash \PT[Q] \triangleright \Delta_{\PT[Q]}'' $ and $ \PT[Q] \Step \PT[Q]' $ imply that there is $ \Delta_{\PT[Q]}''' $ such that $ \Gamma', \GGlobS{a}{G}{s} \vdash \PT[Q] \triangleright \Delta_{\PT[Q]}''' $, $ \Delta_{\PT[Q]}''' $ is coherent, and $ \Delta_{\PT[Q]}'' \Step^* \Delta_{\PT[Q]}''' $.
			By Rule~\textsf{Res}, then $ \Gamma \vdash \PT' \triangleright \Delta $.
			By the reflexivity of $ \Step^* $, then $ \Delta \Step^* \Delta $.
		\item[Case of Rule~\textsf{Struc}:] This case follows from Lemma~\ref{lem:structuralCongruence}.
	\end{description}
\end{proof}

In the above proof we use the assumption that $ \Delta $ is coherent only to prove that then also $ \Delta' $ is coherent.
The proof of subject reduction without coherence in Theorem~1 of \cite{petersWagnerNestmann19} is obtained as special case of the above proof by removing the parts about coherence and the step relation on session environments.

Session invitations ensure by definition, that the implementation of a session composes its actors in parallel.
If a process is \emph{role-distributed} then so are all its derivatives.

\begin{lemma}
	\label{lem:roleDist}
	If $ \Gamma \vdash \PT \triangleright \Delta $, $ \PT \Step \PT' $, and \PT is role-distributed then $ \PT' $ is role-distributed.
\end{lemma}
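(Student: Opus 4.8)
The plan is to induct on the reduction rule of Figure~\ref{fig:reductionSemantics} applied in $\PT \Step \PT'$, inverting the derivation of $\Gamma \vdash \PT \triangleright \Delta$ in those cases where the induction hypothesis must be applied to a subprocess. Before the induction I would fix the precise reading of role-distribution and record two auxiliary facts. Read precisely, role-distribution forbids any two \emph{distinct} actors of one session from being composed non-parallel; in particular a prefix and its continuation, being sequential, may never be distinct actors of the same session. Consequently, in a role-distributed process the continuation of a prefix acting for a session $\Chan$ contains, for session $\Chan$, only further actions of that prefix's own role, while all other roles of $\Chan$ sit in sibling parallel branches. From this reading two facts follow: (i) every subterm of a role-distributed process is again role-distributed, and (ii) role-distribution is invariant under $\equiv$. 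For (ii) the parallel laws, the name-restriction laws, scope extrusion, and $\PPar{\PT}{\PEnd} \equiv \PT$ merely permute or erase operators that never separate two distinct same-session actors, and recursion unfolding is harmless because process variables are communication-guarded, so the inserted copy of the body stays guarded and is not unguarded into a structure-breaking position.

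For the axioms I would reason as follows. In the \textsf{Com} case $\PT = \PPar{\PSend{\Role_1}{\Role_2}{\Label_j}{\tilde{\expr}}{\PT[Q]}}{\PGet{\Role_2}{\Role_1}{\ldots}}$ reduces to $\PT' = \PPar{\PT[Q]}{\PT_j\Set{\Subst{\tilde{\expr}}{\tilde{\Args}_j}}}$; inverting \textsf{Send} and \textsf{Get} (and the disjointness of $\otimes$) gives $\Role_1 \neq \Role_2$. Both continuations are subterms of $\PT$, hence role-distributed by (i), and the substitution only renames value variables and so leaves the actor structure untouched; since $\PT'$ re-composes them in parallel, any two distinct same-session actors of $\PT'$ lie either jointly in one continuation (parallel there) or one in each (parallel at the top), so $\PT'$ is role-distributed. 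The \textsf{Link} case reduces $\PT$ to $\PRes{\Chan}{(\PPar{\PT_1}{\PPar{\ldots}{\PT_n}})}$; each $\PT_i$ is the continuation of the role-$\Role[i]$ initialisation prefix, hence a role-distributed subterm carrying only role $\Role[i]$ of the new session $\Chan$, the $\PT_i$ sit in parallel, and the restriction does not change the arrangement, so the fresh roles $\Role[1],\ldots,\Role[n]$ of $\Chan$ are pairwise parallel. The \textsf{If-T} and \textsf{If-F} cases yield $\PT' = \PT_1$ or $\PT' = \PT_2$, a subterm of $\PT$, role-distributed by (i).

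For the inductive rules I would invert the matching typing rule to feed the induction hypothesis. For \textsf{Par}, inverting Rule~\textsf{Par} yields $\Gamma \vdash \PT_1 \triangleright \Delta_1$ with $\PT_1$ role-distributed (a subterm), so the induction hypothesis gives that $\PT_1'$ is role-distributed; then $\PPar{\PT_1'}{\PT_2}$ is role-distributed because each side is role-distributed and actors from opposite sides are parallel, noting that a fresh session possibly created inside $\PT_1'$ by \textsf{Link} is bound and, by the absence of name clashes, cannot coincide with a session of $\PT_2$. For \textsf{Res}, inverting Rule~\textsf{Res} gives a typed body to which the induction hypothesis applies, and the restriction preserves the arrangement. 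For \textsf{Struc}, Lemma~\ref{lem:structuralCongruence} preserves typing while fact (ii) preserves role-distribution across the two uses of $\equiv$, after which the induction hypothesis applies to the $\equiv$-representative.

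I expect the main obstacle to be fact (i) and the precise reading of role-distribution together with its $\equiv$-invariance in (ii): one must be careful that \emph{unguarding} a continuation in the \textsf{Com} and \textsf{Link} cases never places a same-session actor of a different role sequentially behind another participant's actor. This is ultimately secured by the fact that the two continuations are re-composed in \emph{parallel} rather than sequentially, and by the structural reading above (equivalently, by Lemma~\ref{lem:actorsAreSequential}) that the session-$\Chan$ actions unguarded inside a continuation all belong to the single role that owned the consumed prefix. Pinning down this reading, and checking that recursion unfolding inside $\equiv$ respects it, is where the real content of the proof lies; the remaining bookkeeping is routine.
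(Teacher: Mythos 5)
Your proposal is correct and takes essentially the same approach as the paper: the paper's entire proof is a one-sentence induction on the reduction rules of Figure~\ref{fig:reductionSemantics}, justified solely by the observation that no reduction rule unifies parallel branches (i.e., reduction never places previously parallel actors into sequential composition). Your auxiliary facts (subterm closure, invariance under $\equiv$) and the explicit case analysis are a detailed elaboration of exactly that argument, so the two proofs coincide in substance.
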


\begin{proof}
	The proof is by induction on the reduction rules of Figure~\ref{fig:reductionSemantics} to obtain $ \PT \Step \PT' $, since no reduction rule unifies parallel branches.
\end{proof}

If \PT is well-typed then all actions of the same actor are composed sequentially in \PT.

\begin{proof}[Proof of Lemma~\ref{lem:actorsAreSequential}]
	By the Definitions~\ref{def:localTypes} and \ref{def:projection}, local types are sequential and projection maps global types for each role on a single sequential local type \LT.
	By coherence, for each role there is initially either exactly one $ \LInv{a}{r} $ that is replaced later by the sequential local type \LT or there is exactly one $ \LLoc{r}{\LT} $ in the session environment.
	By Figure~\ref{fig:typingRules}, then \LT cannot be split between different parallel branches and communication prefixes require a corresponding assignment of the respective actor.
	Thus, a single local type \LT cannot be implemented in different parallel branches.
\end{proof}

Well-typedness ensures that there are no conflicts between communication prefixes for both session invitations and communications within sessions.

\begin{lemma}[Linearity]
	\label{lem:linearity}
	If \PT is well-typed then \PT contains, for each shared channel \Chan[a], at most one invitation $ \PReq{\Role[2]..\Role[n]}{\PT'} $ and no two $ \PAcc{\Role}{\PT'} $ for the same role~\Role and, for each pair of a session channel \Chan and a role~\Role, at most one sender $ \PSend{\Role}{\Role'}{\Label}{\tilde{\expr}}{\PT'} $ and at most one receiver $ \PGet{\Role}{\Role'}{\Set{ \PLab{\Label_i}{\tilde{\Args}_i}{\PT_i'} }_{i \in \indexSet}} $ that are guarded only by conditionals.
\end{lemma}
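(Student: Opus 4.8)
The plan is to read the necessary constraints off a typing derivation of $ \Gamma \vdash \PT \triangleright \Delta $ with $ \Delta $ coherent (which well-typedness provides), exploiting two complementary mechanisms: the uniqueness conditions on environment assignments---at most one $ \GGlob{a}{G} $ or $ \GGlobS{a}{G}{s} $ per shared channel in $ \Gamma $, at most one obligation $ \LInv{a}{r} $ per pair in $ \Delta $, and at most one assignment $ \LLoc{\Role}{\LT} $ per actor $ \Actor{\Chan}{\Role} $ in $ \Delta $---together with the disjointness requirement $ \Delta_1 \cap \Delta_2 = \emptyset $ built into $ \otimes $ in Rule~\textsf{Par}. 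The guiding observation is that every prefix can be typed only when the corresponding assignment is present in the environment the derivation hands to it: a sender (Rule~\textsf{Send}) or receiver (Rule~\textsf{Get}) for $ \Actor{\Chan}{\Role} $ needs $ \LLoc{\Role}{\LT} $ with a matching head, a request (Rule~\textsf{Req}) needs $ \GGlob{a}{G} $ and produces $ \LInv{a}{1} $, and an acceptance (Rule~\textsf{Acc}) for role $ \Role $ needs $ \GGlob{a}{G} $ and produces $ \LInv{a}{r} $.

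For the invitations I would argue directly from these facts. Two requests on the same shared channel $ a $ in parallel would each contribute $ \LInv{a}{1} $ to their branch of the session environment, so $ \otimes $ in Rule~\textsf{Par} would be undefined; two requests in sequence are impossible because Rule~\textsf{Req} rewrites $ \GGlob{a}{G} $ into $ \GGlobS{a}{G}{s} $ for its continuation and the environment constraints forbid a second $ \GGlob{a}{G} $ thereafter, leaving the inner request with nothing to consume. The same two arguments, now with the obligation $ \LInv{a}{r} $, exclude two acceptances for the same role $ \Role $ on $ a $. This yields, up to conditional alternatives (treated below), at most one invitation and no two acceptances per role for each shared channel.

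For the session communications I would fix an actor $ \Actor{\Chan}{\Role} $. By Lemma~\ref{lem:actorsAreSequential} all its prefixes are composed sequentially---the underlying reason being again that such a prefix needs the single assignment $ \LLoc{\Role}{\LT} $ in its branch of the session environment, so two of them in parallel would make $ \otimes $ undefined---hence any two are sequentially ordered, one lying in the scope of the other. A prefix in the scope of a communication prefix is not guarded only by conditionals and may be discarded; the remaining ones are reached by descending through conditionals alone. Since Rule~\textsf{Cond} types both branches of a conditional against the very same session environment, all these frontier prefixes are typed against the same assignment $ \LLoc{\Role}{\LT} $, whose head is fixed to be either a send $ \LSend{\Role'}{\cdot} $ or a receive $ \LGet{\Role'}{\cdot} $. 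As Rule~\textsf{Send} (resp.\ Rule~\textsf{Get}) can fire only against a sending (resp.\ receiving) head, a simultaneous send and receive for $ \Actor{\Chan}{\Role} $ is impossible and the two branches of a conditional must realise the same communication. Thus, up to mutually exclusive conditional alternatives, there is at most one sender and at most one receiver for $ \Actor{\Chan}{\Role} $ that is guarded only by conditionals.

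The whole argument is carried out by induction on the typing derivation so that it propagates through Rule~\textsf{Res}, Rule~\textsf{Rec}, and Rule~\textsf{Var}; the unfolding convention for recursion and the equation $ \Delta, \LLoc{\Role}{\LEnd} = \Delta $ keep the invariant intact. The main obstacle I expect lies in the conditional case, which recurs for invitations and for session communications alike: because Rule~\textsf{Cond} types both branches against one and the same environment, a conditional may legitimately carry a matching prefix in each branch, so \emph{at most one} must be read up to the mutually exclusive branches of conditionals. The delicate point is to make precise that the prefixes guarded only by conditionals are exactly those typed against a single fixed assignment ($ \GGlob{a}{G} $, an obligation $ \LInv{a}{r} $, or $ \LLoc{\Role}{\LT} $), so that the identical treatment of both branches by Rule~\textsf{Cond} pins their heads together and forbids genuinely distinct concurrent uses of the channel or actor, rather than the apparent duplication being a real violation of linearity.
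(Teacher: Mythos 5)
Your proposal is correct and follows essentially the same route as the paper's proof: the shared-channel cases are settled by the consumption/non-duplication discipline of the obligations $\LInv{a}{r}$ (disjointness of $\otimes$ across parallel branches, and the rewriting of $\GGlob{a}{G}$ into $\GGlobS{a}{G}{s}$ ruling out nested re-use), and the in-session cases reduce to Lemma~\ref{lem:actorsAreSequential}. Your explicit handling of prefixes duplicated across the two branches of a conditional---reading ``at most one'' up to mutually exclusive branches, since Rule~\textsf{Cond} types both branches against the same environment---merely spells out a point that the paper's shorter proof leaves implicit.
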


\begin{proof}
	Assume that \PT is well-typed.
	Assume that \PT contains an invitation $ \PReq{\Role[2]..\Role[n]}{\PT'} $ that is guarded only by conditionals.
	By Figure~\ref{fig:typingRules} and $ \Gamma \vdash \PT \triangleright \Delta $, then $ \LInv{a}{1} \in \Delta $.
	Because $ \LInv{a}{1} $ is consumed in Rule~\textsf{Req}, cannot be introduced by typing rules, and cannot be duplicated for different parallel branches, \PT can contain at most one such prefix.\\
	The case of $ \PAcc{\Role}{\PT'} $ is similar.\\
	The cases of $ \PSend{\Role}{\Role'}{\Label}{\tilde{\expr}}{\PT'} $ and $ \PGet{\Role}{\Role'}{\Set{ \PLab{\Label_i}{\tilde{\Args}_i}{\PT_i'} }_{i \in \indexSet}} $ follow from Lemma~\ref{lem:actorsAreSequential}.
\end{proof}

Moreover, well-typedness ensures that, for all unguarded communication prefixes, the respective communication partner exists in a parallel branch but might be guarded.

\begin{lemma}[Error-Freedom]
	\label{lem:errorFreedom}
	If \PT is well-typed and \PT contains an unguarded
	\begin{enumerate}
		\item $ \PReq{\Role[2]..\Role[n]}{\PT'} $ then \PT also contains $ \PAcc{\Role[i]}{\PT_i'} $ for all $ \Role[2] \leq \Role[i] \leq \Role[n] $ that are guarded only by conditionals or prefixes on channels different form \Chan[a].
		\item $ \PAcc{\Role}{\PT'} $ then there is some \Role[n] such that  $ \Role[2] \leq \Role \leq \Role[n] $ and \PT also contains $ \PReq{\Role[2]..\Role[n]}{\PT'} $ and $ \PAcc{\Role[i]}{\PT_i'} $ for all $ \Role[2] \leq \Role[i] \leq \Role[n] $ with $ \Role \neq \Role[i] $ that are guarded only by conditionals or prefixes on channels different form \Chan[a].
		\item $ \PSend{\Role}{\Role'}{\Label_j}{\tilde{\expr}}{\PT'} $ then \PT also contains $ \PGet{\Role}{\Role'}{\Set{ \PLab{\Label_i}{\tilde{\Args}_i}{\PT_i'} }_{i \in \indexSet}} $ with $ j \in \indexSet $ in parallel with the sender.
		\item $ \PGet{\Role}{\Role'}{\Set{ \PLab{\Label_i}{\tilde{\Args}_i}{\PT_i'} }_{i \in \indexSet}} $ then \PT also contains $ \PSend{\Role}{\Role'}{\Label_j}{\tilde{\expr}}{\PT'} $ with $ j \in \indexSet $ in parallel with the receiver.
	\end{enumerate}
\end{lemma}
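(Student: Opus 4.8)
The plan is to read each unguarded prefix off the typing derivation as an assignment in the session environment, use coherence to supply the dual assignment, and then reconstruct the dual prefix from it. So assume $\PT$ is well-typed, say $\Gamma \vdash \PT \triangleright \Delta$ with $\PT$ role-distributed and $\Delta$ coherent. The argument hinges on two facts about the rules of Figure~\ref{fig:typingRules}. \emph{From prefix to assignment:} an unguarded prefix is typed by its matching rule (\textsf{Send}, \textsf{Get}, \textsf{Req}, or \textsf{Acc}), and the only rules between this rule and the root of the derivation are the context rules \textsf{Par}, \textsf{Res}, \textsf{Cond} and recursion; since none of these deletes the assignment introduced there---in particular, an obligation $\LInv{a}{r}$ is consumed only by \textsf{Req}, \textsf{Acc}, \textsf{Res} and is carried unchanged through \textsf{Send}, \textsf{Get}, \textsf{Cond}, \textsf{Par}---the assignment survives into the session environment that coherence speaks about. \emph{From assignment to prefix:} conversely, a local-type assignment $\LLoc{\Role}{\LT}$ with $\LT \neq \LEnd$, or an obligation $\LInv{a}{r}$, cannot be discharged by \textsf{End}, so going up the derivation it must eventually be consumed by the matching rule applied to an actual prefix; unfolding recursion via Lemma~\ref{lem:structuralCongruence} wherever a process variable is reached, this yields a dual prefix in $\PT$, and by role-distribution it sits in a parallel branch, while Lemma~\ref{lem:actorsAreSequential} makes it the next action of its actor.

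Given these two facts, each case amounts to reading the dual assignment from coherence. For case~3 an unguarded $\PSend{\Role}{\Role'}{\Label_j}{\tilde{\expr}}{\PT'}$ gives, via \textsf{Send}, an assignment $\LLoc{\Role}{\LSend{\Role'}{\Set{ \LLab{\Label_i}{\tilde{\Sort}_i}{\LT_i} }_{i \in \indexSet}}}$ with $j \in \indexSet$. Because coherence makes the local types of this session exactly the projections of one global type $\GT$, and by Definition~\ref{def:projection} a send projection onto $\Role$ arises only from a head communication $\GCom{\Role}{\Role'}{\Set{ \ldots }_{i \in \indexSet}}$ of $\GT$ (after unfolding), the dual projection onto $\Role'$ is $\LGet{\Role}{\Set{ \ldots }_{i \in \indexSet}}$, which coherence places in the same environment; the second fact then produces a receiver $\PGet{\Role'}{\Role}{\Set{ \ldots }_{i \in \indexSet[J]}}$, and since \textsf{Get} allows $\indexSet \subseteq \indexSet[J]$ we obtain $j \in \indexSet \subseteq \indexSet[J]$, as required. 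Case~4 is symmetric, using that \textsf{Send} forces the sender's chosen label into the type's index set, which is contained in the receiver's. For cases~1 and~2 an unguarded $\PReq{\Role[2]..\Role[n]}{\PT'}$ (resp.\ $\PAcc{\Role}{\PT'}$) yields $\LInv{a}{1}$ (resp.\ $\LInv{a}{\Role}$) in $\Delta$; coherence on the shared channel $\Chan[a]$ gives $\Set{ \Role[i] \mid \LInv{a}{i} \in \Delta } = \Roles{\GT} = \Set{ \Role[1], \ldots, \Role[n] }$, so every remaining obligation is present and, by the second fact, witnessed by the corresponding request and accepts (the session is not yet restricted, so these obligations stem from \textsf{Req} and \textsf{Acc} rather than \textsf{Res}). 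Their guarding is pinned down by Lemma~\ref{lem:linearity} and role-distribution: there is at most one request and at most one accept per role on $\Chan[a]$, and no actor may play two roles of the same session, so none of these prefixes can be guarded by a prefix on $\Chan[a]$, leaving only conditionals and prefixes on other channels.

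The main obstacle is the interaction of the first fact with Rule~\textsf{Res}. An unguarded send or receive of an \emph{already-initiated} session $\Chan$ necessarily sits beneath the restriction $\PRes{\Chan}{\cdot}$ that created the session, so its local type does not lie in the top-level $\Delta$---which records only the obligations $\LInv{a}{1}, \ldots, \LInv{a}{n}$ for $\Chan$---but in the environment $\Delta'$ with $\Delta, \LLoc{1}{\Proj{\GT}{\Role[1]}}, \ldots, \LLoc{n}{\Proj{\GT}{\Role[n]}} \Step^* \Delta'$ introduced by \textsf{Res}. I therefore have to show that the coherence used in cases~3 and~4 still holds for $\Delta'$, i.e.\ that the part of $\Delta'$ belonging to $\Chan$ is again the set of projections of a single, reduced global type. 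This is exactly the claim that the session-environment reduction of Rules~\textsf{Com'} and~\textsf{Cut} preserves coherence---\textsf{Com'} advances a sender and its matching receiver in lock-step with one step of the global type, and \textsf{Cut} only drops receiver branches already absent from the projection---which I would prove by inspecting these two rules, just as in the coherence-preserving part of the proof of Lemma~\ref{lem:subjectReduction}. Once this invariant is available, the coherence argument transfers verbatim underneath the restriction and the remaining steps are routine bookkeeping.
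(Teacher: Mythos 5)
Your proposal is correct and takes essentially the same route as the paper's proof: read the unguarded prefix's assignment off the typing derivation, use coherence to supply the dual assignment, and reconstruct the dual prefix via the typing rules and sequentiality of actors (Lemma~\ref{lem:actorsAreSequential}). The only substantive differences are presentational---you make explicit the coherence-preservation invariant for session environments under Rule~\textsf{Res} (which the paper's terser proof leaves implicit) and justify the guarding claim in cases~1--2 via Lemma~\ref{lem:linearity} and role-distribution where the paper cites Lemma~\ref{lem:actorsAreSequential} and Figure~\ref{fig:typingRules}.
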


\begin{proof}
	Assume that \PT is well-typed and that \PT contains an unguarded
	\begin{enumerate}
		\item invitation $ \PReq{\Role[2]..\Role[n]}{\PT'} $.
			By Figure~\ref{fig:typingRules} and $ \Gamma \vdash \PT \triangleright \Delta $, then $ \LInv{a}{1} \in \Delta $ and $ \GGlob{a}{G} \in \Gamma $ for some \GT such that $ \Roles{\GT} = \Set{ \Role[1], \ldots, \Role[n] } $.
			By coherence, then $ \LInv{a}{2}, \ldots, \LInv{a}{n} \in \Delta $.
			By Lemma~\ref{lem:actorsAreSequential} and Figure~\ref{fig:typingRules}, then \PT also contains $ \PAcc{\Role[i]}{\PT_i'} $ for all $ \Role[2] \leq \Role[i] \leq \Role[n] $ that are guarded only by conditionals or prefixes on channels different from \Chan[a].
		\item $ \PAcc{\Role}{\PT'} $.
			By Figure~\ref{fig:typingRules} and $ \Gamma \vdash \PT \triangleright \Delta $, then $ \LInv{a}{\Role} \in \Delta $ and $ \GGlob{a}{G} \in \Gamma $ for some \GT such that $ \Roles{\GT} = \Set{ \Role[1], \ldots, \Role[n] } $ and $ \Role[1] \leq \Role \leq \Role[n] $.
			By coherence, then $ \LInv{a}{1}, \ldots, \LInv{a}{n} \in \Delta $.
			By Lemma~\ref{lem:actorsAreSequential} and Figure~\ref{fig:typingRules}, then \PT also contains $ \PReq{\Role[2]..\Role[n]}{\PT'} $ and $ \PAcc{\Role[i]}{\PT_i'} $ for all $ \Role[2] \leq \Role[i] \leq \Role[n] $ with $ \Role \neq \Role[i] $ that are guarded only by conditionals or prefixes on channels different from \Chan[a].
		\item $ \PSend{\Role}{\Role'}{\Label}{\tilde{\expr}}{\PT'} $.
			By Figure~\ref{fig:typingRules} and $ \Gamma \vdash \PT \triangleright \Delta $, then $ \LLoc{\Role}{\LSend{\Role'}{\Set{ \LLab{\Label_j}{\tilde{\Sort}_j}{\LT_j'} }_{j \in \indexSet[J]}}} \in \Delta $.
			By coherence, then there is some \LT such that $ \LLoc{\Role'}{\LT} \in \Delta $ and $ \LT $ has a sub-type $ \LGet{\Role}{\Set{ \LLab{\Label_j}{\tilde{\Sort}_j}{\LT_j''} }_{j \in \indexSet[J]}} $.
			By Lemma~\ref{lem:actorsAreSequential} and Figure~\ref{fig:typingRules}, then \PT contains also $ \PGet{\Role}{\Role'}{\Set{ \PLab{\Label_i}{\tilde{\Args}_i}{\PT_i'} }_{i \in \indexSet}} $ with $ j \in \indexSet $ in parallel with the sender.
		\item $ \PGet{\Role}{\Role'}{\Set{ \PLab{\Label_i}{\tilde{\Args}_i}{\PT_i'} }_{i \in \indexSet}} $.
			By Figure~\ref{fig:typingRules} and $ \Gamma \vdash \PT \triangleright \Delta $, then there is some $ \indexSet[J] \subseteq \indexSet[I] $ such that $ \LLoc{\Role}{\LGet{\Role'}{\Set{ \LLab{\Label_j}{\tilde{\Sort}_j}{\LT_j'} }_{j \in \indexSet[J]}}} \in \Delta $.
			By coherence, then there is some \LT such that $ \LLoc{\Role'}{\LT} \in \Delta $ and $ \LT $ has a sub-type $ \LSend{\Role}{\Set{ \LLab{\Label_j}{\tilde{\Sort}_j}{\LT_j'} }_{j \in \indexSet[J]}} $.
			By Lemma~\ref{lem:actorsAreSequential} and Figure~\ref{fig:typingRules}, then \PT also contains $ \PSend{\Role}{\Role'}{\Label_j}{\tilde{\expr}}{\PT'} $ with $ j \in \indexSet $ in parallel with the receiver.
	\end{enumerate}
\end{proof}

Progress, \ie the absence of local deadlocks, is a simple consequence of the above results.

\begin{lemma}[Progress]
	\label{lem:progress}
	If $ \PT $ is well-typed and $ \PT \Steps \PT' $ then either $ \PT' \equiv \PEnd $ or there is some $ \PT'' $ such that $ \PT' \Step \PT'' $.
\end{lemma}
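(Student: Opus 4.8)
The plan is to reduce the statement to a property of $\PT'$ alone and then perform a case analysis on an unguarded prefix. Since $\PT \Steps \PT'$ unfolds into a finite sequence $\PT \Step \cdots \Step \PT'$ of single reductions, I would first apply Lemma~\ref{lem:subjectReduction} repeatedly along this sequence, together with Lemma~\ref{lem:roleDist} (preservation of role-distributedness) and---in the interleaved case---preservation of global progress by the interaction type system, to conclude that $\PT'$ is again well-typed. In particular $\Gamma \vdash \PT' \triangleright \Delta'$ with $\Delta'$ coherent, so that the Lemmata~\ref{lem:actorsAreSequential}, \ref{lem:linearity}, and \ref{lem:errorFreedom} all apply to $\PT'$. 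It then suffices to show that any well-typed $\PT'$ with $\PT' \not\equiv \PEnd$ can take a step.

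Next I would put $\PT'$ into a normal form using structural congruence, which preserves typing by Lemma~\ref{lem:structuralCongruence}: hoist all restrictions $\PRes{\Chan}{\cdot}$ to the outside, flatten the parallel composition by associativity and commutativity, and unfold each unguarded recursion via the congruence rule for $\PRep{\PVar}{\PT}$. Because process variables are required to be guarded by communication prefixes, this unfolding terminates and exposes, in every parallel component, either $\PEnd$ or exactly one unguarded prefix that is a conditional, a session request or accept, a send, or a receive (by Lemma~\ref{lem:actorsAreSequential} each actor contributes at most one such prefix per session). If $\PT' \not\equiv \PEnd$, at least one component carries a non-$\PEnd$ prefix.

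The case analysis is immediate for conditionals: an unguarded $\PCond{\cond}{\PT_1}{\PT_2}$ reduces by \textsf{If-T} or \textsf{If-F} regardless of $\cond$, so $\PT' \Step \PT''$. When all unguarded prefixes are communication prefixes I would invoke Lemma~\ref{lem:errorFreedom}: every unguarded send, receive, or request has its matching partner(s) present in parallel branches, at worst guarded by conditionals or by prefixes on other channels. If any such partner is guarded by a conditional, that conditional fires and we are done; otherwise we must exhibit a matching pair that is \emph{simultaneously} unguarded, so that \textsf{Com} (or \textsf{Link} for a complete request/accept group) applies.

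This last point is the crux and the main obstacle, namely ruling out a cycle of mutually blocked actors. For a single session it cannot arise: by coherence, the part of $\Delta'$ belonging to the session channel equals the family of projections $\Proj{\GT}{\Role}$ of one residual global type $\GT$, and the top operator of $\GT$ fixes the next interaction. If $\GT = \GCom{\Role_1}{\Role_2}{\ldots}$, then role~$\Role_1$ has type $\LSend{\Role_2}{\ldots}$ and role~$\Role_2$ has type $\LGet{\Role_1}{\ldots}$; since there is only this one session, neither actor can be stuck on a prefix of a different session, so by the typing rules both stand at their send/receive prefix up to intervening conditionals, and after firing those we obtain a \textsf{Com}-redex. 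The cases $\GT = \GPar{\GT_1}{\GT_2}$, $\GRec{\TVar}{\GT}$, and $\TVar$ pass to a sub-type by projection or unfolding, while $\GT = \GEnd$ forces $\PT' \equiv \PEnd$. For interleaved sessions the same conclusion follows from global progress, which guarantees acyclicity of the inter-session dependency relation, so a dependency-minimal blocked prefix necessarily has an unguarded partner and a \textsf{Com}- or \textsf{Link}-redex is enabled. I expect the delicate bookkeeping to be exactly the step from \emph{the partner exists} (Lemma~\ref{lem:errorFreedom}) to \emph{a minimal partner is unguarded}, which I would organise along the order induced by the residual global type together with the dependency order supplied by global progress.
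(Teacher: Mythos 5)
Your proposal is correct and takes essentially the same route as the paper's proof: iterated subject reduction (Lemma~\ref{lem:subjectReduction}, with Lemma~\ref{lem:roleDist}) to re-establish well-typedness of $ \PT' $, then---unless the residual session environment is empty, which forces $ \PT' \equiv \PEnd $---locating an action prefix guarded only by conditionals, invoking Lemma~\ref{lem:errorFreedom} for its matching partner, resolving the guarding conditionals, and firing the resulting \textsf{Com}/\textsf{Link} redex. The one point where you go beyond the paper is what you call the crux: the passage from ``the partner exists'' to ``a dependency-minimal blocked prefix has a partner that is unguarded up to conditionals'' (via the residual global type in the single-session case and the acyclicity supplied by global progress in the interleaved case) is asserted in the paper's proof without argument (``by Figure~\ref{fig:typingRules}\ldots guarded only by conditionals''), so your version supplies detail the paper leaves implicit rather than following a different strategy.
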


\begin{proof}
	Assume $ \Gamma \vdash \PT \triangleright \Delta $, $ \Delta $ is coherent, and $ \PT \Steps \PT' $.
	By Lemma~\ref{lem:subjectReduction}, then there is some $ \Delta' $ such that $ \Gamma \vdash \PT' \triangleright \Delta' $, $ \Delta' $ is coherent, and $ \Delta \Steps \Delta' $.
	If $ \Delta' = \emptyset $ then, by Figure~\ref{fig:typingRules}, $ \PT' \equiv \PEnd $.
	Else $ \Delta' $ contains an assignment of the form $ \LInv{a}{r} $ or of the form $ \LLoc{r}{\LT} $.
	By Figure~\ref{fig:typingRules}, then $ \PT $ contains a corresponding $ \PReq{\Role[2]..\Role[n]}{\PT[Q]} $, $ \PAcc{\Role}{\PT[Q]} $, $ \PSend{\Role}{\Role'}{\Label}{\tilde{\expr}}{\PT[Q]} $, or $ \PGet{\Role}{\Role'}{\Set{ \PLab{\Label_i}{\tilde{\Args}_i}{\PT[Q]_i'} }_{i \in \indexSet}} $ that is guarded only by conditionals.
	Let $ \PT_1 $ be a process such that the sequence $ \PT' \Steps \PT_1 $ resolves these conditionals and unguards the respective action prefix.
	By Lemma~\ref{lem:errorFreedom}, then there is a matching communication partner that is guarded only by conditionals.
	Let $ \PT_2 $ be a process such that the sequence $ \PT_1 \Steps \PT_2 $ resolves these conditionals and unguards the respective action prefix of the communication partner.
	Then $ \PT_2 \Step \PT'' $ is the step that reduces the respective communication.
\end{proof}

%%%%%%%%%%%%%%%%%%%%%%%%%%%%%%%%%%%%
%% Processes versus SGP-Processes %%
%%%%%%%%%%%%%%%%%%%%%%%%%%%%%%%%%%%%

\section{Processes versus \SGP-Processes}
\label{sec:proofs}

In \cite{petersWagnerNestmann19} we introduce two algorithms.
The first allows to map systems that are well-formed \wrt to a synchronous global type for the case of a single session.

\begin{definition}
	\label{def:algorithm}
	The partial mapping \MapSGP{\Set{ \PT_i }_{i \in \indexSet}}{\GT} is defined inductively as:
	\begin{enumerate}
		\item \SEnd, \hfill if $ \GT = \GEnd $ \label{algo:GEnd}
		\item $ \SVar_{\TVar} $, \hfill else if $ \GT = \TVar $ \label{algo:TVar}
		\item $ \MapSGP{\Set{ \PT_j' } \cup \Set{ \PT_i }_{i \in \indexSet \setminus \Set{ j }}}{\GT} $, \newline \hphantom{h} \hfill else if there is some $ j \in \indexSet $ such that $ \PT_j = \PRes{\Chan}{\PT_j'} $ \label{algo:PRes}
		\item $ \MapSGP{\Set{ \PT_{j1}, \PT_{j2} } \cup \Set{ \PT_i }_{i \in \indexSet \setminus \Set{ j }}}{\GT} $, \newline \hphantom{h} \hfill else if there is some $ j \in \indexSet $ such that $ \PT_j = \PPar{\PT_{j1}}{\PT_{j2}} $ \label{algo:PPar}
		\item $ \MapSGP{\Set{ \PT_j'\Set{ \Subst{\PRep{\PVar}{\PT_j'}}{\PVar} } } \cup \Set{ \PT_i }_{i \in \indexSet \setminus \Set{ j }}}{\GT} $, \newline \hphantom{h} \hfill else if there is $ j \in \indexSet $ such that $ \PT_j = \PRep{\PVar}{\PT_j'} $ \label{algo:PRep}
		\item $ \SAssign{\tilde{\Args}_m@\Actor{s}{r}}{\tilde{\expr}}{\MapSGP{\Set{ \PT[Q]_m\Set{ \Subst{\tilde{\Args}_m@\Actor{s}{r}}{\tilde{\Args}_m} }, \PT[Q] } \cup \Set{ \PT_i }_{i \in \indexSet \setminus \Set{ k, l }}}{\GT_m}} $, \newline \hphantom{h} \hfill else if there are $ k, l \in \indexSet $, $ m \in \indexSet[J] \subseteq \indexSet[J]' $ such that \newline \hphantom{h} \hfill $ \GT = \GCom{\Role_1}{\Role_2}{\Set{ \GLab{\Label_j}{\tilde{\Sort}_j}{\GT_j} }_{j \in \indexSet[J]}} $, $ \PT_k = \PSend{\Role_1}{\Role_2}{\Label_m}{\tilde{\expr}}{\PT[Q]} $, \newline \hphantom{h} \hfill and $ \PT_l = \PGet{\Role_2}{\Role_1}{\Set{ \PLab{\Label_j}{\tilde{\Args}_j}{\PT[Q]_j} }_{j \in \indexSet[J]'}} $ \label{algo:GCom}
		\item $ \SPar{\MapSGP{\Set{ \PT_i }_{i \in \indexSet_1}}{\GT_1}}{\MapSGP{\Set{ \PT_j }_{j \in \indexSet_2}}{\GT_2}} $, \newline \hphantom{h} \hfill else if there are some $ \indexSet_1 \cup \indexSet_2 = \indexSet $ such that $ \GT = \GPar{\GT_1}{\GT_2} $, \newline \hphantom{h} \hfill $ \bigcup_{i \in \indexSet_1} \Roles{\PT_i} = \Roles{\GT_1} $, and $ \bigcup_{j \in \indexSet_2} \Roles{\PT_j} = \Roles{\GT_2} $ \label{algo:GPar}
		\item $ \SRep{\SVar_{\TVar}}{\MapSGP{\Set{ \PT_i }_{i \in \indexSet}}{\GT'}} $, \hfill else if $ \GT = \GRec{\TVar}{\GT'} $ \label{algo:GRec}
		\item $ \tau.\MapSGP{\Set{ \PT_1', \ldots, \PT_n' }}{\GT} $, \newline \hphantom{h} \hfill else if $ \Set{ \PT_i }_{i \in \indexSet} = \Set{ \PReq{\Role[2]..\Role[n]}{\PT_1'}, \PAcc{\Role[2]}{\PT_2'}, \ldots, \PAcc{\Role[n]}{\PT_n'} } $ \label{algo:PReq}
		\item $ \SCase{\cond}{\MapSGP{\Set{ \PT_{j1} } \cup \Set{ \PT_i }_{i \in \indexSet \setminus \Set{ j }}}{\GT}}{\MapSGP{\Set{ \PT_{j2} } \cup \Set{ \PT_i }_{i \in \indexSet \setminus \Set{ j }}}{\GT}} $, \newline \hphantom{h} \hfill else if there is some $ j \in \indexSet $ such that $ \PT_j = \PCond{\cond}{\PT_{j1}}{\PT_{j2}} $ \label{algo:PCond}
	\end{enumerate}
\end{definition}

The second mapping extends the first to asynchronous session types and multiple sessions.

\begin{definition}
	\label{def:algorithm2}
	\MapSGPI{\Set{ \PT_i }_{i \in \indexSet}}{\Set{ \left( \GT_j, \Chan_j \right) }_{j \in \indexSet[J]}} is defined inductively as:
	\vspace{-0.5em}
\begin{enumerate}
	\item
	\begin{enumerate}
		\item \SEnd, \hfill if $ \indexSet[J] = \emptyset $ \label{algo2:Empty}
		\item \MapSGPI{\Set{ \PT_i }_{i \in \indexSet}}{\Set{ \left( \GT_j, \Chan_j \right) }_{j \in \indexSet[J] \setminus \Set{ k }}}, \newline \hphantom{h} \hfill else if there is some $ k \in \indexSet[J] $ such that $ \GT_k = \GEnd $ \label{algo2:GEnd}
	\end{enumerate}
	\item $ \SVar_{\mathcal{G}} $, \hfill else if $ \mathcal{G} = \Set{ \GT_j }_{j \in \indexSet[J]} = \Set{ \TVar_j }_{j \in \indexSet[J]} $ \label{algo2:TVar}
	\item $ \MapSGPI{\Set{ \PT_k' } \cup \Set{ \PT_i }_{i \in \indexSet \setminus \Set{ k }}}{\Set{ \left( \GT_j, \Chan_j \right) }_{j \in \indexSet[J]}} $, \newline \hphantom{h} \hfill else if there is $ k \in \indexSet $ such that $ \PT_k = \PRes{\Chan}{\PT_k'} $ \label{algo2:PRes}
	\item $ \MapSGPI{\Set{ \PT_{k1}, \PT_{k2} } \cup \Set{ \PT_i }_{i \in \indexSet \setminus \Set{ k }}}{\Set{ \left( \GT_j, \Chan_j \right)}_{j \in \indexSet[J]}} $, \newline \hphantom{s} \hfill else if there is some $ k \in \indexSet $ such that $ \PT_k = \PPar{\PT_{k1}}{\PT_{k2}} $ \label{algo2:PPar}
	\item $ \MapSGPI{\Set{ \PT_k'\Set{ \Subst{\PRep{\PVar}{\PT_k'}}{\PVar} } } \cup \Set{ \PT_i }_{i \in \indexSet \setminus \Set{ k }}}{\Set{ \left( \GT_j, \Chan_j \right) }_{j \in \indexSet[J]}} $, \newline \hphantom{s} \hfill else if there is $ k \in \indexSet $ such that $ \PT_k = \PRep{\PVar}{\PT_k'} $ \label{algo2:PRep}
	\item $ \SAssign{\tilde{\Args}_n@\Actor{s}{r}}{\tilde{\expr}}{\MapSGPI{\mathcal{P}}{\mathcal{G}}} $ with $ \mathcal{P} = \Set{ \PT[Q]_n\Set{ \Subst{\tilde{\Args}_n@\Actor{s}{r}}{\tilde{\Args}_n} }, \PT[Q] } \cup \Set{ \PT_i }_{i \in \indexSet \setminus \Set{ m, o }} $ and $ \mathcal{G} = \Set{ \left( \GT_{l, n} \right) } \cup \Set{ \left( \GT_j, \Chan_j \right) }_{j \in \indexSet[j] \setminus \Set{l }} $, \newline \hphantom{h} \hfill else if there are $ m, o \in \indexSet $, $ l \in \indexSet[J] $, $ n \in \indexSet[K] \subseteq \indexSet[K]' $ such that \newline \hphantom{h} \hfill $ \GT_l = \GCom{\Role_1}{\Role_2}{\Set{ \GLab{\Label_k}{\tilde{\Sort}_k}{\GT_{l, k}} }_{k \in \indexSet[K]}} $, $ \PT_m = \PSend{\Role_1}{\Role_2}{\Label_n}{\tilde{\expr}}{\PT[Q]} $, \newline \hphantom{h} \hfill and $ \PT_o = \PGet{\Role_2}{\Role_1}{\Set{ \PLab{\Label_k}{\tilde{\Args}_k}{\PT[Q]_k} }_{k \in \indexSet[K]'}} $ \label{algo2:GCom}
	\item
	\begin{enumerate}
		\item $ \SPar{\MapSGPI{\Set{ \PT_i }_{i \in \indexSet_1}}{\Set{ \left( \GT_j, \Chan_j \right) }_{j \in \indexSet[J]_1}}}{\MapSGPI{\Set{ \PT_i }_{i \in \indexSet_2}}{\Set{ \left( \GT_j, \Chan_j \right) }_{j \in \indexSet[J]_2}}} $, \newline \hphantom{s} \hfill else if there are some $ \indexSet_1 \cup \indexSet_2 = \indexSet $, $ \indexSet[J]_1 \cup \indexSet[J]_2 = \indexSet[J] $ such that $ \indexSet[J]_1 \cap \indexSet[J]_2 = \emptyset $ and \newline \hphantom{h} \hfill $ \bigcup_{i \in \indexSet_k} \Actors{\PT_i} = \Set{ \Actor{s_j}{r} \mid j \in \indexSet[J]_k \land \Role \in \Roles{\GT_j} } $ for $ k \in \Set{ 1, 2 } $ \label{algo2:split}
		\item $ \MapSGPI{\Set{ \PT_i }_{i \in \indexSet}}{\Set{ \left( \GT_{k1}, \Chan_k \right), \left( \GT_{k2}, \Chan_k \right) } \cup \Set{ \left( \GT_j, \Chan_j \right) }_{j \in \indexSet[J]}} $, \newline \hphantom{s} \hfill else if there is $ k \in \indexSet[J] $ such that $ \GT_k = \GPar{\GT_{k1}}{\GT_{k2}} $ \label{algo2:GPar}
	\end{enumerate}
	\item $ \SRep{\SVar_{\mathcal{G}}}{\MapSGPI{\Set{ \PT_i }_{i \in \indexSet}}{\Set{ \left( \GT_j', \Chan_j \right) }_{j \in \indexSet[J]}}} $, \newline \hphantom{s} \hfill else if $ \GT_j = \GRec{\TVar_j}{\GT_j'} $ for all $ j \in \indexSet[J] $ and $ \mathcal{G} = \Set{ \TVar_j }_{j \in \indexSet[J]} $ \label{algo2:GRec}
	\item $ \tau.\MapSGPI{\Set{ \PT_1', \ldots, \PT_n' } \cup \Set{ \PT_i }_{i \in \indexSet \setminus \Set{ k1, \ldots, kn }}}{\Set{ \left( \GT_j, \Chan_j \right) }_{j \in \indexSet[J]}} $, \newline \hphantom{s} \hfill else if there are $ k1, \ldots, kn \in \indexSet $ such that $ \PT_{k1} = \PReq{\Role[2]..\Role[n]}{\PT_1'} $, \newline \hphantom{h} \hfill $ \PT_{k2} = \PAcc{\Role[2]}{\PT_2'} $, \ldots, $ \PT_{kn} = \PAcc{\Role[n]}{\PT_n'} $ \label{algo2:PReq}
	\item $ \SCase{\cond}{\MapSGPI{\Set{ \PT_{k1} } \cup \mathcal{P}}{\mathcal{G}}}{\MapSGPI{\Set{ \PT_{k2} } \cup \mathcal{P}}{\mathcal{G}}} $ \newline with $ \mathcal{P} = \Set{ \PT_i }_{i \in \indexSet \setminus \Set{ k }} $ and $ \mathcal{G} = \Set{ \left( \GT_j, \Chan_j \right) }_{j \in \indexSet[J]} $, \newline \hphantom{s} \hfill else if there is $ k \in \indexSet $ such that $ \PT_k = \PCond{\cond}{\PT_{k1}}{\PT_{k2}} $ \label{algo2:PCond}
\end{enumerate}
\end{definition}

We observe that each of the Cases~\ref{algo2:Empty}, \ref{algo2:GEnd}, \ref{algo2:TVar}--\ref{algo2:PPar}, \ref{algo2:GCom}, \ref{algo2:GPar}, and \ref{algo2:GRec}--\ref{algo2:PReq} reduces either the set of global types or the considered set of processes.
By unfolding recursion, Case~\ref{algo2:PRep} blows up one of the considered processes.
This is necessary, because the typing system allows that a process and its global type do not loop at the same points but in the same way.
Because of that, the global type $ \GRec{\TVar}{\left( \GCom{\Role_1}{\Role_2}{\GLab{\Label}{\Sort}{\GCom{\Role_2}{\Role_1}{\GLab{\Label}{\Sort}{\TVar}}}} \right)} $ can \eg be implemented for role~$ \Role_1 $ by the process $ \PReq{\Role_2}{\PSend{\Role_1}{\Role_2}{\Label}{\Args[z]}{\PRep{\PVar}{\left( \PGet{\Role_1}{\Role_2}{\PLab{\Label}{\Args}{\PSend{\Role_1}{\Role_2}{\Label}{\Args[z]}{\PVar}}} \right)}}} $.
Case~\ref{algo2:PRep} allows to unfold recursion in processes until the current recursive set of global types is reduced to recursion variables in Case~\ref{algo2:TVar}.
Note that Case~\ref{algo2:TVar} drops the remainder of the process as soon as the loops in the global types are reduced.
With that, the number of unfoldings of recursion in processes in Case~\ref{algo2:PRep} that is necessary to compute \MapSGPI{\Set{ \PT }}{\Set{ \left( \GT_j, \Chan_j \right) }_{j \in \indexSet[J]}} is bounded by the size of the loops in the global types.

Case~\ref{algo2:split} introduces a parallel composition in the \SGP-process if the considered sets of processes can be partitioned into two sets that implement the actors of different sessions.
This case can be applied if we can split the set of sessions into two disjoint sets such that there are no dependencies between the sessions in different sets.
Since the number of sessions is bounded, where session invitations under recursion introduce only a single session \wrt well-typedness, also the number of applications of Case~\ref{algo2:split} that are necessary to compute \MapSGPI{\Set{ \PT }}{\Set{ \left( \GT_j, \Chan_j \right) }_{j \in \indexSet[J]}} is bounded.

Case~\ref{algo2:PCond} globalizes conditionals and therefore copies all other actors to both cases.
Because of that, we apply this case only if it is necessary to unguard a communication partner.
In well-typed systems both cases of a conditional need to follow the same type, \ie need to implement the same communication structure.
Conditionals are a local form of branching, \ie implement alternative behaviours of a single actor.
Only by transmitting information about the outcome of a conditional as sender in a communication can a local conditional influence the remaining actors of the system.
Accordingly, conditionals are usually used to choose between alternatives directly before sending in process implementations (compare to Example~\ref{exa:implementationAuctioneer}) or process implementations can easily be optimized to satisfy this property.
By design, the algorithms in Definition~\ref{def:algorithm} and \ref{def:algorithm2} map such a local conditional only if this is necessary to unguard a communication partner.
Since we give precedence to conditionals that guard senders, we indeed map only necessary conditionals.
Nonetheless, since Case~\ref{algo2:PCond} copies the remaining processes and the type(s), we cannot avoid a blow-up of the size of the generated system in this case that is in the worst case exponentially larger than the original system.
However, if a conditional guides the choice between different branches of an immediately following send-action, then all of the copied actors that are input guarded in the global type will reduce to different cases for the respective two branches.
This explains why the size of our toy example does not grow then we map the conditionals in Example~\ref{exa:implementationAuctioneer} to the \SGP-process in Example~\ref{exa:SGPAuctionieer} in Section~\ref{sec:toy-example}.
The copies of processes in Case~\ref{algo2:PCond} increase the size of the resulting \SGP-process---in comparison to the original system---only with respect to conditionals that do not implement a choice between different labels of a sender or with respect to actors that are in their next step not influenced by the outcome of this conditional as in the type.
\begin{example}
	As example consider the process
	\begin{align*}
		\PT ={}
		& \PReq{2,3,4}{\PCond{\Args > 5}{\PSend{1}{2}{\Label[y]}{42}{\PEnd}}{\PSend{1}{2}{\Label[n]}{0}{\PEnd}}}\\
		\mid \; & \PAcc{2}{\PGet{2}{1}{\Set{ \PLab{\Label[y]}{\Args}{\PEnd}, \quad \PLab{\Label[n]}{\Args}{\PEnd} }}}\\
		\mid \; & \PAcc{3}{\PCond{\Args > 5}{\PSend{3}{4}{\Label[y]}{42}{\PEnd}}{\PSend{3}{4}{\Label[n]}{0}{\PEnd}}}\\
		\mid \; & \PAcc{4}{\PGet{4}{3}{\Set{ \PLab{\Label[y]}{\Args}{\PEnd}, \quad \PLab{\Label[n]}{\Args}{\PEnd} }}}
	\end{align*}
	that is well-typed with respect to
	\begin{align*}
		\GT = \GCom{1}{2}{\begin{array}[t]{l}
				\{ \GLab{\Label[y]}{\Sort[Int]}{\GCom{3}{4}{\Set{\GLab{\Label[y]}{\Sort[Int]}{\GEnd},\quad \GLab{\Label[n]}{\Sort[Int]}{\GEnd}}}},\\
				\hphantom{\{} \GLab{\Label[n]}{\Sort[Int]}{\GCom{3}{4}{\Set{\GLab{\Label[y]}{\Sort[Int]}{\GEnd},\quad \GLab{\Label[n]}{\Sort[Int]}{\GEnd}}}} \}
			\end{array}}
	\end{align*}
	Since the actors $ \Actor{\Chan[s]}{3} $ and $ \Actor{\Chan[s]}{4} $ act independent of the outcome of the conditional of $ \Actor{\Chan[s]}{1} $, the algorithm in Definition~\ref{def:algorithm} copies the translation of the actors $ \Actor{\Chan[s]}{3} $ and $ \Actor{\Chan[s]}{4} $:
	\begin{align*}
		\MapSGP{\PT}{\GT} = \tau.\mathsf{if} \; \Args_{1} > 5
			\begin{array}[t]{l}
				\mathsf{then} \; \SAssign{\Args_2}{42}{\SCase{\Args_3 > 5}{\SAssign{\Args_4}{42}{\SEnd}}{\SAssign{\Args_4}{0}{\SEnd}}}\\
				\mathsf{else} \; \SAssign{\Args_2}{0}{\SCase{\Args_3 > 5}{\SAssign{\Args_4}{42}{\SEnd}}{\SAssign{\Args_4}{0}{\SEnd}}}
			\end{array}
	\end{align*}
\end{example}

However, we observe that in this case the duplication of the behaviour of the actors $ \Actor{\Chan[s]}{3} $ and $ \Actor{\Chan[s]}{4} $ is already visible in the type.
So, if conditionals are used only to guide the choice between labels of an immediately following send-action, then again the corresponding increase of the size of the system in the algorithm is bounded by the size of the global types.

Finally, we observe that the size of a well-typed process is larger or equal to the sum of the sizes of its global types.
We conclude that---except for the conditionals---the algorithm in Definition~\ref{def:algorithm2}, takes a linear amount of steps and, thus, constructs a \SGP-process of a size that is linear \wrt the size of the original system.

\begin{corollary}
	\label{col:size}
	Let \PT be well-typed \wrt $ \Set{ \left( \GT_j, \Chan_j \right) }_{j \in \indexSet[J]} $.
	Assume that \PT uses conditionals only to branch between alternative labels of a sender.
	Then the computation of the \SGP-process \MapSGPI{\Set{ \PT }}{\Set{ \left( \GT_j, \Chan_j \right) }_{j \in \indexSet[J]}} is linear in the size of \PT combined with the sum of the sizes of the types in $ \Set{ \left( \GT_j, \Chan_j \right) }_{j \in \indexSet[J]} $ and produces a \SGP-process that is linear in this size.
\end{corollary}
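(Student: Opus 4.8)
The plan is to turn the informal accounting that precedes the statement into a rigorous amortised analysis. I would first fix a notion of size: for a process let $|\PT|$ be its number of syntactic nodes, for a global type let $|\GT|$ be its syntactic size where a recursion node $\GRec{\TVar}{\GT'}$ contributes $1 + |\GT'|$ (so recursion is counted unfolded at most once), and for a configuration $(\mathcal{P},\mathcal{G})$ processed by $\MapSGPI{\cdot}{\cdot}$ set $\Phi(\mathcal{P},\mathcal{G}) = \sum_{\PT \in \mathcal{P}} |\PT| + \sum_{(\GT,\Chan) \in \mathcal{G}} |\GT|$. The goal is to bound both the number of recursive calls and the output size by $O(\Phi)$ evaluated on the initial configuration, and then to invoke the observation stated just before the corollary, namely $\sum_j |\GT_j| \le |\PT|$, so that $\Phi$ of the initial configuration is $O(|\PT|)$ and the two bounds coincide.

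Next I would classify the cases of Definition~\ref{def:algorithm2} by their effect on $\Phi$ and charge each recursive call to a node of a global type or of the original process. Cases~\ref{algo2:Empty}, \ref{algo2:GEnd} and \ref{algo2:TVar} either terminate or delete a type and strictly decrease $\Phi$; cases~\ref{algo2:PRes}, \ref{algo2:PPar} and \ref{algo2:PReq} consume a syntactic construct of a process and strictly decrease $\sum |\PT|$; case~\ref{algo2:GCom} removes one communication prefix from a global type together with the matching send/receive prefixes from two processes, strictly decreasing $\Phi$; case~\ref{algo2:GRec} descends under one recursion node, replacing $\GRec{\TVar}{\GT'}$ by $\GT'$ and decreasing $|\GT|$; and cases~\ref{algo2:split} and \ref{algo2:GPar} partition or decompose the type set while conserving the total $\Phi$-charge across the two resulting subproblems, so that, using that the number of sessions is bounded (session invitations under recursion yield a single session by well-typedness), they occur only $O(\Phi)$ times in the whole recursion tree. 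Each such call can be charged to the consumed node, so along any chain that uses neither \ref{algo2:PRep} nor \ref{algo2:PCond} the number of calls is $O(\Phi)$.

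The two remaining cases are the crux. For the unfolding case~\ref{algo2:PRep} I would argue, following the text, that unfolding a process recursion only makes progress while the matching global types still contain communications of the current loop body; once those are exhausted the types reach a tuple of recursion variables and case~\ref{algo2:TVar} fires, dropping the remaining process and emitting $\SVar_{\mathcal{G}}$. Hence the number of unfoldings of any one process recursion is bounded by the number of communication nodes in the corresponding loop body of the global types, so the total work and copied material charged to \ref{algo2:PRep} is $O(\sum_j |\GT_j|)$; this is where one must carefully match the possibly differently placed recursion of the process to the single traversal of the type body, using Lemma~\ref{lem:actorsAreSequential} to know that each actor contributes a single sequential local behaviour. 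For the conditional case~\ref{algo2:PCond} I would use the hypothesis that every conditional only branches between alternative labels of an immediately following send: then the sender's two branches correspond to two distinct branches of a single communication node of the global type, and every actor copied into the two cases is, in its next step, input-guarded on exactly that communication in the type, so each copy is consumed against a distinct branch of that already-present type node rather than being duplicated further. Thus the copies produced by \ref{algo2:PCond} are charged to the distinct label-branches of the corresponding communication node, keeping the blow-up additive in $\sum_j |\GT_j|$ rather than multiplicative.

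Finally I would combine the charges: every recursive call is charged $O(1)$ to a node of a global type or of the original process, each node is charged a bounded number of times, and each call emits at most one \SGP-constructor (one of $\SEnd$, $\SVar_{\mathcal{G}}$, an assignment, a $\tau$, an $\mathsf{if}$, a parallel composition, or a $\SRep$). Hence both the number of steps and the size of $\MapSGPI{\Set{ \PT }}{\Set{ \left( \GT_j, \Chan_j \right) }_{j \in \indexSet[J]}}$ are $O(|\PT| + \sum_j |\GT_j|)$, which by $\sum_j |\GT_j| \le |\PT|$ is linear. I expect the main obstacle to be making the charging for case~\ref{algo2:PCond} precise: one has to prove that under the sender-label hypothesis the globalised conditional is always resolved against a branching already present in the type, so that the two copies land in disjoint branches and never compound across successive conditionals. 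The informal claim that the copied actors ``reduce to different cases for the respective two branches'' needs a precise invariant relating the position reached in the process to the position reached in the global type, and establishing that invariant is where the real effort lies.
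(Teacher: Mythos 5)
Your proposal is correct and takes essentially the same route as the paper: the paper's own justification of this corollary is exactly the informal accounting that precedes it (every case except process-recursion unfolding and conditionals consumes a node of the process or of the types, the unfoldings in the recursion case are bounded by the size of the loops in the global types, the copies made when globalizing conditionals are---under the sender-label hypothesis---already visible in the branching of the types, and the size of a well-typed process dominates the sum of the sizes of its global types). You merely recast this accounting as a potential-function/amortised argument, which is if anything more rigorous than the paper's informal discussion, and you correctly identify the charging of the conditional case as the step that genuinely needs a precise invariant.
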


This is very important.
We map well-typed systems onto \SGP-systems in order to avoid the problem of state space explosion that is caused by the concurrency of communication attempts, \ie to avoid the in the worst case exponential blow-up of states that need to be considered in verification.
\MPST are a very efficient method to analyse the communication structure of the original system.
Corollary~\ref{col:size} ensures that also the computation of the \SGP-system is efficient, \ie fast, and that the construction does not suffer from the problem of state space explosion, \ie the generated \SGP-system is not considerably larger than the original system.
Since the construction sequentialises the original system and thereby removes all forms of interaction and restriction, the verification of the \SGP-abstraction is much easier than the verification of the original system.

For the remainder of this section we assume that \textbf{no alpha conversion is used to rename input binders}.

Before we prove Theorem~2 of \cite{petersWagnerNestmann19}, \ie that the mapping of Definition~\ref{def:algorithm2} returns a \SGP-process whenever it is applied on a well-typed process and its global types, we show some properties on the different cases of Definition~\ref{def:algorithm2} (and implicitly also Definition~\ref{def:algorithm}).
In particular we show that most of the cases preserve well-typedness, \ie if their input is well-typed then so are the inputs of its recursive calls.

The first case replaces the set of considered processes by \SEnd if the global type is terminated and removes empty global types.
This is safe, because processes that are well-typed \wrt \GEnd cannot contain communication prefixes.

\begin{lemma}[Case~\ref{algo2:Empty}]
	\label{lem:algo2Empty}
	If \PT is well-typed \wrt \emph{$ \emptyset $} then \PT contains only parallel compositions, conditionals, successful termination, and restriction.
\end{lemma}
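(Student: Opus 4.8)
The plan is to reduce the statement to the fact that a session environment coherent \wrt the empty set must itself be empty, and then to read off from Figure~\ref{fig:typingRules} which syntactic forms can be typed against an empty session environment.

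First I would unfold well-typedness \wrt $\emptyset$: there are $\Gamma, \Delta$ with $\Gamma \vdash \PT \triangleright \Delta$, where $\Delta$ is coherent \wrt $\emptyset$ and $\Gamma$ contains no global type assignment (the empty set being the set of global types in $\Gamma$). Coherence \wrt $\emptyset$ forbids witnesses on both of its conditions: it requires that \emph{every} session channel and \emph{every} shared channel occurring in $\Delta$ be justified by some index $j$ of an empty index set. Hence $\Delta$ can contain neither an assignment $\LLoc{\Role}{\LT}$ (which mentions a session channel) nor an obligation $\LInv{a}{r}$ (which mentions a shared channel), so, modulo the convention $\Delta, \LLoc{\Role}{\LEnd} = \Delta$, we get $\Delta = \emptyset$. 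It therefore suffices to prove the sharper statement: if $\Gamma \vdash \PT \triangleright \emptyset$ and $\Gamma$ contains no global type assignment, then $\PT$ is built only from parallel composition, conditional, $\PEnd$, and restriction.

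Next I would argue by induction on the derivation of $\Gamma \vdash \PT \triangleright \emptyset$, inspecting the last rule. The key observation is that each of the Rules~$\textsf{Req}$, $\textsf{Acc}$, and $\textsf{Res}$ has a conclusion of the shape $\Gamma, \GGlob{a}{G} \vdash \cdots$, i.e.\ requires a global type in the environment, which is excluded by hypothesis; and each of $\textsf{Send}$ and $\textsf{Get}$ (as well as $\textsf{Req}$ and $\textsf{Acc}$) has a conclusion whose session environment contains an obligation $\LInv{a}{r}$ or a non-terminated assignment, hence cannot equal $\emptyset$. Thus no session-invitation, communication, or restriction rule can be the last rule, which removes all of $\PReq$, $\PAcc$, $\PSend$, $\PGet$, and $\PRes$. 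The surviving rules are $\textsf{End}$, yielding $\PEnd$; $\textsf{Cond}$, yielding a conditional whose two premises again carry the empty environment and no global type, so that the induction hypothesis applies to both branches; $\textsf{Par}$, where $\Delta_1 \otimes \Delta_2 = \emptyset$ forces $\Delta_1 = \Delta_2 = \emptyset$, so the hypothesis applies to both components; and the recursion Rules~$\textsf{Rec}$ and $\textsf{Var}$. Restriction, although listed in the claim, is in fact excluded here too; keeping it in the statement only weakens the conclusion and so does no harm.

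The main obstacle is to discard the recursion Rules~$\textsf{Rec}$ and $\textsf{Var}$, and here I would invoke the global well-formedness assumption that every process variable is guarded by a communication prefix. Since the previous step shows that no communication rule fires anywhere in the derivation, and that the empty environment and the absence of global types are preserved upwards along $\textsf{End}$, $\textsf{Cond}$, $\textsf{Par}$, $\textsf{Rec}$, and $\textsf{Var}$, any binder $\PRep{\PVar}{\PT'}$ introduced by $\textsf{Rec}$ and any use of $\PVar$ typed by $\textsf{Var}$ are separated only by parallel compositions, conditionals, and further recursions, never by a communication prefix. An occurrence of $\PVar$ would therefore be unguarded, contradicting the assumption; hence $\textsf{Var}$ never applies, $\PVar$ does not occur, and a guarded recursion $\PRep{\PVar}{\PT'}$ cannot be present either. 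Consequently only $\textsf{End}$, $\textsf{Cond}$, and $\textsf{Par}$ remain, so $\PT$ consists solely of $\PEnd$, conditionals, and parallel compositions, which lies within the forms claimed by the lemma.
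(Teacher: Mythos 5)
Your proof is correct and takes essentially the same route as the paper's: it rules out \textsf{Req}, \textsf{Acc}, and \textsf{Res} via the absence of global types in $\Gamma$, rules out \textsf{Send} and \textsf{Get} via the session environment (which coherence \wrt $\emptyset$ forces to be empty), and then disposes of \textsf{Rec} and \textsf{Var} using the convention that process variables must be guarded by communication prefixes. Your explicit induction on the typing derivation and your observation that restriction is in fact also excluded (so the lemma's conclusion is weaker than what actually holds) are harmless refinements of the same argument.
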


\begin{proof}
	Assume that \PT is well-typed \wrt $ \emptyset $, \ie there are $ \Gamma, \Delta $ such that \PT is role-distributed, $ \Gamma \vdash \PT \triangleright \Delta $, there are no global types in $ \Gamma $, $ \Delta $ is coherent \wrt $ \emptyset $, and $ \Gamma \vdash \PT \triangleright \Delta $ is globally progressing.
	Since there are no global types in $ \Gamma $, the derivation of $ \Gamma \vdash \PT \triangleright \Delta $ cannot use the Rules~\textsf{Req} or \textsf{Acc} and, thus, \PT cannot contain communication on shared channels.
	Because the other typing rules of Figure~\ref{fig:typingRules} can only reduce local types, neither the Rule~\textsf{Send} nor Rule~\textsf{Get} can be used and, thus, \PT does not contain prefixes for sending or receiving within sessions.
	Because of that, \PT cannot contain process variables and, thus, no recursion.
\end{proof}

\begin{lemma}[Case~\ref{algo2:GEnd}]
	\label{lem:algo2GEnd}
	If \PT is well-typed \wrt $ \Set{ \left( \GT_j, \Chan_j \right) }_{j \in \indexSet[J]} $, $ k \in \indexSet[J] $, and \emph{$ \GT_k = \GEnd $} then \PT contains no communication prefixes on $ \Chan_k $ and cannot invite the session $ \Chan_k $.
\end{lemma}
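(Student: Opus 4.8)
The plan is to unfold the definition of well-typedness and then to track, through the entire typing derivation, exactly those assignments that concern the session $\Chan_k$. Since \PT is well-typed \wrt $\Set{ \left( \GT_j, \Chan_j \right) }_{j \in \indexSet[J]}$, there are $\Gamma, \Delta$ such that \PT is role-distributed, $\Gamma \vdash \PT \triangleright \Delta$, the set $\Delta$ is coherent \wrt $\Set{ \left( \GT_j, \Chan[n]_j \right) }_{j \in \indexSet[J]}$ for suitable names $\Chan[n]_j$, and each $\Chan[n]_j$ is either the session channel $\Chan_j$ itself or a shared channel that $\Gamma$ connects to $\Chan_j$. First I would record two elementary facts about $\GT_k = \GEnd$: by Definition~\ref{def:projection} we have $\Proj{\GEnd}{\Role[p]} = \LEnd$ for every role $\Role[p]$, and $\Roles{\GEnd} = \emptyset$. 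I would also observe that the session $\Chan_k$ is governed by a unique global type throughout: by the no-name-clash assumption on session channels together with the uniqueness conditions imposed on $\Gamma$ (at most one $\GGlobS{a}{G}{s}$ per session channel), the only global-environment entry that can open or govern $\Chan_k$ is the $\GEnd$ coming from the pair $\left( \GT_k, \Chan[n]_k \right)$.

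The core of the argument is an invariant, which I would prove by induction on the typing derivation of $\Gamma \vdash \PT \triangleright \Delta$: no session environment occurring in this derivation contains a local-type assignment $\LLoc{\Role[p]}{\LT}$ on the channel $\Chan_k$, nor an invitation obligation $\LInv{n_k}{r}$ (in the subcase where $\Chan[n]_k$ is the shared channel opening $\Chan_k$). For the base environment $\Delta$ this follows from coherence applied to $\left( \GT_k, \Chan[n]_k \right)$: it forces the set of $\Chan_k$-assignments to equal $\Set{ \LLoc{\Role[p]}{\Proj{\GEnd}{\Role[p]}} \mid \Role[p] \in \Roles{\GEnd} } = \emptyset$, and the set of invitation obligations for $\Chan[n]_k$ to be empty, since both are indexed by $\Roles{\GEnd} = \emptyset$. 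For preservation I would inspect the rules of Figure~\ref{fig:typingRules}: local-type assignments on a session channel are created only by the Rules~\textsf{Req}, \textsf{Acc}, and \textsf{Res}, and always as a projection $\Proj{\GT}{\Role[p]}$ of the global type that $\Gamma$ associates with that channel; for $\Chan_k$ this type is $\GEnd$, so only $\LEnd$ could be added, which is equated to the empty environment, and since $\Roles{\GEnd} = \emptyset$ in fact nothing is added. Dually, obligations $\LInv{n_k}{r}$ are only consumed by \textsf{Req}, \textsf{Acc}, \textsf{Res} and are never produced by any typing rule, so none can arise below $\Delta$.

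With the invariant in hand the conclusion is immediate. A communication prefix on $\Chan_k$ must be typed by Rule~\textsf{Send} or Rule~\textsf{Get}, each of which requires in its conclusion a matching assignment $\LLoc{\Role[p]}{\LT}$ on $\Chan_k$ with $\LT$ a sending or receiving local type; by the invariant no such assignment is ever available, so neither rule can fire and \PT contains no communication prefix on $\Chan_k$. Likewise an invitation $\PReq{\Role[2]..\Role[n]}{\PT'}$ or $\PAcc{\Role}{\PT'}$ opening $\Chan_k$ is typed by Rule~\textsf{Req} or \textsf{Acc}, consuming an obligation $\LInv{n_k}{r}$ that the invariant shows is never present; and in the remaining subcase, where $\Chan[n]_k = \Chan_k$ is already a session channel, no invitation of $\Chan_k$ is possible at all because re-opening it would violate the no-name-clash assumption. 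The step I expect to be the main obstacle is the preservation part of the invariant for Rule~\textsf{Res}: I must argue that opening a fresh session can never reinstate local types on $\Chan_k$, which rests on pinning the governing global type of $\Chan_k$ to $\GEnd$ via uniqueness and no-name-clashes, on the fact that $\GEnd$ both projects to $\LEnd$ and has an empty role set, and on the observation that the step relation $\Step$ on session environments in the premise of \textsf{Res} only reduces existing local types rather than creating new ones.
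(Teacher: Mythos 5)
Your proposal is correct and follows essentially the same route as the paper's proof: unfold well-typedness, use coherence together with $\Roles{\GEnd} = \emptyset$ to conclude that the session environment contains no assignments or invitation obligations for $\Chan_k$/$\Chan[n]_k$, and then observe that the typing rules (which only consume or reduce such entries, never create them for $\Chan_k$) rule out \textsf{Req}/\textsf{Acc} and \textsf{Send}/\textsf{Get} on that channel. Your rendering is merely more explicit than the paper's --- phrasing the tracking through the derivation as an induction invariant and spelling out the \textsf{Res} case, which the paper subsumes under the remark that the remaining rules ``can only reduce local types.''
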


\begin{proof}
	Assume that \PT is well-typed \wrt $ \Set{ \left( \GT_j, \Chan_j \right) }_{j \in \indexSet[J]} $, $ k \in \indexSet[J] $, and $ \GT_k = \GEnd $, \ie there are $ \Gamma, \Delta $ such that \PT is role-distributed, $ \Gamma \vdash \PT \triangleright \Delta $, $ \Set{ \GT_j }_{j \in \indexSet[J]} $ are the global types in $ \Gamma $, $ \Delta $ is coherent \wrt $ \Set{ \left( \GT_j, \Chan[n]_j \right) }_{j \in \indexSet[J]} $, $ \Gamma \vdash \PT \triangleright \Delta $ is globally progressing, and for all $ j \in \indexSet[J] $ either $ \Chan[n]_j = \Chan_j $ or $ \Gamma \vdash \PT \triangleright \Delta $ connects $ \Chan[n]_j $ with $ \Chan_j $.
	Because $ \Roles{\GEnd} = \emptyset $ and $ \Delta $ is coherent \wrt $ \Set{ \left( \GT_j, \Chan[n]_j \right) }_{j \in \indexSet[J]} $, the session environment does not contain the (shared or session) channel $ \Chan[n]_k $, \ie $ \Chan[n]_k \notin \Delta $.
	Then, the derivation of $ \Gamma \vdash \PT \triangleright \Delta $ cannot use the Rules~\textsf{Req} or \textsf{Acc} and, thus, \PT cannot contain communication on $ \Chan[n]_k $.
	Hence, the session $ \Chan_k $ cannot be invited.
	Because the other typing rules of Figure~\ref{fig:typingRules} can only reduce local types, neither the Rule~\textsf{Send} nor Rule~\textsf{Get} can be used and, thus, \PT does not contain prefixes for sending or receiving within the session $ \Chan_k $.
\end{proof}

Since \PT is well-typed \wrt \GEnd is a special case of the conditions \PT is well-typed \wrt $ \Set{ \left( \GT_j, \Chan_j \right) }_{j \in \indexSet[J]} $, $ k \in \indexSet[J] $, and $ \GT_k = \GEnd $, Lemma~\ref{lem:algo2GEnd} holds also for Case~\ref{algo:GEnd} of Definition~\ref{def:algorithm}.

Note that the restriction of session channels without communication is always useless, \ie can be removed modulo structural congruence.
The Cases~\ref{algo2:PRes}, \ref{algo2:PPar}, \ref{algo2:PRep}, and \ref{algo2:PCond} are used to decompose and unfold processes to make them accessible for the other cases.
In all of these cases all recursive calls of the mapping are on sets of processes that---combined by parallel composition---are well-typed to the former global types.
Thus, none of these cases allows the mapping to reduce the global types or to create any \SGP-operators except for conditionals that are not reflected in global types.
Instead they can be seen as preparation cases.

Case~\ref{algo2:PRes} removes restriction, but preserves well-typedness \wrt the same global types in its recursive call.

\begin{lemma}[Case~\ref{algo:PRes}]
	\label{lem:algo2PRes}
	If $ \prod_{i \in \indexSet}{\PT_i} $ is well-typed \wrt $ \Set{ \left( \GT_j, \Chan_j \right) }_{j \in \indexSet[J]} $ and there is some $ k \in \indexSet $ such that $ \PT_k = \PRes{\Chan}{\PT_k'} $ then $ \left( \PPar{\PT_k'}{\left( \prod_{i \in \indexSet \setminus \Set{ k }}{\PT_i} \right)} \right) $ is well-typed \wrt $ \Set{ \left( \GT_j, \Chan_j \right) }_{j \in \indexSet[J]} $.
\end{lemma}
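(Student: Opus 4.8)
The plan is to rebuild a typing derivation for $\PPar{\PT_k'}{(\prod_{i \in \indexSet \setminus \Set{k}} \PT_i)}$ by surgery on the derivation witnessing that $\prod_{i \in \indexSet} \PT_i$ is well-typed, keeping the global types $\Set{\GT_j}_{j \in \indexSet[J]}$ fixed and only rewriting the shared-channel assignment of the session bound by the removed restriction from its unused form $\GGlob{a}{G}$ into its used form $\GGlobS{a}{G}{s}$. The session environment for the result is obtained by \emph{exposing} the session channel $\Chan$ that was previously hidden behind the removed restriction.

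First I would unfold the hypothesis to obtain $\Gamma, \Delta$ with $\Gamma \vdash \prod_{i \in \indexSet} \PT_i \triangleright \Delta$, with $\Delta$ coherent \wrt some $\Set{(\GT_j, \Chan[n]_j)}_{j \in \indexSet[J]}$, together with role-distributedness, global progress, and the name condition. By Figure~\ref{fig:typingRules} the derivation starts with repeated applications of Rule~\textsf{Par}, so that $\Gamma \vdash \PT_i \triangleright \Delta_i$ for each $i$ and $\Delta = \bigotimes_{i \in \indexSet} \Delta_i$. Inverting Rule~\textsf{Res} on the branch $\PT_k = \PRes{\Chan}{\PT_k'}$ then yields $\Gamma = \Gamma', \GGlob{a}{G}$, $\Roles{\GT} = \Set{\Role[1], \ldots, \Role[n]}$, $\Delta_k = \Delta_k'', \LInv{a}{1}, \ldots, \LInv{a}{n}$, a reduction $\Delta_k'', \LLoc{1}{\Proj{\GT}{\Role[1]}}, \ldots, \LLoc{n}{\Proj{\GT}{\Role[n]}} \Step^* \Delta_k'$, and the inner judgement $\Gamma', \GGlobS{a}{G}{s} \vdash \PT_k' \triangleright \Delta_k'$; write $j_0 \in \indexSet[J]$ for the index whose global type is this $\GT$.

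Next I would transport each sibling branch under the rewritten global environment: since Rule~\textsf{Res} consumes $\GGlob{a}{G}$ and, by Lemma~\ref{lem:linearity}, the shared channel $a$ is used at most once, no $\PT_i$ with $i \neq k$ refers to $a$, and the only rules distinguishing $\GGlob{a}{G}$ from $\GGlobS{a}{G}{s}$ (namely \textsf{Req}, \textsf{Acc}, \textsf{Res}) cannot fire on $a$ in those branches; hence each derivation of $\Gamma \vdash \PT_i \triangleright \Delta_i$ carries over verbatim to $\Gamma', \GGlobS{a}{G}{s} \vdash \PT_i \triangleright \Delta_i$. Combining these with the inner judgement through repeated applications of Rule~\textsf{Par} gives $\Gamma', \GGlobS{a}{G}{s} \vdash \PPar{\PT_k'}{(\prod_{i \neq k} \PT_i)} \triangleright \Delta_k' \otimes \bigotimes_{i \neq k} \Delta_i$. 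The composite $\otimes$ is defined because $\Chan$ is fresh---it was bound by the deleted restriction and, by the no-clash assumption, occurs in no sibling---so the assignments about $\Chan$ in $\Delta_k'$ are disjoint from $\bigotimes_{i \neq k} \Delta_i$, while the remainder of $\Delta_k'$ descends from $\Delta_k''$, which was already disjoint by definedness of the original $\otimes$. The remaining side conditions are light: role-distributedness survives because erasing the binder leaves the parallel structure and hence the placement of each session's actors untouched (cf.\ Lemma~\ref{lem:roleDist}); the global types in $\Gamma', \GGlobS{a}{G}{s}$ are still exactly $\Set{\GT_j}_{j \in \indexSet[J]}$; global progress is inherited as the interaction type system ignores the used/unused marking of a fully dispatched shared channel; and the name condition becomes easier, since $\Chan$, formerly connected to $a$ only through the transfer $\GGlob{a}{G} \to \GGlobS{a}{G}{s}$ (second disjunct), is now an explicit session channel, so the first disjunct $\Chan[n]_{j_0} = \Chan_{j_0}$ applies with $\Chan[n]_{j_0} = \Chan$.

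The crux, and where I expect the most care to be needed, is re-establishing coherence for the now-visible session $\Chan$. Beforehand $\Chan$ is hidden and $\Delta$ is coherent through the shared channel $a$, all of whose role obligations $\LInv{a}{1}, \ldots, \LInv{a}{n}$ match $\Roles{\GT}$; afterwards $\Chan$ surfaces carrying the local types of $\Delta_k'$, which by the $\Step^*$ premise of Rule~\textsf{Res} are the simultaneous residual of the projections $\Proj{\GT}{\Role[i]}$. When that reduction is empty the local types are literally the projections of $\GT$ and coherence \wrt $\Set{(\GT_j, \Chan_j)}_{j \in \indexSet[J]}$ holds immediately with $\Chan[n]_{j_0} = \Chan$. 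The hard case is the non-empty one: I would argue that Rules~\textsf{Com'} and \textsf{Cut} advance the projected local types of \emph{all} roles of $\Chan$ in lockstep, so that $\Delta_k'$ restricted to $\Chan$ stays the coordinated residual of one global type and hence a coherent implementation of the reached state of $\GT$. This lockstep-residual property, together with the equations identifying session environments up to terminated types and local types up to unfolding, is what closes the coherence obligation, whereas the freshness, linearity, and \textsf{Par}-recomposition bookkeeping is routine.
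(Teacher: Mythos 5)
Your proposal follows the same skeleton as the paper's proof: invert the applications of Rule~\textsf{Par}, invert Rule~\textsf{Res} on the $k$-th component, transport the sibling judgements from the environment containing $\GGlob{a}{G}$ to the one containing $\GGlobS{a}{G}{s}$ (the paper phrases this as $\Chan[n]_l$ being ``not relevant'' for those derivations), recombine with Rule~\textsf{Par}, and re-establish coherence for the session channel that the deleted restriction exposes. Up to and including the recombination step, your argument and the paper's coincide, and your bookkeeping of $\otimes$-disjointness, role-distributedness, global progress, and the name condition is in line with what the paper does.

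The gap is in your ``crux''. The paper has no hard case there: it reads the inversion of Rule~\textsf{Res} as yielding an inner environment equal to exactly $\Set{ \LLoc{\Role}{\Proj{\GT_l}{\Role}} \mid \Role \in \Roles{\GT_l} }$, \ie it implicitly takes the $\Step^*$ premise of Rule~\textsf{Res} to be the reflexive instance, after which coherence \wrt $ \left( \GT_l, \Chan \right) $ holds on the nose. Your lockstep argument for the non-reflexive case cannot close the lemma as stated. What it establishes is that the $\Chan$-part of the inner environment is the coordinated residual of the projections, \ie the set of projections of some $\GT_l'$ with $\GT_l \Step^* \GT_l'$ (in the sense of Figure~\ref{fig:reductionRulesGT}). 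By the paper's definition of coherence this is coherence \wrt $ \left( \GT_l', \Chan \right) $, \emph{not} \wrt $ \left( \GT_l, \Chan \right) $: coherence demands that the local types equal the projections of the paired global type (modulo unfolding and erasure of \LEnd), and a residual after a \textsf{Com'} step is not such a set for $\GT_l$ --- concretely, Rule~\textsf{Send} then cannot type the exposed sender against the full projection of $\GT_l$, because the label it sends no longer occurs in the top-level branch set of that projection. So in your hard case the strongest reachable conclusion is well-typedness \wrt $\Set{ \left( \GT_j, \Chan_j \right) }_{j \in \indexSet[J] \setminus \Set{l}} \cup \Set{ \left( \GT_l', \Chan_l \right) }$, which is a different statement; indeed, if the process under the restriction has already advanced past the first communication of $\GT_l$, the lemma's conclusion \wrt the unchanged set of types is simply false. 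The sentence claiming that the lockstep property ``closes the coherence obligation'' is therefore where the proof breaks: the only way to obtain the stated conclusion is to be in the reflexive situation the paper (tacitly) assumes, which is also the situation in which the algorithm of Definition~\ref{def:algorithm2} invokes the lemma, since there the global types are kept in sync with the process.
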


\begin{proof}
	Assume that $ \PT = \prod_{i \in \indexSet}{\PT_i} $ is well-typed \wrt $ \Set{ \left( \GT_j, \Chan_j \right) }_{j \in \indexSet[J]} $ and there is some $ k \in \indexSet $ such that $ \PT_k = \PRes{\Chan}{\PT_k'} $, \ie there are $ \Gamma, \Delta $ such that \PT is role-distributed, $  \Set{ \GT_j }_{j \in \indexSet[J]} $ are the global types in $ \Gamma $, $ \Delta $ is coherent \wrt $ \Set{ \left( \GT_j, \Chan[n]_j \right) }_{j \in \indexSet[J]} $, $ \Gamma \vdash \PT \triangleright \Delta $, \PT is globally progressing, and for all $ j \in \indexSet[J] $ either $ \Chan[n]_j = \Chan_j $ or $ \Gamma \vdash \PT \triangleright \Delta $ connects $ \Chan[n]_j $ with $ \Chan_j $.
	By Figure~\ref{fig:typingRules} and coherence, then there is some $ l \in \indexSet[J] $ such that $ \Chan[n]_l $ is a shared channel, $ \Set{ \LInv{a}{r} \mid \Role \in \Roles{\GT} } \subseteq \Delta $, and $ \GGlob{n}[l]{G}[l] \in \Gamma $.
	Then the derivation of $ \Gamma \vdash \PT \triangleright \Delta $ starts with Rule~\textsf{Par} to separate the judgement into $ \Gamma \vdash \PRes{\Chan}{\PT_j'} \triangleright \Delta_j $ and the judgements for the $ \Gamma \vdash \PT_i \triangleright \Delta_i $ with $ i \neq j $.
	Let $ \Gamma = \Gamma', \GGlob{n}[l]{G}[l] $.
	Since $ \Chan[n]_l $ is not relevant for the derivations $ \Gamma \vdash \PT_i \triangleright \Delta_i $ with $ i \neq j $, we have $ \Gamma', \GGlobS{n}[l]{G}[l]{s}[l] \vdash \PT_i \triangleright \Delta_i $ with $ i \neq j $.
	From $ \Gamma \vdash \PRes{\Chan}{\PT_j'} \triangleright \Delta_j $ and the typing rules, we get $ \Gamma', \GGlobS{n}[l]{G}[l]{s}[l] \vdash \PT_j' \triangleright \Delta_j' $, where $ \Delta_j' = \Set{ \LLoc{r}{\Proj{\GT_l}{\Role}}[l] \mid \Role \in \Roles{\GT_l} } $.
	By Rule~\textsf{Par}, then $ \Gamma', \GGlobS{n}[l]{G}[l]{s}[l] \vdash \PPar{\PT_k'}{\left( \prod_{i \in \indexSet \setminus \Set{ k }}{\PT_i} \right)} \triangleright \Delta_j', \bigcup_{i \neq j} \Delta_i $.
	By coherence, $ \Delta_j', \bigcup_{i \neq j} \Delta_i $ is coherent \wrt $ \Set{ \left( \GT_j, \Chan[n]_j \right) }_{j \in \indexSet[J], j \neq l} \cup \Set{ \left( \GT_l, \Chan_l \right) } $.
	Then, $ \left( \PPar{\PT_k'}{\left( \prod_{i \in \indexSet \setminus \Set{ k }}{\PT_i} \right)} \right) $ is well-typed \wrt $ \Set{ \left( \GT_j, \Chan_j \right) }_{j \in \indexSet[J]} $.
\end{proof}

Case~\ref{algo2:PPar} splits parallel composition and preserves well-typedness \wrt the same global types in its recursive call.

\begin{lemma}[Case~\ref{algo2:PPar}]
	\label{lem:algo2PPar}
	If $ \prod_{i \in \indexSet}{\PT_i} $ is well-typed \wrt $ \Set{ \left( \GT_j, \Chan_j \right) }_{j \in \indexSet[J]} $ and there is some $ k \in \indexSet $ such that $ \PT_k = \PPar{\PT_{k1}}{\PT_{k2}} $ then $ \left( \PPar{\PPar{\PT_{k1}}{\PT_{k2}}}{\left( \prod_{i \in \indexSet \setminus \Set{ k }}{\PT_i} \right)} \right) $ is well-typed \wrt $ \Set{ \left( \GT_j, \Chan_j \right) }_{j \in \indexSet[J]} $.
\end{lemma}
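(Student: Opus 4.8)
The plan is to reduce the claim to structural congruence, since the target process is merely a reassociation of the source. First I would observe that in $\prod_{i \in \indexSet}{\PT_i}$ the $k$-th factor is by assumption $\PT_k = \PPar{\PT_{k1}}{\PT_{k2}}$, so the process $\PPar{\PPar{\PT_{k1}}{\PT_{k2}}}{\left( \prod_{i \in \indexSet \setminus \Set{ k }}{\PT_i} \right)}$ is obtained from $\prod_{i \in \indexSet}{\PT_i}$ only by pulling this factor to the front and regrouping the parallel composition. Using the commutativity rule $\PPar{\PT_1}{\PT_2} \equiv \PPar{\PT_2}{\PT_1}$ and the associativity rule $\PPar{\PT_1}{\left( \PPar{\PT_2}{\PT_3} \right)} \equiv \PPar{\left( \PPar{\PT_1}{\PT_2} \right)}{\PT_3}$ of structural congruence, I would establish $\prod_{i \in \indexSet}{\PT_i} \equiv \PPar{\PPar{\PT_{k1}}{\PT_{k2}}}{\left( \prod_{i \in \indexSet \setminus \Set{ k }}{\PT_i} \right)}$.

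Next I would unfold the hypothesis: since $\prod_{i \in \indexSet}{\PT_i}$ is well-typed \wrt $\Set{ \left( \GT_j, \Chan_j \right) }_{j \in \indexSet[J]}$, fix witnesses $\Gamma, \Delta$ and names $\Chan[n]_j$ as required by the definition, so that $\Set{ \GT_j }_{j \in \indexSet[J]}$ are the global types in $\Gamma$, $\Delta$ is coherent \wrt $\Set{ \left( \GT_j, \Chan[n]_j \right) }_{j \in \indexSet[J]}$, and the channel-connection side condition holds for every $j$. Because the two processes are congruent, Lemma~\ref{lem:structuralCongruence} transfers the type judgement verbatim, yielding $\Gamma \vdash \PPar{\PPar{\PT_{k1}}{\PT_{k2}}}{\left( \prod_{i \in \indexSet \setminus \Set{ k }}{\PT_i} \right)} \triangleright \Delta$ with the same $\Gamma$ and $\Delta$. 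As neither environment changes, the conditions on the global types in $\Gamma$, the coherence of $\Delta$ \wrt $\Set{ \left( \GT_j, \Chan[n]_j \right) }_{j \in \indexSet[J]}$, and the side condition (for each $j$ either $\Chan[n]_j = \Chan_j$ or the derivation transfers $\GGlob{n_j}{G}$ into $\GGlobS{n_j}{G}{s_j}$ via one of the Rules~\textsf{Req}, \textsf{Acc}, \textsf{Res}) all carry over unchanged.

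It then remains to re-establish role-distribution and global progressing for the rewritten process. Both properties depend only on the multiset of unguarded parallel components and on how the actors are partitioned among them; the congruence I invoked merely reorders and regroups $\mid$, hence it leaves the set $\Actors{\PPar{\PPar{\PT_{k1}}{\PT_{k2}}}{\left( \prod_{i \in \indexSet \setminus \Set{ k }}{\PT_i} \right)}}$ and its distribution over parallel branches identical to that of $\prod_{i \in \indexSet}{\PT_i}$. Role-distribution therefore holds immediately. For global progressing I would appeal to the analogue of Lemma~\ref{lem:structuralCongruence} for the interaction type system. The main obstacle is precisely this last point: preservation of global progressing under $\equiv$ is not established among the lemmata of the excerpt, so strictly it must be argued separately. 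I expect it to follow by the same structural induction as Lemma~\ref{lem:structuralCongruence}, since the only congruence laws used here, namely commutativity and associativity of $\mid$, neither create nor destroy interaction dependencies between different sessions and thus leave the dependency information collected by the interaction type system invariant. Everything else is a routine transfer of the fixed witnesses $\Gamma$ and $\Delta$, which the reassociation does not touch.
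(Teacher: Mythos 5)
Your proof is correct, but it is factored differently from the paper's. The paper's own proof of this lemma is a one-line appeal to the typing rules: since the derivation of $ \Gamma \vdash \prod_{i \in \indexSet}{\PT_i} \triangleright \Delta $ decomposes via Rule~\textsf{Par} into judgements for the individual $ \PT_i $, and the judgement for $ \PT_k = \PPar{\PT_{k1}}{\PT_{k2}} $ itself splits by Rule~\textsf{Par}, one simply reassembles these pieces with Rule~\textsf{Par} in the new grouping, relying on associativity and commutativity of $ \otimes $ to get the same $ \Delta $. You instead observe that the target term is structurally congruent to the source and transfer the judgement wholesale via Lemma~\ref{lem:structuralCongruence}; this is in fact exactly the strategy the paper uses for the neighbouring recursion case (Lemma~\ref{lem:algo2PRep}), so your route makes the two preparation cases uniform, and it is more explicit than the paper about the side conditions of well-typedness (role-distribution, coherence, the channel-connection condition, global progress). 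The point you flag as an obstacle --- that preservation of the globally-progressing property under $ \equiv $ is nowhere established --- is real, but it is not a defect specific to your approach: the paper's one-line proof silently needs the same fact, since the regrouped term is syntactically different regardless of whether one reaches it by re-deriving the judgement or by congruence, and reordering or regrouping parallel components does not change the dependency graph the interaction type system of \cite{BettiniAtall08} extracts. So your argument is sound, slightly longer, and arguably more honest about what is being assumed; the paper's is more direct but elides every side condition you spell out.
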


\begin{proof}
	Follows from the typing rules in Figure~\ref{fig:typingRules} and Rule~\textsf{Par} in particular.
\end{proof}

Case~\ref{algo2:PRep} unfolds recursion in a process and preserves well-typedness \wrt the same global type in its recursive call.

\begin{lemma}[Case~\ref{algo2:PRep}]
	\label{lem:algo2PRep}
	If $ \prod_{i \in \indexSet}{\PT_i} $ is well-typed \wrt $ \Set{ \left( \GT_j, \Chan_j \right) }_{j \in \indexSet[J]} $ and there is some $ k \in \indexSet $ such that $ \PT_k = \PRep{\PVar}{\PT_k'} $ then $ \left( \PPar{\PT_k'\Set{ \Subst{\PRep{\PVar}{\PT_k'}}{\PVar} }}{\left( \prod_{i \in \indexSet \setminus \Set{ k }}{\PT_i} \right)} \right) $ is well-typed \wrt $ \Set{ \left( \GT_j, \Chan_j \right) }_{j \in \indexSet[J]} $.
\end{lemma}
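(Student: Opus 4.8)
The plan is to recognise that unfolding a recursion is literally one of the axioms of structural congruence, and then to push well-typedness across $ \equiv $ using Lemma~\ref{lem:structuralCongruence}. Write $ \PT = \prod_{i \in \indexSet}{\PT_i} $ and let $ \PT' = \PPar{\PT_k'\Set{ \Subst{\PRep{\PVar}{\PT_k'}}{\PVar} }}{\left( \prod_{i \in \indexSet \setminus \Set{ k }}{\PT_i} \right)} $ be the process produced by the recursive call. Since $ \PT_k = \PRep{\PVar}{\PT_k'} $, the congruence axiom $ \PRep{\PVar}{\PT_k'} \equiv \PT_k'\Set{ \Subst{\PRep{\PVar}{\PT_k'}}{\PVar} } $ together with the fact that $ \equiv $ is a congruence (hence closed under the surrounding parallel context) yields $ \PT \equiv \PT' $.

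First I would unfold the assumption that $ \PT $ is well-typed \wrt $ \Set{ \left( \GT_j, \Chan_j \right) }_{j \in \indexSet[J]} $ into its constituents: there are $ \Gamma, \Delta $ such that $ \PT $ is role-distributed, $ \Set{ \GT_j }_{j \in \indexSet[J]} $ are the global types in $ \Gamma $, $ \Delta $ is coherent \wrt $ \Set{ \left( \GT_j, \Chan[n]_j \right) }_{j \in \indexSet[J]} $ for suitable $ \Chan[n]_j $, $ \Gamma \vdash \PT \triangleright \Delta $, $ \PT $ is globally progressing, and each $ \Chan[n]_j $ is either $ \Chan_j $ or connected to it by the derivation. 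Applying Lemma~\ref{lem:structuralCongruence} to $ \PT \equiv \PT' $ gives $ \Gamma \vdash \PT' \triangleright \Delta $ with the very same $ \Gamma $ and $ \Delta $. As neither $ \Gamma $ nor $ \Delta $ changes, the global types in $ \Gamma $, the coherence of $ \Delta $ \wrt $ \Set{ \left( \GT_j, \Chan[n]_j \right) }_{j \in \indexSet[J]} $, and the connection of each $ \Chan[n]_j $ with $ \Chan_j $ are immediately inherited by $ \PT' $.

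It then remains to transfer the two structural side conditions, role-distribution and global progress, from $ \PT $ to $ \PT' $. Here I would use that process variables are guarded by communication prefixes, so substituting $ \PRep{\PVar}{\PT_k'} $ for $ \PVar $ merely re-inserts the recursion beneath prefixes that are already present: it creates no new top-level parallel component and never places two actors of the same session into one sequential thread. Consequently the actor-and-session structure on which both role-distribution and the dependency graph of the interaction type system are defined is unchanged modulo $ \equiv $, so $ \PT' $ is again role-distributed and globally progressing. Combining this with the judgement and coherence obtained above shows that $ \PT' $ is well-typed \wrt $ \Set{ \left( \GT_j, \Chan_j \right) }_{j \in \indexSet[J]} $.

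The main obstacle is exactly these last two conditions: unlike the judgement itself, role-distribution and global progress are stated on the syntax of the process and are not read off from $ \Gamma \vdash \PT' \triangleright \Delta $, so Lemma~\ref{lem:structuralCongruence} does not deliver them for free. I expect to discharge them by the invariance of both notions under $ \equiv $; if that invariance is not yet recorded, I would establish it by a short induction over the congruence axioms whose only nontrivial case is recursion unfolding, handled by the guardedness argument above.
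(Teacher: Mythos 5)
Your proposal is correct and follows exactly the paper's own route: recognise the unfolding as an instance of the structural-congruence axiom $\PRep{\PVar}{\PT}\equiv\PT\Set{\Subst{\PRep{\PVar}{\PT}}{\PVar}}$, apply Lemma~\ref{lem:structuralCongruence} to transfer the judgement $\Gamma \vdash \cdot \triangleright \Delta$ with the same $\Gamma,\Delta$, and conclude well-typedness. In fact you are more careful than the paper, whose two-line proof silently assumes that role-distribution and global progress survive the unfolding; your guardedness argument for these side conditions fills a gap the paper leaves implicit rather than introducing a different approach.
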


\begin{proof}
	Assume that $ \PT = \prod_{i \in \indexSet}{\PT_i} $ is well-typed \wrt $ \Set{ \left( \GT_j, \Chan_j \right) }_{j \in \indexSet[J]} $ and there is some $ k \in \indexSet $ such that $ \PT_k = \PRep{\PVar}{\PT_k'} $, \ie there are $ \Gamma, \Delta $ such that \PT is role-distributed, $  \Set{ \GT_j }_{j \in \indexSet[J]} $ are the global types in $ \Gamma $, $ \Delta $ is coherent \wrt $ \Set{ \left( \GT_j, \Chan[n]_j \right) }_{j \in \indexSet[J]} $, $ \Gamma \vdash \PT \triangleright \Delta $, \PT is globally progressing, and for all $ j \in \indexSet[J] $ either $ \Chan[n]_j = \Chan_j $ or $ \Gamma \vdash \PT \triangleright \Delta $ connects $ \Chan[n]_j $ with $ \Chan_j $.
	By Lemma~\ref{lem:structuralCongruence}, then $ \Gamma \vdash \PPar{\PT_j'\Set{ \Subst{\PRep{\PVar}{\PT_j'}}{\PVar} }}{\left( \prod_{i \in \indexSet \setminus \Set{ j }}{\PT_i} \right)} \triangleright \Delta $.
	Then, $ \left( \PPar{\PT_j'\Set{ \Subst{\PRep{\PVar}{\PT_j'}}{\PVar} }}{\left( \prod_{i \in \indexSet \setminus \Set{ j }}{\PT_i} \right)} \right) $ is well-typed \wrt $ \Set{ \left( \GT_j, \Chan_j \right) }_{j \in \indexSet[J]} $.
\end{proof}

Case~\ref{algo2:PCond} maps a conditional of the original system on a \SGP-conditional and preserves well-typedness \wrt the same global types in both of its recursive calls.

\begin{lemma}[Case~\ref{algo2:PCond}]
	\label{lem:algo2PCond}
	If $ \prod_{i \in \indexSet}{\PT_i} $ is well-typed \wrt $ \Set{ \left( \GT_j, \Chan_j \right) }_{j \in \indexSet[J]} $ and there is some $ k \in \indexSet $ such that $ \PT_k = \PCond{\cond}{\PT_{k1}}{\PT_{k2}} $ then $ \left( \PPar{\PT_{k1}}{\left( \prod_{i \in \indexSet \setminus \Set{ k }}{\PT_i} \right)} \right) $ is well-typed \wrt $ \Set{ \left( \GT_j, \Chan_j \right) }_{j \in \indexSet[J]} $ and $ \left( \PPar{\PT_{k2}}{\left( \prod_{i \in \indexSet \setminus \Set{ k }}{\PT_i} \right)} \right) $ is well-typed \wrt $ \Set{ \left( \GT_j, \Chan_j \right) }_{j \in \indexSet[J]} $.
\end{lemma}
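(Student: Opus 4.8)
The plan is to prove both statements simultaneously by inverting the typing derivation of the whole parallel composition down to the conditional and then reassembling. Write $ \PT = \prod_{i \in \indexSet}{\PT_i} $. By the definition of well-typedness \wrt $ \Set{ \left( \GT_j, \Chan_j \right) }_{j \in \indexSet[J]} $, I would first fix $ \Gamma, \Delta $ such that \PT is role-distributed, $ \Set{ \GT_j }_{j \in \indexSet[J]} $ are exactly the global types in $ \Gamma $, $ \Delta $ is coherent \wrt $ \Set{ \left( \GT_j, \Chan[n]_j \right) }_{j \in \indexSet[J]} $ for suitable names $ \Chan[n]_j $, $ \Gamma \vdash \PT \triangleright \Delta $, \PT is globally progressing, and the shared-channel connection condition holds for each $ j \in \indexSet[J] $. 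The only fact I need about the typing of a conditional is Rule~\textsf{Cond}: it forces both branches to be typed with \emph{the same} session environment as the conditional itself.

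The core step is inversion plus recomposition at the level of $ \Gamma \vdash \PT \triangleright \Delta $. Since \PT is a parallel product, the derivation begins with repeated applications of Rule~\textsf{Par}; using the commutativity and associativity of $ \otimes $ (as in Lemma~\ref{lem:structuralCongruence}) I would split off the $ k $-th component, obtaining a decomposition $ \Delta = \Delta_k \otimes \left( \bigotimes_{i \in \indexSet \setminus \Set{ k }}{\Delta_i} \right) $ together with $ \Gamma \vdash \PT_k \triangleright \Delta_k $ and $ \Gamma \vdash \PT_i \triangleright \Delta_i $ for every $ i \neq k $. Because $ \PT_k = \PCond{\cond}{\PT_{k1}}{\PT_{k2}} $, the subderivation for $ \PT_k $ must end in Rule~\textsf{Cond}, so $ \Gamma \vdash \PT_{k1} \triangleright \Delta_k $ and $ \Gamma \vdash \PT_{k2} \triangleright \Delta_k $. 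Feeding either branch back into Rule~\textsf{Par} in place of $ \PT_k $ yields $ \Gamma \vdash \PPar{\PT_{k1}}{\left( \prod_{i \in \indexSet \setminus \Set{ k }}{\PT_i} \right)} \triangleright \Delta $ and, symmetrically, the same judgement for $ \PT_{k2} $, with the \emph{unchanged} session environment $ \Delta $ and global environment $ \Gamma $.

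It then remains to recover the side conditions of well-typedness for the two reassembled processes. The set of global types in $ \Gamma $, coherence of $ \Delta $, and the connection between each $ \Chan[n]_j $ and $ \Chan_j $ are conditions on $ \Gamma $, $ \Delta $, and the derivation $ \Gamma \vdash \cdot \triangleright \Delta $, none of which changed, so they transfer immediately. Role-distributedness is inherited by the sub-term $ \PT_{k1} $ (respectively $ \PT_{k2} $) of the role-distributed process \PT, and the recomposition stays role-distributed because Rule~\textsf{Par} witnesses that the session environment $ \Delta_k $ of the chosen branch is $ \otimes $-disjoint from those of the remaining components, so no two different actors of the same session are placed into the same sequential thread.

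The step I expect to be the main obstacle is preservation of \emph{global progressing}, because the interaction type system is only described informally in the excerpt. The intended argument is that, since both branches are typed with the identical session environment $ \Delta_k $, they induce the same communication obligations and hence the same inter-session dependencies that the interaction type system collects; consequently a typing of \PT in the interaction type system restricts to a typing of each reassembled process, and no new cyclic dependency can arise. I would make this precise by inverting the interaction-type derivation at the conditional in the same way as above, relying on the fact that Rule~\textsf{Cond}-style branching does not alter the dependency set, and noting that Lemma~\ref{lem:actorsAreSequential} guarantees that the actors involved are laid out sequentially within each branch.
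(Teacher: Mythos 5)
Your proposal is correct and follows essentially the same route as the paper's proof: invert the \textsf{Par} applications to isolate $\Gamma \vdash \PT_k \triangleright \Delta_k$, invert Rule~\textsf{Cond} to get both branches typed against the same $\Delta_k$, and recompose with Rule~\textsf{Par} to recover $\Gamma \vdash \PPar{\PT_{k\mathnormal{i}}}{\left( \prod_{i \in \indexSet \setminus \Set{ k }}{\PT_i} \right)} \triangleright \Delta$ for both branches. If anything, you are more careful than the paper, which silently concludes well-typedness after the recomposition, whereas you explicitly check that coherence, role-distributedness, the channel-connection condition, and global progress all carry over unchanged.
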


\begin{proof}
	Assume that $ \PT = \prod_{i \in \indexSet}{\PT_i} $ is well-typed \wrt $ \Set{ \left( \GT_j, \Chan_j \right) }_{j \in \indexSet[J]} $ and there is some $ k \in \indexSet $ such that $ \PT_k = \PCond{\cond}{\PT_{k1}}{\PT_{k2}} $, \ie there are $ \Gamma, \Delta $ such that \PT is role-distributed, $  \Set{ \GT_j }_{j \in \indexSet[J]} $ are the global types in $ \Gamma $, $ \Delta $ is coherent \wrt $ \Set{ \left( \GT_j, \Chan[n]_j \right) }_{j \in \indexSet[J]} $, $ \Gamma \vdash \PT \triangleright \Delta $, \PT is globally progressing, and for all $ j \in \indexSet[J] $ either $ \Chan[n]_j = \Chan_j $ or $ \Gamma \vdash \PT \triangleright \Delta $ connects $ \Chan[n]_j $ with $ \Chan_j $.
	By Figure~\ref{fig:typingRules}, the derivation of $ \Gamma \vdash \PT \triangleright \Delta $ starts with some applications of Rule~\textsf{Par} that split the judgement into $ \Gamma \vdash \PT_k \triangleright \Delta_k $ and $ \Gamma \vdash \PT_i \triangleright \Delta_i $ for all $ i \neq k $ such that $ \Delta $ is the disjoint union of $ \Delta_k $ and all $ \Delta_i $.
	By Rule~\textsf{Cond}, then $ \Gamma \vdash \PT_{k1} \triangleright \Delta_k $ and $ \Gamma \vdash \PT_{k2} \triangleright \Delta_k $.
	By Rule~\textsf{Par}, then $ \Gamma \vdash \PPar{\PT_{k1}}{\left( \prod_{i \in \indexSet \setminus \Set{ k }}{\PT_i} \right)} \triangleright \Delta $ and $ \Gamma \vdash \PPar{\PT_{k2}}{\left( \prod_{i \in \indexSet \setminus \Set{ k }}{\PT_i} \right)} \triangleright \Delta $.
	Then $ \left( \PPar{\PT_{k1}}{\left( \prod_{i \in \indexSet \setminus \Set{ k }}{\PT_i} \right)} \right) $ is well-typed \wrt $ \Set{ \left( \GT_j, \Chan_j \right) }_{j \in \indexSet[J]} $ and $ \left( \PPar{\PT_{k2}}{\left( \prod_{i \in \indexSet \setminus \Set{ k }}{\PT_i} \right)} \right) $ is well-typed \wrt $ \Set{ \left( \GT_j, \Chan_j \right) }_{j \in \indexSet[J]} $.
\end{proof}

If a process is well-typed \wrt a set of types containing a communication guarded global type and that communication guard is according to the interaction type system of \cite{BettiniAtall08} not dependent on another session, then it contains a corresponding sender and receiver that are guarded only by conditionals.
Case~\ref{algo:GCom} preserves well-typedness but may introduce superfluous input branches that are not matched by the global type $ \GT_n $ of the continuation of this communication guard.
Because of Rule~\textsf{Get}, the type system abstracts from such superfluous branches of receivers.

\begin{lemma}[Case~\ref{algo2:GCom}]
	\label{lem:algo2GCom}
	If $ \prod_{i \in \indexSet}{\PT_i} $ is well-typed \wrt $ \Set{ \left( \GT_j, \Chan_j \right) }_{j \in \indexSet[J]} $, none of the Cases~\ref{algo2:PRes}, \ref{algo2:PPar}, or \ref{algo2:PRep} can be applied, there is $ l \in \indexSet[J] $ such that the session $ \Chan_l $ is initialised, $ \GT_l = \GCom{\Role_1}{\Role_2}{\Set{ \GLab{\Label_k}{\tilde{\Sort}_k}{\GT_{l, k}} }_{k \in \indexSet[K]'}} $, and this communication does not depend on another session, then there are $ m, o \in \indexSet $, $ n \in \indexSet[K] $ and $ \indexSet[K] \subseteq \indexSet[K]' $ such that every conditional branch of $ \PT_m $ is a version of $ \PSend{\Role_1}{\Role_2}{\Label_n}{\tilde{\expr}}{\PT[Q]} $, every conditional branch of $ \PT_o $ is a version of $ \PGet{\Role_2}{\Role_1}{\Set{ \PLab{\Label_k}{\tilde{\Args}_k}{\PT[Q]_k} }_{k \in \indexSet[K]'}} $,
	and $ \left( \PPar{\PPar{\PT[Q]_n\Set{ \Subst{\tilde{\Args}_n@\Actor{s}{r}}{\tilde{\Args}_n} }}{\PT[Q]}}{\left( \prod_{i \in \indexSet \setminus \Set{ m, o }}{\PT_i} \right)} \right) $ is well-typed \wrt $ \Set{ \left( \GT_j, \Chan_j \right) }_{j \in \indexSet[J] \setminus \Set{ l }} \cup \Set{ \left( \GT_{l, k},  \Chan_l \right) } $.
\end{lemma}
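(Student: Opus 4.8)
I would begin by unfolding well-typedness: fix $\Gamma$ and $\Delta$ witnessing that $\prod_{i \in \indexSet} \PT_i$ is well-typed \wrt $\Set{(\GT_j, \Chan_j)}_{j \in \indexSet[J]}$, so that $\Gamma \vdash \prod_{i \in \indexSet} \PT_i \triangleright \Delta$ with $\Delta$ coherent and the process globally progressing. Since $\Chan_l$ is initialised and $\GT_l = \GCom{\Role_1}{\Role_2}{\Set{\GLab{\Label_k}{\tilde{\Sort}_k}{\GT_{l,k}}}_{k \in \indexSet[K]}}$ begins with a communication, coherence together with Definition~\ref{def:projection} forces a send assignment $\LLoc{\Role_1}{\LSend{\Role_2}{\Set{\ldots}_{k \in \indexSet[K]}}}$ and a matching receive assignment $\LLoc{\Role_2}{\LGet{\Role_1}{\Set{\ldots}_{k \in \indexSet[K]}}}$ for session $\Chan_l$ into $\Delta$, whose branch continuations are $\Proj{\GT_{l,k}}{\Role_1}$ and $\Proj{\GT_{l,k}}{\Role_2}$.

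Next I would locate the two actors. By Lemma~\ref{lem:actorsAreSequential} the actor $\Actor{\Chan_l}{\Role_1}$ is realised inside a single sequential component $\PT_m$ and $\Actor{\Chan_l}{\Role_2}$ inside a single $\PT_o$, and role-distribution gives $m \neq o$. Because none of the Cases~\ref{algo2:PRes}, \ref{algo2:PPar}, \ref{algo2:PRep} apply, neither component starts with a restriction, parallel composition, or recursion, and the assumption that this communication does not depend on another session (through the interaction type system) excludes that $\PT_m$ or $\PT_o$ is guarded by a prefix of a different session; this is the point at which I would invoke the refinement of Error-Freedom (Lemma~\ref{lem:errorFreedom}) to conclude that both are guarded by conditionals alone. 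Inverting the derivation by repeated use of Rule~\textsf{Cond}, which types both branches against the same session environment, down to the leaves, where the send obligation can only be discharged by Rule~\textsf{Send} and the receive obligation only by Rule~\textsf{Get}, shows that every conditional leaf of $\PT_m$ is a send from $\Role_1$ to $\Role_2$ carrying a label in $\indexSet[K]$ and every conditional leaf of $\PT_o$ is a receive $\PGet{\Role_2}{\Role_1}{\Set{\PLab{\Label_k}{\tilde{\Args}_k}{\PT[Q]_k}}_{k \in \indexSet[K]'}}$ with $\indexSet[K] \subseteq \indexSet[K]'$, the extra branches being the superfluous ones permitted by Rule~\textsf{Get}. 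Fixing one leaf determines $n \in \indexSet[K]$ together with $\tilde{\expr}$, $\PT[Q]$, and the receiver's $n$-th branch $\PT[Q]_n$.

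For the residual I would argue on two levels. On the type level, Rule~\textsf{Com'} yields $\Delta \Step \Delta'$ that replaces the $\Role_1$- and $\Role_2$-assignments by $\LLoc{\Role_1}{\Proj{\GT_{l,n}}{\Role_1}}$ and $\LLoc{\Role_2}{\Proj{\GT_{l,n}}{\Role_2}}$; for every remaining role $\Role[p]$ of $\GT_l$ projectability guarantees that $\Proj{\GT_l}{\Role[p]} = \bigsqcup_{\Role[p], k} \Proj{\GT_{l,k}}{\Role[p]}$ already agrees with $\Proj{\GT_{l,n}}{\Role[p]}$ up to branches removable by Rule~\textsf{Cut}, so that $\Delta'$ is coherent \wrt $\Set{(\GT_j, \Chan_j)}_{j \in \indexSet[J] \setminus \Set{l}} \cup \Set{(\GT_{l,n}, \Chan_l)}$. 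On the process level, Rule~\textsf{Send} gives a judgement for $\PT[Q]$ against $\LLoc{\Role_1}{\Proj{\GT_{l,n}}{\Role_1}}$ and Rule~\textsf{Get} a judgement for $\PT[Q]_n$ against $\LLoc{\Role_2}{\Proj{\GT_{l,n}}{\Role_2}}$ under $\Gamma, \GVal{\tilde{\Args}_n}{\tilde{\Sort}_n}$; renaming the receive variables to the fresh actor-indexed copies preserves the judgement by Lemma~\ref{lem:substitution} (and Lemma~\ref{lem:freshName} to discard unused sorts). Recombining these continuations with the untouched components through Rule~\textsf{Par} produces a judgement for $\PPar{\PPar{\PT[Q]_n\Set{\Subst{\tilde{\Args}_n@\Actor{s}{r}}{\tilde{\Args}_n}}}{\PT[Q]}}{(\prod_{i \in \indexSet \setminus \Set{m,o}} \PT_i)}$ with $\Delta'$, and global progressing is inherited because consuming one communication only shrinks the interaction dependencies; the whole step mirrors Subject Reduction (Lemma~\ref{lem:subjectReduction}) with the value substitution of Rule~\textsf{Com} replaced by the variable renaming of the algorithm.

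The step I expect to be the main obstacle is the treatment of the roles that do not participate in this communication: I must show that the merge $\sqcup$ built into $\Proj{\GT_l}{\Role[p]}$ leaves these roles' local types in exactly the shape demanded by coherence \wrt $\GT_{l,n}$, so that no obligation towards the chosen continuation is silently lost when $\GT_l$ is replaced by $\GT_{l,n}$. A closely related delicate point is the precise exploitation of the ``does not depend on another session'' hypothesis, since it is exactly what rules out that $\PT_m$ or $\PT_o$ is guarded by a prefix of another interleaved session and thereby makes the inversion of the typing rules down to pure send and receive leaves legitimate.
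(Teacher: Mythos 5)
Your proposal is correct and takes essentially the same route as the paper's proof: unfold well-typedness, use coherence and projection to place the send/receive local types in $\Delta$, invoke Lemma~\ref{lem:actorsAreSequential} and Lemma~\ref{lem:errorFreedom} together with the no-dependency hypothesis to expose sender and receiver guarded by conditionals only, invert Rules~\textsf{Par}, \textsf{Cond}, \textsf{Send}, \textsf{Get}, and absorb the superfluous receiver branches of non-participating roles via the slack built into \textsf{Get} (and \textsf{Cut}). The only cosmetic difference is that the paper handles the actor-indexed renaming of the input variables by alpha conversion (judgements are closed under structural congruence, Lemma~\ref{lem:structuralCongruence}) rather than by Lemma~\ref{lem:substitution} and Lemma~\ref{lem:freshName}, whose use in your version additionally needs a harmless weakening of $\Gamma$.
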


\begin{proof}
	Assume that $ \PT = \prod_{i \in \indexSet}{\PT_i} $ is well-typed \wrt $ \Set{ \left( \GT_j, \Chan_j \right) }_{j \in \indexSet[J]} $, none of the Cases~\ref{algo2:PRes}, \ref{algo2:PPar}, or \ref{algo2:PRep} can be applied, there is $ l \in \indexSet[J] $ such that the session $ \Chan_l $ is initialised, $ \GT_l = \GCom{\Role_1}{\Role_2}{\Set{ \GLab{\Label_k}{\tilde{\Sort}_k}{\GT_{l, k}} }_{k \in \indexSet[K]'}} $, and this communication does not depend on another session, \ie there are $ \Gamma, \Delta $ such that \PT is role-distributed, $  \Set{ \GT_j }_{j \in \indexSet[J]} $ are the global types in $ \Gamma $, $ \Delta $ is coherent \wrt $ \Set{ \left( \GT_j, \Chan[n]_j \right) }_{j \in \indexSet[J]} $, $ \Gamma \vdash \PT \triangleright \Delta $, \PT is globally progressing, and for all $ j \in \indexSet[J] $ either $ \Chan[n]_j = \Chan_j $ or $ \Gamma \vdash \PT \triangleright \Delta $ connects $ \Chan[n]_j $ with $ \Chan_j $.
	By Figure~\ref{fig:typingRules} and coherence, then there is \Chan[a] such that $ \Set{ \LLoc{r}{\Proj{\GT_l}{\Role}} \mid \Role \in \Roles{\GT_l} } \subseteq \Delta $ and $ \GGlobS{a}{G}[l]{s}[l] \in \Gamma $ or $ \GGlob{a}{G}[l] \in \Gamma $, where we have $ \LLoc{r_1}{\LSend{\Role_2}{\Set{ \LLab{\Label_k}{\tilde{\Sort}_k}{\left( \Proj{\GT_{l, k}}{\Role_1} \right)} }_{k \in \indexSet[K]}}}[l] $ for role~$ \Role_1 $ and $ \LLoc{r_2}{\LGet{\Role_1}{\Set{ \LLab{\Label_k}{\tilde{\Sort}_k}{\left( \Proj{\GT_{l, k}}{\Role_2} \right)} }_{k \in \indexSet[K]}}}[l] $ for role~$ \Role_2 $.
	By the Lemmata~\ref{lem:actorsAreSequential} and \ref{lem:errorFreedom} and since this communication does not depend on another session, then there are $ m, o \in \indexSet $, $ n \in \indexSet[K] $, $ \indexSet[K] \subseteq \indexSet[K]' $ such that every conditional branch of $ \PT_m $ is a version of $ \PT_m' = \PSend[\Chan_{\mathnormal{l}}]{\Role_1}{\Role_2}{\Label_n}{\tilde{\expr}}{\PT[Q]} $ and every conditional branch of $ \PT_o $ is a version of $ \PT_o' = \PGet[\Chan_{\mathnormal{l}}]{\Role_2}{\Role_1}{\Set{ \PLab{\Label_k}{\tilde{\Args}_k}{\PT[Q]_k} }_{k \in \indexSet[K]'}} $.
	By Figure~\ref{fig:typingRules}, the derivation of $ \Gamma \vdash \PT \triangleright \Delta $ starts with some applications of the Rules~\textsf{Par} and \textsf{Cond} to split the judgement into $ \Gamma \vdash \PT_m' \triangleright \LLoc{r_1}{\LSend{\Role_2}{\Set{ \LLab{\Label_k}{\tilde{\Sort}_k}{\left( \Proj{\GT_{l, k}}{\Role_1} \right)} }_{k \in \indexSet[K]}}}[l] $, $ \Gamma \vdash \PT_o' \triangleright \LLoc{r_2}{\LGet{\Role_1}{\Set{ \LLab{\Label_k}{\tilde{\Sort}_k}{\left( \Proj{\GT_{l, k}}{\Role_2} \right)} }_{k \in \indexSet[K]}}}[l] $, and $ \Gamma \vdash \PT_i \triangleright \Delta_i $ for all $ m \neq i \neq o $.
	By the Rule~\textsf{Send} and $ n \in \indexSet[K] $, then the judgement for $ m $ implies $ \Gamma \vdash \PT[Q] \triangleright \LLoc{r_1}{\Proj{\GT_{l, n}}{\Role_1}}[l] $.
	By Lemma~\ref{lem:structuralCongruence}, with the Rule~\textsf{Get} and $ n \in \indexSet[K] $, then $ \Gamma \vdash \PT[Q]_n\Set{ \Subst{\tilde{\Args}_n@\Actor{s_{\mathnormal{l}}}{r_2}}{\tilde{\Args}_n} } \triangleright \LLoc{r_2}{\Proj{\GT_n}{\Role_2}} $.
	By Definition~\ref{def:projection}, then $ \Proj{\GT_l}{\Role_i} $ is similar to $ \Proj{\GT_{l, n}}{\Role_i} $ except for unnecessary branches of receivers for all $ m \neq i \neq o $.
	By Rule~\textsf{Get}, then $ \Gamma \vdash \PT_i \triangleright \Delta_i $ implies $ \Gamma \vdash \PT_i \triangleright \Delta_i' $ for all $ m \neq i \neq o $, where $ \Delta_i' = \Delta_i $ or $ \Delta_i = \LLoc{r_i}{\Proj{\GT_l}{\Role_i}}[l] $ and $ \Delta_i' = \LLoc{r_i}{\Proj{\GT_{l, n}}{\Role_i}}[l] $.
	Since \PT is role-distributed, so is $ \left( \PPar{\PPar{\PT[Q]_n\Set{ \Subst{\tilde{\Args}_n@\Actor{s_{\mathnormal{l}}}{r_2}}{\tilde{\Args}_n} }}{\PT[Q]}}{\left( \prod_{i \in \indexSet \setminus \Set{ m, o }}{\PT_i} \right)} \right) $.
	By the Rules~\textsf{Par} and \textsf{Cond}, then $ \left( \PPar{\PPar{\PT[Q]_n\Set{ \Subst{\tilde{\Args}_n@\Actor{s_{\mathnormal{l}}}{r_2}}{\tilde{\Args}_n} }}{\PT[Q]}}{\left( \prod_{i \in \indexSet \setminus \Set{ m, o }}{\PT_i} \right)} \right) $ is well-typed \wrt $ \Set{ \left( \GT_j, \Chan_j \right) }_{j \in \indexSet[J] \setminus \Set{ l }} \cup \Set{ \left( \GT_{l, k},  \Chan_l \right) } $.
\end{proof}

Well-typedness \wrt a parallel global type implies that the respective system can be separated into two parallel partitions.
When considering the interleaving of several sessions, this separation is possible if the two partitions do not share actors.

\begin{lemma}[Case~\ref{algo2:split}]
	\label{lem:algo2split}
	If $ \prod_{i \in \indexSet}{\PT_i} $ is well-typed \wrt $ \Set{ \left( \GT_j, \Chan_j \right) }_{j \in \indexSet[J]} $ and there are some $ \indexSet[J]_1, \indexSet[J]_2 $ such that $ \indexSet[J]_1 \cup \indexSet[J]_2 = \indexSet[J] $, $ \indexSet[J]_1 \cap \indexSet[J]_2 = \emptyset $, and there are no dependencies between the sessions in $ \indexSet[J]_1 $ and the sessions in $ \indexSet[J]_2 $ then there are $ \indexSet_1, \indexSet_2 $ such that $ \indexSet_1 \cup \indexSet_2 = \indexSet $, $ \bigcup_{i \in \indexSet_k} \Actors{\PT_i} = \Set{ \Actor{s_j}{r} \mid j \in \indexSet[J]_k \land \Role \in \Roles{\GT_j} } $, and $ \Set{ \PT_i }_{i \in \indexSet_k} $ is well-typed \wrt $ \Set{ \left( \GT_j, \Chan_j \right) }_{j \in \indexSet[J]_k} $ for all $ k \in \Set{ 1, 2 } $.
\end{lemma}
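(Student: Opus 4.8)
The plan is to unfold the definition of well-typedness for interleaved sessions, read off the shape of the typing derivation, and then use the no-dependency hypothesis to assign every parallel thread to exactly one of the two session groups. From $\prod_{i \in \indexSet} \PT_i$ being well-typed \wrt $\Set{ \left( \GT_j, \Chan_j \right) }_{j \in \indexSet[J]}$ I obtain $\Gamma, \Delta$ such that the process is role-distributed and globally progressing, $\Delta$ is coherent \wrt some $\Set{ \left( \GT_j, \Chan[n]_j \right) }_{j \in \indexSet[J]}$, $\Gamma \vdash \prod_{i \in \indexSet} \PT_i \triangleright \Delta$, and the required channel connections hold. By Figure~\ref{fig:typingRules} together with the commutativity and associativity of $\otimes$, this derivation begins with applications of Rule~\textsf{Par} that separate it into $\Gamma \vdash \PT_i \triangleright \Delta_i$ with $\Delta = \bigotimes_{i \in \indexSet} \Delta_i$, so that each $\Delta_i$ carries the assignments $\LLoc{r}{\LT}$ and the obligations $\LInv{a}{r}$ for which $\PT_i$ is responsible. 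As in the algorithm, where Case~\ref{algo2:split} is reached only after Case~\ref{algo2:PPar} fails, I assume no $\PT_i$ is a top-level parallel composition, \ie each $\PT_i$ is a single thread.

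First I would assign a group to every thread. For $i \in \indexSet$ let the \emph{sessions of $\PT_i$} be the session channels (or their shared channels) occurring in $\Delta_i$, and I claim that all sessions of a given $\PT_i$ lie in a single group. Suppose some thread had an assignment for a session in $\indexSet[J]_1$ and one for a session in $\indexSet[J]_2$. Since $\PT_i$ is not a parallel composition, its actions are sequentialised, so an action on one of these sessions guards an action on the other; by the interaction type system of \cite{BettiniAtall08} this is precisely a dependency between a session of $\indexSet[J]_1$ and a session of $\indexSet[J]_2$, contradicting the hypothesis. Hence I may set $\indexSet_k = \Set{ i \in \indexSet \mid \text{all sessions of } \PT_i \text{ lie in } \indexSet[J]_k }$ and place the finitely many threads without session assignments into $\indexSet_1$, obtaining $\indexSet_1 \cup \indexSet_2 = \indexSet$ with the two families carrying disjoint session assignments.

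Next I would show that each $\Set{ \PT_i }_{i \in \indexSet_k}$ is well-typed \wrt $\Set{ \left( \GT_j, \Chan_j \right) }_{j \in \indexSet[J]_k}$. Role-distribution is inherited, since any sub-collection of parallel branches of a role-distributed process is again role-distributed. Setting $\Delta^{(k)} = \bigotimes_{i \in \indexSet_k} \Delta_i$ and letting $\Gamma^{(k)}$ restrict $\Gamma$ to the global types and channel assignments of the sessions in $\indexSet[J]_k$, repeated use of Rule~\textsf{Par} yields $\Gamma^{(k)} \vdash \prod_{i \in \indexSet_k} \PT_i \triangleright \Delta^{(k)}$, where the entries of $\Gamma$ that concern the other group's shared channels are not used in the derivations $\Gamma \vdash \PT_i \triangleright \Delta_i$ for $i \in \indexSet_k$ and can be dropped. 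Because the group-$k$ threads collect precisely the assignments of the group-$k$ sessions, $\Delta^{(k)}$ is coherent \wrt $\Set{ \left( \GT_j, \Chan[n]_j \right) }_{j \in \indexSet[J]_k}$ and the channel connections restrict accordingly. Global progress for the sub-system follows because the interaction typing of $\prod_{i \in \indexSet} \PT_i$ restricts to one for group $k$: discarding the sessions of $\indexSet[J]_{3-k}$ only removes dependency edges that, by hypothesis, never cross into $\indexSet[J]_k$, so no cycle spanning both groups can be lost.

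Finally, the actor-set equality $\bigcup_{i \in \indexSet_k} \Actors{\PT_i} = \Set{ \Actor{s_j}{r} \mid j \in \indexSet[J]_k \land \Role \in \Roles{\GT_j} }$ follows from coherence of $\Delta^{(k)}$ together with Lemma~\ref{lem:actorsAreSequential}: coherence forces every role of every session $\Chan_j$ with $j \in \indexSet[J]_k$ to have a corresponding assignment in $\Delta^{(k)}$, each such assignment is realised by an actor in exactly one group-$k$ thread, and conversely a group-$k$ thread produces no actors for sessions outside $\indexSet[J]_k$. I expect the \textbf{main obstacle} to be the rigorous link in the second paragraph between the absence of cross-group dependencies in the interaction type system of \cite{BettiniAtall08} and the claim that no single thread touches both session groups, since this requires spelling out how that system records cross-session causality uniformly for senders, receivers, and session invitations; the well-typedness transfer for the sub-systems in the third paragraph, in particular the restriction of the global-progress derivation, rests on the same analysis.
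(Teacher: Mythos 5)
Your proof is correct and follows essentially the same route as the paper's own proof: unfold the definition of well-typedness, split the typing derivation along Rule~\textsf{Par}, partition the threads according to which sessions their session environments mention, and transfer the typing judgement and coherence to each part. You in fact supply details the paper leaves implicit---notably the contradiction argument that a single sequential thread touching sessions from both groups would itself constitute a cross-group dependency in the interaction type system, and the restriction of $\Gamma$ and of global progress to each sub-system.
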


\begin{proof}
	Assume that $ \prod_{i \in \indexSet}{\PT_i} $ is well-typed \wrt $ \Set{ \left( \GT_j, \Chan_j \right) }_{j \in \indexSet[J]} $ and there are some $ \indexSet[J]_1, \indexSet[J]_2 $ such that $ \indexSet[J]_1 \cup \indexSet[J]_2 = \indexSet[J] $, $ \indexSet[J]_1 \cap \indexSet[J]_2 = \emptyset $, and there are no dependencies between the sessions in $ \indexSet[J]_1 $ and the sessions in $ \indexSet[J]_2 $, \ie there are $ \Gamma, \Delta $ such that \PT is role-distributed, $  \Set{ \GT_j }_{j \in \indexSet[J]} $ are the global types in $ \Gamma $, $ \Delta $ is coherent \wrt $ \Set{ \left( \GT_j, \Chan[n]_j \right) }_{j \in \indexSet[J]} $, $ \Gamma \vdash \PT \triangleright \Delta $, \PT is globally progressing, and for all $ j \in \indexSet[J] $ either $ \Chan[n]_j = \Chan_j $ or $ \Gamma \vdash \PT \triangleright \Delta $ connects $ \Chan[n]_j $ with $ \Chan_j $.
	By Figure~\ref{fig:typingRules}, then the derivation of $ \Gamma \vdash \PT \triangleright \Delta $ starts with Rule~\textsf{Par} that splits the judgement into parallel components.
	Then, there are $ \indexSet_1, \indexSet_2 $ such that $ \indexSet_1 \cup \indexSet_2 = \indexSet $ and $ \bigcup_{i \in \indexSet_k} \Actors{\PT_i} = \Set{ \Actor{s_j}{r} \mid j \in \indexSet[J]_k \land \Role \in \Roles{\GT_j} } $ for all $ k \in \Set{ 1, 2 } $.
	Then, there are $ \Delta_1, \Delta_2 $ such that $ \Delta = \Delta_1 \otimes \Delta_2 $, $ \Gamma \vdash \prod_{i \in \indexSet_1}{\PT_i} \triangleright \Delta_1 $, and $ \Gamma \vdash \prod_{j \in \indexSet_2}{\PT_j} \triangleright \Delta_2 $.
	Since $ \Delta $ is coherent \wrt $ \Set{ \left( \GT_j, \Chan_j \right) }_{j \in \indexSet[J]} $ and the actors of the partitions are distinct, $ \Delta_k $ is coherent \wrt $ \Set{ \left( \GT_j, \Chan_j \right) }_{j \in \indexSet[J]_k} $ for $ k \in \Set{ 1, 2 } $.
	Hence, $ \Set{ \PT_i }_{i \in \indexSet_1} $ is well-typed \wrt $ \Set{ \left( \GT_j, \Chan_j \right) }_{j \in \indexSet[J]_1} $ and $ \Set{ \PT_i }_{i \in \indexSet_2} $ is well-typed \wrt $ \Set{ \left( \GT_j, \Chan_j \right) }_{j \in \indexSet[J]_2} $.
\end{proof}

If a system is well-typed \wrt a set of types containing a parallel global type then the actors of these two parallel types are separated such that we can replace the session channel for one side (Case~\ref{algo2:GPar}).
Accordingly we strengthen Case~\ref{algo2:GPar} to:
\begin{enumerate}
	\setcounter{enumi}{6}
	\item
	\begin{enumerate}
		\setcounter{enumii}{1}
		\item $ \MapSGPI{\Set{ \PT_i' }_{i \in \indexSet_1} \cup \Set{ \PT_i }_{i \in \indexSet_2}}{\Set{ \left( \GT_{k1}, \Chan \right), \left( \GT_{k2}, \Chan_k \right) } \cup \Set{ \left( \GT_j, \Chan_j \right) }_{j \in \indexSet[J]}} $, \newline else if there are $ k \in \indexSet[J] $, $ \indexSet_1, \indexSet_2 $ such that $ \GT_k = \GPar{\GT_{k1}}{\GT_{k2}} $, $ \indexSet_1 \cup \indexSet_2 = \indexSet $, $ \indexSet_1 \cap \indexSet_2 = \emptyset $, $ \Set{ \PT_i }_{i \in \indexSet_1} $ implements all actors of $ \left( \GT_{k1}, \Chan_k \right) $ but no actor of $ \left( \GT_{k2}, \Chan_k \right) $, and $ \PT_i' $ is obtained from $ \PT_i $ by substituting or alpha converting $ \Chan_k $ by some fresh $ \Chan $. \label{algo2:GPar2}
	\end{enumerate}
\end{enumerate}
Note that the result of the algorithm, \ie the \SGP-process, does not contain session channels.
Because of that, the above modification of Case~\ref{algo2:GPar} does not change the result of the algorithm.
We use it only for the proof.

\begin{lemma}[Case~\ref{algo2:GPar2}]
	\label{lem:algo2GPar}
	If $ \prod_{i \in \indexSet}{\PT_i} $ is well-typed \wrt $ \Set{ \left( \GT_j, \Chan_j \right) }_{j \in \indexSet[J]} $, none of the Cases~\ref{algo:PRes}, \ref{algo:PPar}, or \ref{algo:PRep} can be applied, and there is $ k \in \indexSet[J] $ such that $ \GT_k = \GPar{\GT_{k1}}{\GT_{k2}} $ then there are $ \indexSet_1, \indexSet_2 $ such that $ \indexSet_1 \cup \indexSet_2 = \indexSet $, $ \indexSet_1 \cap \indexSet_2 = \emptyset $, $ \Set{ \PT_i }_{i \in \indexSet_1} $ implements all actors of $ \left( \GT_{k1}, \Chan_k \right) $ but no actor of $ \left( \GT_{k2}, \Chan_k \right) $, and $ \PT_i' $ is obtained from $ \PT_i $ by substituting or alpha converting $ \Chan_k $ by some fresh $ \Chan $, and $ \Set{ \PT_i' }_{i \in \indexSet_1} \cup \Set{ \PT_i }_{i \in \indexSet_2} $ is well-typed \wrt $ \Set{ \left( \GT_{k1}, \Chan \right), \left( \GT_{k2}, \Chan_k \right) } \cup \Set{ \left( \GT_j, \Chan_j \right) }_{j \in \indexSet[J]} $.
\end{lemma}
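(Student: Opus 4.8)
The plan is to turn the \emph{independence} that the parallel global type $\GPar{\GT_{k1}}{\GT_{k2}}$ encodes into a genuine separation of the processes implementing its two components, and then to give the $\GT_{k1}$-side its own session identifier. First I would unfold the hypothesis to obtain $\Gamma, \Delta$ such that $\PT = \prod_{i \in \indexSet}{\PT_i}$ is role-distributed, $\Delta$ is coherent \wrt the current channels, $\Gamma \vdash \PT \triangleright \Delta$, and $\PT$ is globally progressing. Since Case~\ref{algo2:PRes} cannot be applied, no top-level restriction remains, so the active session channel $\Chan_k$ occurs free in the threads engaged in that session. The decisive structural fact comes from Definition~\ref{def:projection}: projectability of $\GPar{\GT_{k1}}{\GT_{k2}}$ forces $\Roles{\GT_{k1}} \cap \Roles{\GT_{k2}} = \emptyset$ and makes projection distribute, \ie $\Proj{\GT_k}{\Role} = \Proj{\GT_{k1}}{\Role}$ for $\Role \in \Roles{\GT_{k1}}$ (a local type mentioning only $\GT_{k1}$-roles) and symmetrically for $\GT_{k2}$. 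Hence within session $\Chan_k$ no actor of a $\GT_{k1}$-role ever communicates with an actor of a $\GT_{k2}$-role.

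Next I would fix the partition. Because none of the Cases~\ref{algo2:PRes}, \ref{algo2:PPar}, \ref{algo2:PRep} applies, each $\PT_i$ is a single sequential thread; together with role-distribution and Lemma~\ref{lem:actorsAreSequential} this shows each $\PT_i$ is a $\Chan_k$-actor for \emph{at most one} role. I would therefore set $\indexSet_1 = \Set{ i \in \indexSet \mid \PT_i \text{ is an actor on some } \Actor{s_k}{r} \text{ with } \Role \in \Roles{\GT_{k1}} }$ and $\indexSet_2 = \indexSet \setminus \indexSet_1$, giving $\indexSet_1 \cup \indexSet_2 = \indexSet$ and $\indexSet_1 \cap \indexSet_2 = \emptyset$. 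Coherence places exactly one assignment $\LLoc{r}{\Proj{\GT_k}{\Role}}[k]$ per role of $\GT_k$, so $\Set{\PT_i}_{i \in \indexSet_1}$ implements all, and only, the $\GT_{k1}$-actors of $(\GT_k, \Chan_k)$ and no $\GT_{k2}$-actor, as required.

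Then I would establish well-typedness of the relabelled system (reading the conclusion as replacing the pair $(\GT_k, \Chan_k)$ by $(\GT_{k1}, \Chan)$ and $(\GT_{k2}, \Chan_k)$, with $\Gamma$ adjusted by splitting the global type of $\Chan_k$ accordingly). Splitting the derivation of $\Gamma \vdash \PT \triangleright \Delta$ at its leading applications of Rule~\textsf{Par} separates $\Delta$ into per-thread parts; the parts for $\indexSet_1$ contain exactly $\Set{ \LLoc{r}{\Proj{\GT_{k1}}{\Role}}[k] \mid \Role \in \Roles{\GT_{k1}} }$ for session $\Chan_k$ and no $\GT_{k2}$-entry. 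Renaming $\Chan_k$ to a fresh $\Chan$ throughout the $\indexSet_1$-threads turns these into $\Set{ \LLoc{r}{\Proj{\GT_{k1}}{\Role}} \mid \Role \in \Roles{\GT_{k1}} }$ on session $\Chan$. I would argue this preserves every judgement: the only rules sensitive to the session channel are \textsf{Send}/\textsf{Get} (and \textsf{Req}/\textsf{Acc}), local types carry no channel names, and because $\Proj{\GT_{k1}}{\Role}$ mentions only $\GT_{k1}$-roles, every renamed prefix on $\Chan_k$ had its matching partner among the $\indexSet_1$-threads, which are renamed consistently, while freshness of $\Chan$ excludes clashes. Reassembling with Rule~\textsf{Par} yields a valid judgement, whose session environment $\Delta'$ is coherent \wrt the new type set by the distribution of projection noted above. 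Finally, global progress is preserved because the renaming splits the single session node $\Chan_k$ of the interaction dependency graph into two nodes $\Chan$ and $\Chan_k$ that are touched by disjoint sets of actors; this can only remove (never add) dependencies, so acyclicity—and hence typability in the interaction type system—is retained.

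The main obstacle is precisely the renaming step: making rigorous that substituting $\Chan_k$ for a fresh $\Chan$ on one side of the partition leaves the \emph{entire} typing derivation intact, including the interaction type system underlying global progress. The crux throughout is the role-disjointness $\Roles{\GT_{k1}} \cap \Roles{\GT_{k2}} = \emptyset$, which guarantees there is no communication—and hence no linearity constraint nor dependency—bridging the two partitions within $\Chan_k$, so no sender/receiver pair can be desynchronised by the renaming. A secondary wrinkle, which I would flag as needing separate care rather than a glib ``analogous'', is the as-yet-uninitialised session: there the $\GT_k$-roles are carried by obligations $\LInv{a}{r}$ and by the session channel appearing in the invitation prefixes, so the partition must additionally be reflected in those obligations and, where a single request invites roles of both components, in splitting the invitation itself.
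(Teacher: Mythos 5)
Your proposal is correct and follows essentially the same route as the paper's own proof: partition the threads by the (disjoint) roles of $\GT_{k1}$ and $\GT_{k2}$ using role-distribution and coherence, split the typing derivation at its leading applications of Rule~\textsf{Par}, substitute a fresh session channel $\Chan$ for the free channel $\Chan_k$ in the $\indexSet_1$-partition, and reassemble with Rule~\textsf{Par} while checking coherence of the resulting session environment. If anything, you are more careful than the paper, which silently treats the preservation of global progress and does not separately discuss the uninitialised-session case that you flag.
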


\begin{proof}
	Assume that $ \PT = \prod_{i \in \indexSet}{\PT_i} $ is well-typed \wrt $ \Set{ \left( \GT_j, \Chan_j \right) }_{j \in \indexSet[J]} $, none of the Cases~\ref{algo:PRes}, \ref{algo:PPar}, or \ref{algo:PRep} can be applied, and there is $ k \in \indexSet[J] $ such that $ \GT_k = \GPar{\GT_{k1}}{\GT_{k2}} $, \ie there are $ \Gamma, \Delta $ such that \PT is role-distributed, $  \Set{ \GT_j }_{j \in \indexSet[J]} $ are the global types in $ \Gamma $, $ \Delta $ is coherent \wrt $ \Set{ \left( \GT_j, \Chan[n]_j \right) }_{j \in \indexSet[J]} $, $ \Gamma \vdash \PT \triangleright \Delta $, \PT is globally progressing, and for all $ j \in \indexSet[J] $ either $ \Chan[n]_j = \Chan_j $ or $ \Gamma \vdash \PT \triangleright \Delta $ connects $ \Chan[n]_j $ with $ \Chan_j $.
	Since \PT is role-distributed, there are $ \indexSet_1 \cup \indexSet_2 = \indexSet $ such that $ \prod_{i \in \indexSet_1}{\PT_i} $ is role-distributed, $ \prod_{j \in \indexSet_2}{\PT_j} $ is role-distributed, $ \bigcup_{i \in \indexSet_1} \Actors{\PT_i} \cap \Actors{\GT_k} = \Actors{\GT_{k1}} $, and $ \bigcup_{j \in \indexSet_2} \Actors{\PT_j} \cap \Actors{\GT_k} = \Actors{\GT_{k2}} $.
	By Figure~\ref{fig:typingRules}, then the derivation of $ \Gamma \vdash \PT \triangleright \Delta $ starts with Rule~\textsf{Par} that splits the judgement into parallel components.
	Then there are $ \Delta_1, \Delta_2 $ such that $ \Delta = \Delta_1 \otimes \Delta_2 $, $ \Gamma \vdash \prod_{i \in \indexSet_1}{\PT_i} \triangleright \Delta_1 $, and $ \Gamma \vdash \prod_{j \in \indexSet_2}{\PT_j} \triangleright \Delta_2 $.
	Let $ \Chan $ be fresh.
	Since we removed already all top-level restrictions with Case~\ref{algo2:PRes}, $ \Chan_k $ is free in \PT.
	Then, $ \Gamma \Set{ \Subst{\Chan}{\Chan_k} } \vdash \prod_{i \in \indexSet_1}{\PT_i} \Set{ \Subst{\Chan}{\Chan_k} } \triangleright \Delta_1 \Set{ \Subst{\Chan}{\Chan_k} } $.
	By Rule~\textsf{Par}, then $ \Gamma' \vdash \PT \triangleright \Delta' $, where $ \Gamma' = \Gamma, \GGlobS{a}[1]{\GT}[k1]{s}, \GGlobS{a}[2]{\GT}[k2]{s}[k] $ for some fresh $ \Chan[a]_1, \Chan[a]_2 $ and $ \Delta' = \Delta_1 \Set{ \Subst{\Chan}{\Chan_k} } \otimes \Delta_2 $.
	Since $ \Delta $ is coherent \wrt $ \Set{ \left( \GT_j, \Chan[n]_j \right) }_{j \in \indexSet[J]} $ and the actors of the partitions are distinct, $ \Delta' $ is coherent \wrt $ \Set{ \left( \GT_{k1}, \Chan \right), \left( \GT_{k2}, \Chan_k \right) } \cup \Set{ \left( \GT_j, \Chan[n]_j \right) }_{j \in \indexSet[J]} $.
	Hence, $ \Set{ \PT_i' }_{i \in \indexSet_1} \cup \Set{ \PT_i }_{i \in \indexSet_2} $ is well-typed \wrt $ \Set{ \left( \GT_{k1}, \Chan \right), \left( \GT_{k2}, \Chan_k \right) } \cup \Set{ \left( \GT_j, \Chan_j \right) }_{j \in \indexSet[J]} $.
\end{proof}

Case~\ref{algo2:PReq} maps the communication partners of a session invitation on an empty value update.
It preserves well-typedness in its recursive call \wrt the same global types.
By Lemma~\ref{lem:errorFreedom}, if one of the necessary prefixes for a session invitation is unguarded and this invitation is according to the interaction type system of \cite{BettiniAtall08} not dependent on another session then all other necessary prefixes are composed in parallel and are guarded by conditionals only.

\begin{lemma}[Case~\ref{algo2:PReq}]
	\label{lem:algo2PReq}
	If $ \prod_{i \in \indexSet}{\PT_i} $ is well-typed \wrt $ \Set{ \left( \GT_j, \Chan_j \right) }_{j \in \indexSet[J]} $, $ l \in \indexSet[J] $, none of the Cases~\ref{algo2:PRes}, \ref{algo2:PPar}, or \ref{algo2:PRep} can be applied, the session $ \Chan_l $ is not initialised, and this session initialisation does not depend on another session then there are $ k1, \ldots, kn \in \indexSet $ such that every conditional branch of $ \PT_{k1} $ is a version of $ \PReq{\Role[2]..\Role[n]}{\PT_1'} $, every conditional branch of $ \PT_{k2} $ is a version of $ \PAcc{\Role[2]}{\PT_2'} $, \ldots, every conditional branch of $ \PT_{kn} $ is a version of $ \PAcc{\Role[n]}{\PT_n'} $, and $ \Set{ \PT_1', \ldots, \PT_n' } \cup \Set{ \PT_i }_{i \in \indexSet \setminus \Set{ k1, \ldots, kn }} $ is well-typed \wrt $ \Set{ \left( \GT_j, \Chan_j \right) }_{j \in \indexSet[J]} $.
\end{lemma}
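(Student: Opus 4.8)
The plan is to follow the same strategy as the proof of Lemma~\ref{lem:algo2GCom} (Case~\ref{algo2:GCom}), replacing the analysis of a communication guard by the analysis of a session invitation. First I would unfold the assumption that $ \prod_{i \in \indexSet}{\PT_i} $ is well-typed \wrt $ \Set{ \left( \GT_j, \Chan_j \right) }_{j \in \indexSet[J]} $: there are $ \Gamma, \Delta $ such that the process is role-distributed and globally progressing, $ \Set{ \GT_j }_{j \in \indexSet[J]} $ are the global types in $ \Gamma $, $ \Delta $ is coherent \wrt $ \Set{ \left( \GT_j, \Chan[n]_j \right) }_{j \in \indexSet[J]} $, $ \Gamma \vdash \prod_{i \in \indexSet}{\PT_i} \triangleright \Delta $, and the session-channel bookkeeping holds. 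Since the session $ \Chan_l $ is not initialised, its obligations are still present and $ \Chan[n]_l $ figures in $ \Delta $ as a shared channel; writing $ \Chan[a] $ for $ \Chan[n]_l $, coherence then yields $ \Set{ \Role \mid \LInv{a}{r} \in \Delta } = \Roles{\GT_l} $ together with $ \GGlob{a}{G}[l] \in \Gamma $, where $ \Roles{\GT_l} = \Set{ \Role[1], \ldots, \Role[n] } $.

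Next I would locate the invitation prefixes in the process. By Figure~\ref{fig:typingRules} the obligations $ \LInv{a}{1}, \ldots, \LInv{a}{n} $ can only have been produced by Rules~\textsf{Req} and \textsf{Acc}, so $ \prod_{i \in \indexSet}{\PT_i} $ contains one request for role~\Role[1] and one accept for each role in $ \Set{ \Role[2], \ldots, \Role[n] } $. Using Lemma~\ref{lem:actorsAreSequential} together with the first two cases of Lemma~\ref{lem:errorFreedom}, and the assumption that this session initialisation does not depend on another session (so that global progress rules out guards by prefixes on other sessions), these prefixes are guarded only by conditionals. Because Cases~\ref{algo2:PRes}, \ref{algo2:PPar}, and \ref{algo2:PRep} cannot be applied, no restriction, parallel composition, or recursion sits above them inside the respective $ \PT_i $; hence there are distinct $ k1, \ldots, kn \in \indexSet $ in which the invitation prefixes occur, each reached purely through conditionals. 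Finally, Rule~\textsf{Cond} forces both branches of every conditional to carry the same session environment, so a branch producing the obligation $ \LInv{a}{r} $ compels its sibling to produce the same obligation; inductively this shows that \emph{every} conditional branch of $ \PT_{k1} $ is a version of $ \PReq{\Role[2]..\Role[n]}{\PT_1'} $ and every conditional branch of $ \PT_{kr} $ (for $ 2 \le r \le n $) is a version of $ \PAcc{\Role[r]}{\PT_r'} $.

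To establish well-typedness of the recursive-call input, I would read off the derivation of $ \Gamma \vdash \prod_{i \in \indexSet}{\PT_i} \triangleright \Delta $, which begins with applications of Rules~\textsf{Par} and \textsf{Cond} splitting off judgements for the invitation prefixes and for the remaining $ \PT_i $. Rules~\textsf{Req} and \textsf{Acc} then rewrite the shared-channel assignment for $ \Chan[a] $ into its used form and replace each obligation $ \LInv{a}{r} $ by $ \LLoc{r}{\Proj{\GT_l}{\Role}}[l] $, giving $ \Gamma' \vdash \PT_r' \triangleright \Delta_r, \LLoc{r}{\Proj{\GT_l}{\Role}}[l] $ for every role and $ \Gamma' \vdash \PT_i \triangleright \Delta_i $ for the untouched components. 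Note that, unlike Case~\ref{algo2:GCom}, the global type $ \GT_l $ is not replaced by a continuation here: the invitation merely activates the session, so the continuations are typed against the projections of the full $ \GT_l $. Recombining these judgements with Rules~\textsf{Par} and \textsf{Cond} yields a judgement for $ \Set{ \PT_1', \ldots, \PT_n' } \cup \Set{ \PT_i }_{i \in \indexSet \setminus \Set{ k1, \ldots, kn }} $, and role-distribution is preserved because the session's actors were already in distinct parallel branches.

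The step I expect to be the main obstacle is the coherence bookkeeping across the transition from an uninitialised to an initialised session: I must argue that replacing the invitation obligations of $ \Chan[a] $ by the full family $ \Set{ \LLoc{r}{\Proj{\GT_l}{\Role}}[l] \mid \Role \in \Roles{\GT_l} } $ leaves the environment coherent \wrt the \emph{same} set of global types. To this end I would choose the session channel introduced by Rules~\textsf{Req} and \textsf{Acc} as the fresh coherence name for $ l $, keep $ \Chan[n]_j $ for $ j \neq l $, and invoke the connection condition in the definition of well-typedness, which ties the shared channel $ \Chan[a] $ to the intended session channel $ \Chan_l $ precisely through the rewriting performed by \textsf{Req}/\textsf{Acc}. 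The secondary subtlety is the uniform-branch argument, where I rely on Rule~\textsf{Cond} propagating identical session environments to both branches so that no branch can diverge from the common invitation structure.
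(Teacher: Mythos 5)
Your proposal is correct, but it establishes the well-typedness of the recursive-call input by a genuinely different route than the paper. The first half coincides: both locate the prefixes via coherence, the non-dependency assumption, and the inapplicability of the other cases, concluding that each invitation prefix sits in its own component guarded only by conditionals. For the second half you perform surgery on the typing derivation: peel off the \textsf{Par}/\textsf{Cond} splits, apply \textsf{Req}/\textsf{Acc} to turn each obligation $\LInv{a}{r}$ into the assignment $\LLoc{r}{\Proj{\GT_l}{\Role}}$, recombine with \textsf{Par}, and then argue coherence and the shared-/session-channel connection condition by hand. The paper argues semantically instead: since the prefixes are unguarded modulo conditionals, the process can actually perform these steps via Rules~\textsf{If-T}, \textsf{If-F}, and \textsf{Link}, reducing to a process of the form $\PRes{\Chan_l}{\PT[Q]}$ where $\PT[Q]$ is the parallel composition of the continuations and the untouched components; Lemma~\ref{lem:subjectReduction} (subject reduction) together with Lemma~\ref{lem:roleDist} then yields well-typedness of this reduct \wrt the same set of types, and Lemma~\ref{lem:algo2PRes} strips off the restriction that \textsf{Link} introduced, which gives exactly the set claimed in the statement. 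The paper's detour through the operational semantics buys precisely the bookkeeping you identify as the main obstacle: the replacement of invitation obligations by the family of projections, the re-coherence of the session environment, and the preservation of role-distribution are already packaged inside the earlier lemmas, and the choice among conditional branches is made by the reduction itself rather than by a uniform-branch argument. Your derivation-level argument is more self-contained (it never invokes the reduction semantics) but must redo that bookkeeping explicitly; note also that both arguments are equally quiet about preservation of the global-progress condition, so that is not a point against your version.
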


\begin{proof}
	Assume $ \prod_{i \in \indexSet}{\PT_i} $ is well-typed \wrt $ \Set{ \left( \GT_j, \Chan_j \right) }_{j \in \indexSet[J]} $, $ l \in \indexSet[J] $, none of the Cases~\ref{algo2:PRes}, \ref{algo2:PPar}, or \ref{algo2:PRep} can be applied, the session $ \Chan_l $ is not initialised, and this session initialisation does not depend on another session, \ie there are $ \Gamma, \Delta $ such that \PT is role-distributed, $  \Set{ \GT_j }_{j \in \indexSet[J]} $ are the global types in $ \Gamma $, $ \Delta $ is coherent \wrt $ \Set{ \left( \GT_j, \Chan[n]_j \right) }_{j \in \indexSet[J]} $, $ \Gamma \vdash \PT \triangleright \Delta $, \PT is globally progressing, and for all $ j \in \indexSet[J] $ either $ \Chan[n]_j = \Chan_j $ or $ \Gamma \vdash \PT \triangleright \Delta $ connects $ \Chan[n]_j $ with $ \Chan_j $.
	By coherence and since there is no dependency to other sessions, there are $ k1, \ldots, kn \in \indexSet $ such that every conditional branch of $ \PT_{k1} $ is a variant of $ \PReq{\Role[2]..\Role[n]}{\PT_1'} $, every conditional branch of $ \PT_{k2} $ is a variant of $ \PAcc{\Role[2]}{\PT_2'} $, \ldots, every conditional branch of $ \PT_{kn} $ is a variant of $ \PAcc{\Role[n]}{\PT_n'} $.
	By the Rules~\textsf{If-T}, \textsf{If-F}, and \textsf{Link} of Figure~\ref{fig:reductionSemantics}, $ \PT \Steps \PRes{\Chan_l}{\left( \PPar{\prod_{i \in \Set{ k1, \ldots, kn }}{\PT_i'}}{\prod_{i \in \indexSet \setminus \Set{ k1, \ldots, kn }}{\PT_i}} \right)} $.
	By Lemma~\ref{lem:subjectReduction}, Lemma~\ref{lem:roleDist}, and since only conditionals and a session initialisation is performed in these steps, then $ \PRes{\Chan_l}{\left( \PPar{\prod_{i \in \Set{ k1, \ldots, kn }}{\PT_i'}}{\prod_{i \in \indexSet \setminus \Set{ k1, \ldots, kn }}{\PT_i}} \right)} $ is well-typed \wrt $ \Set{ \left( \GT_j, \Chan_j \right) }_{j \in \indexSet[J]} $.
	By Lemma~\ref{lem:algo2PRes}, then $ \Set{ \PT_1', \ldots, \PT_n' } \cup \Set{ \PT_i }_{i \in \indexSet \setminus \Set{ k1, \ldots, kn }} $ is well-typed \wrt the types $ \Set{ \left( \GT_j, \Chan_j \right) }_{j \in \indexSet[J]} $.
\end{proof}

Finally, we prove Theorem~2 of \cite{petersWagnerNestmann19}:
\begin{quote}
	If \PT is well-typed \wrt $ \Set{ \left( \GT_j, \Chan_j \right) }_{j \in \indexSet[J]} $ then the abstraction \MapSGPI{\Set{ \PT }}{\Set{ \left( \GT_j, \Chan_j \right) }_{j \in \indexSet[J]}} is defined and returns a \SGP-process.
\end{quote}

\begin{proof}[Proof of Theorem~2 of \cite{petersWagnerNestmann19}]
	Assume $ \PT $ is well-typed \wrt $ \Set{ \left( \GT_j, \Chan_j \right) }_{j \in \indexSet[J]} $.
	We proceed with an induction over the set $ \Set{ \left( \GT_j, \Chan_j \right) }_{j \in \indexSet[J]} $ and the structure of the types in this set (Definition~\ref{def:globalTypes}).
	\begin{description}
		\item[Case of $ {\indexSet[J]} = \emptyset $:]
			By Case~\ref{algo2:Empty}, then $ \MapSGPI{\Set{ \PT_i }_{i \in \indexSet}}{\Set{ \left( \GT_j, \Chan_j \right) }_{j \in \indexSet[J]}} $ is defined and returns \SEnd.
		\item[Case of $ {\indexSet[J] = \indexSet[J]_1 \cup \indexSet[J]_2} $, $ {\indexSet[J]_1 \cap \indexSet[J]_2 = \emptyset} $, and $ {\indexSet[J]_1}, {\indexSet[J]_2} $ are independent:]
			$ $\\
			By Lemma~\ref{lem:algo2split}, then there are $ \indexSet_1, \indexSet_2, k \in \Set{ 1, 2 } $ such that $ \indexSet_1 \cup \indexSet_2 = \indexSet $, $ \bigcup_{i \in \indexSet_k} \Actors{\PT_i} = \Set{ \Actor{s_j}{r} \mid j \in \indexSet[J]_k \land \Role \in \Roles{\GT_j} } $, and the composition $ \Set{ \PT_i }_{i \in \indexSet_k} $ is well-typed \wrt $ \Set{ \left( \GT_j, \Chan_j \right) }_{j \in \indexSet[J]_k} $.
			Since the two parts do not share actors and we indicate input variables with actors, $ S_1 $ and $ S_2 $ are independent.
			By the induction hypothesis, then both $ \MapSGPI{\Set{ \PT_i }_{i \in \indexSet_k}}{\Set{ \left( \GT_j, \Chan_j \right) }_{j \in \indexSet[J]_k}} $ for $ k \in \Set{ 1, 2} $ are defined and return the \SGP-processes $ \ST_1 $ and $ \ST_2 $.
			By Case~\ref{algo2:split}, then $ \MapSGPI{\Set{ \PT_i }_{i \in \indexSet}}{\Set{ \left( \GT_j, \Chan_j \right) }_{j \in \indexSet[J]}} $ is defined and returns $ \SPar{\ST_1}{\ST_2} $.
		\item[Case of $ \GT_l = \GCom{\Role_1}{\Role_2}{\Set{ \GLab{\Label_i}{\tilde{\Sort}_i}{\GT_i} }_{i \in \indexSet}} $ with $ l \in {\indexSet[J]} $ and $ \GT_l $ is independent:]
			$  $\\
			By the Lemmata~\ref{lem:algo2PRes}, \ref{lem:algo2PPar}, \ref{lem:algo2PRep}, and \ref{lem:algo2PReq}, the mapping can remove restrictions, split parallel compositions, unfold recursions, and initialise the session $ \Chan_l $ of $ \GT_l $ of the process without altering the global types or violating well-typedness.
			Let $ \prod_{i \in \indexSet}{\PT_i} $ be the result of these cases such that $ \prod_{i \in \indexSet}{\PT_i} $ is well-typed \wrt $ \Set{ \left( \GT_j, \Chan_j \right) }_{j \in \indexSet[J]} $.
			By Lemma~\ref{lem:algo2GCom}, there are $ m, o \in \indexSet $, $ n \in \indexSet[K] $ and $ \indexSet[K] \subseteq \indexSet[K]' $ such that in every conditional branch of $ \PT_m $ there is a version of $ \PT_m' = \PSend[\Chan_{\mathnormal{l}}]{\Role_1}{\Role_2}{\Label_n}{\tilde{\expr}}{\PT[Q]} $ and in every conditional branch of $ \PT_o $ there is a version of $ \PT_o' = \PGet[\Chan_{\mathnormal{l}}]{\Role_2}{\Role_1}{\Set{ \PLab{\Label_k}{\tilde{\Args}_k}{\PT[Q]_k} }_{k \in \indexSet[K]'}} $.
			By Lemma~\ref{lem:algo2PCond}, all conditionals that guard either $ \PT_m' $ or $ \PT_o' $ can be mapped on \SGP-conditionals without violating well-typedness, where we possibly have to apply the Lemmata~\ref{lem:algo2PRes}, \ref{lem:algo2PPar}, and \ref{lem:algo2PRep} in between and the order of cases in Definition~\ref{def:algorithm2} allows to resolve exactly these conditionals before resolving the communication in $ \GT_l $.
			This is because the structure of $ \GT_l $ rules out the Cases~\ref{algo2:Empty}, \ref{algo2:TVar}, \ref{algo2:GPar}, and \ref{algo2:GRec}, the fact that the communication that guards $ \GT_l $ is not dependent on another session rules out Case~\ref{algo2:PReq}, and Case~\ref{algo2:GCom} becomes applicable as soon as all of these conditionals are resolved.
			By Lemma~\ref{lem:algo2GCom}, then $ \left( \PPar{\PPar{\PT[Q]_n\Set{ \Subst{\tilde{\Args}_n@\Actor{s_{\mathnormal{l}}}{r_2}}{\tilde{\Args}_n} }}{\PT[Q]}}{\left( \prod_{i \in \indexSet \setminus \Set{ m, o }}{\PT_i} \right)} \right) $ is well-typed \wrt $ \Set{ \left( \GT_j, \Chan_j \right) }_{j \in \indexSet[J] \setminus \Set{ l }} \cup \Set{ \left( \GT_{l, k},  \Chan_l \right) } $.
			By Figure~\ref{fig:typingRules}, all branches of the conditionals are well-typed \wrt $ \Set{ \left( \GT_j, \Chan_j \right) }_{j \in \indexSet[J] \setminus \Set{ l }} \cup \Set{ \left( \GT_{l, k},  \Chan_l \right) } $, \ie we unguard versions of $ \PT_m' $ and $ \PT_o' $---that may differ in their labels but only implement labels that are specified in the type---in all branches.
			By the induction hypothesis, then
			$ \MapSGPI{\mathcal{P}}{\mathcal{G}} $ with $ \mathcal{P} = \Set{ \PT[Q]_m\Set{ \Subst{\tilde{\Args}_m@\Actor{s_{\mathnormal{l}}}{r_2}}{\tilde{\Args}_m} }, \PT[Q] } \cup \Set{ \PT_i }_{i \in \indexSet \setminus \Set{ m, o }} $ and $ \mathcal{G} = \Set{ \left( \GT_j, \Chan_j \right) }_{j \in \indexSet[J] \setminus \Set{ l }} \cup \Set{ \left( \GT_{l, k}, \Chan_l \right) } $ is defined and returns a \SGP-process $ \ST' $ for each of these branches.
			Let \ST be the result of putting the respective version of $ \SAssign{\tilde{\Args}_m@\Actor{s_{\mathnormal{l}}}{r_2}}{\tilde{\expr}}{\ST'} $ in the respective branch of the generated \SGP-conditionals.
			By the Cases~\ref{algo:GCom} and \ref{algo:PCond}, then $ \MapSGPI{\Set{ \PT_i }_{i \in \indexSet}}{\mathcal{G}} $ is defined and returns \ST.
		\item[Case of $ \GT_l = \GPar{\GT_1}{\GT_2} $ with $ l \in {\indexSet[J]} $:]
			By the Lemmata~\ref{lem:algo2PRes}, \ref{lem:algo2PPar}, and \ref{lem:algo2PRep}, the mapping can remove restrictions, split parallel compositions, and unfold recursions of the process without altering the global type or violating well-typedness.
			Let $ \prod_{i \in \indexSet}{\PT_i} $ be the result of these cases such that $ \prod_{i \in \indexSet}{\PT_i} $ is well-typed \wrt $ \Set{ \left( \GT_j, \Chan_j \right) }_{j \in \indexSet[J]} $.
			By Lemma~\ref{lem:algo2GPar}, then there are $ \indexSet_1, \indexSet_2 $ such that $ \indexSet_1 \cup \indexSet_2 = \indexSet $, $ \indexSet_1 \cap \indexSet_2 = \emptyset $, $ \Set{ \PT_i }_{i \in \indexSet_1} $ implements all actors of $ \left( \GT_{k1}, \Chan_k \right) $ but no actor of $ \left( \GT_{k2}, \Chan_k \right) $, and $ \PT_i' $ is obtained from $ \PT_i $ by substituting or alpha converting $ \Chan_k $ by some fresh $ \Chan $, and $ \Set{ \PT_i' }_{i \in \indexSet_1} \cup \Set{ \PT_i }_{i \in \indexSet_2} $ is well-typed \wrt $ \Set{ \left( \GT_{k1}, \Chan \right), \left( \GT_{k2}, \Chan_k \right) } \cup \Set{ \left( \GT_j, \Chan_j \right) }_{j \in \indexSet[J]} $.
			By the induction hypothesis, then $ \MapSGPI{\mathcal{P}}{\mathcal{G}} $ with $ \mathcal{P} = \Set{ \PT_i' }_{i \in \indexSet_1} \cup \Set{ \PT_i }_{i \in \indexSet_2} $ and $ \mathcal{G} = \Set{ \left( \GT_{k1}, \Chan \right), \left( \GT_{k2}, \Chan_k \right) } \cup \Set{ \left( \GT_j, \Chan_j \right) }_{j \in \indexSet[J]} $ is defined and returns a \SGP-process \ST.
			By Case~\ref{algo2:GPar2}, then $ \MapSGPI{\Set{ \PT_i }_{i \in \indexSet}}{\Set{ \left( \GT_j, \Chan_j \right) }_{j \in \indexSet[J]}} $ is defined and returns \ST.
		\item[Case of $ \GT_l = \GRec{\TVar_l}{\GT_l'} $ with $ l \in {\indexSet[J]} $:]
			By the dependency relation, the loops of different interleaved session are unified, \ie by performing the other cases we can reduce the types such that $ \GT_j = \GRec{\TVar_j}{\GT_j'} $ for all $ j \in \indexSet[J] $.
			By the Lemmata~\ref{lem:algo2PRes}, \ref{lem:algo2PPar}, \ref{lem:algo2PRep}, and \ref{lem:algo2PReq} the mapping can remove restrictions, split parallel compositions, unfold recursions, and initialise sessions of the process without altering the global type or violating well-typedness.
			Let $ \prod_{i \in \indexSet}{\PT_i} $ be the result of these cases such that $ \prod_{i \in \indexSet}{\PT_i} $ is well-typed \wrt $ \Set{ \left( \GT_j, \Chan_j \right) }_{j \in \indexSet[J]} $.
			Since type variables are bound and guarded in global types, $ \GT_l $ can reduce to a type variable only after Case~\ref{algo2:GRec} has introduced a \SGP-recursion.
			Note that neither Case~\ref{algo2:GRec} nor Case~\ref{algo2:TVar} introduce requirements on the considered process.
			Unfortunately, the arguments of the recursive call of Case~\ref{algo2:GRec} do not preserve well-typedness, \ie $ \Set{ \PT_i }_{i \in \indexSet} $ is not well-typed \wrt $ \Set{ \left( \GT_j', \Chan_j \right) }_{j \in \indexSet[J]} $, because we removed the recursion binder from the type but not the system.
			By Figure~\ref{fig:typingRules}, $ \Set{ \PT_i }_{i \in \indexSet} $ will behave as required by $ \Set{ \left( \GT_j', \Chan_j \right) }_{j \in \indexSet[J]} $ until $ \Set{ \left( \GT_j', \Chan_j \right) }_{j \in \indexSet[J]} $ is reduced to $ \mathcal{G} = \Set{ \TVar_j }_{j \in \indexSet[J]} $ and, thus, there are $ \PT_i' $ such that $ \Set{ \PT_i' }_{i \in \indexSet} $ is well-typed \wrt $ \Set{ \left( \GT_j', \Chan_j \right) }_{j \in \indexSet[J]} $.
			By the induction hypothesis, then $ \MapSGPI{\Set{ \PT_i' }_{i \in \indexSet}}{\Set{ \left( \GT_j', \Chan_j \right) }_{j \in \indexSet[J]}} $ is defined and returns a \SGP-process \ST.
			By Definition~\ref{def:algorithm2}, the mapping will follow the structure of $ \Set{ \left( \GT_j', \Chan_j \right) }_{j \in \indexSet[J]} $ to reduce the system until $ \Set{ \left( \GT_j', \Chan_j \right) }_{j \in \indexSet[J]} $ is reduced to $ \mathcal{G} $ and then Case~\ref{algo2:TVar} will ignore the remainder of the system.
			Thus, the mapping considers only the parts of $ \Set{ \PT_i }_{i \in \indexSet} $ that are already captured in $ \Set{ \PT_i' }_{i \in \indexSet} $, \ie we have $ \MapSGPI{\Set{ \PT_i }_{i \in \indexSet}}{\Set{ \left( \GT_j', \Chan_j \right) }_{j \in \indexSet[J]}} = \MapSGPI{\Set{ \PT_i' }_{i \in \indexSet}}{\Set{ \left( \GT_j', \Chan_j \right) }_{j \in \indexSet[J]}} = \ST $.
			By Case~\ref{algo2:GRec}, then $ \MapSGPI{\Set{ \PT_i }_{i \in \indexSet}}{\Set{ \left( \GT_j, \Chan_j \right) }_{j \in \indexSet[J]}} $ is defined and returns $ \SRep{\SVar_{\mathcal{G}}}{\ST} $.
		\item[Case of $ \GT_l = \TVar $ with $ j \in {\indexSet[J]} $:]
			By the dependency relation, the loops of different interleaved session are unified, \ie by performing the other cases we can reduce the types such that $ \GT_j = \TVar_j $ for all $ j \in \indexSet[J] $.
			By Case~\ref{algo2:TVar}, then $ \MapSGP{\Set{ \PT_i }_{i \in \indexSet}}{\Set{ \left( \GT_j, \Chan_j \right) }_{j \in \indexSet[J]}} $ is defined and returns $ \SVar_{\mathcal{G}} $ with $ \mathcal{G} = \Set{ \TVar_j }_{j \in \indexSet[J]} $.
		\item[Case of $ \GT_l = \GEnd $ with $ l \in {\indexSet[J]} $:]
			By Lemma~\ref{lem:algo2GEnd}, then \PT contains no communication prefixes on $ \Chan_k $ and cannot invite the session $ \Chan_k $.
			Then, $ \Set{ \PT_i }_{i \in \indexSet} $ is well-typed \wrt $ \Set{ \left( \GT_j, \Chan_j \right) }_{j \in \indexSet[J] \setminus \Set{ l }} $.
			By the induction hypothesis, then we know that $ \MapSGPI{\Set{ \PT_i }_{i \in \indexSet}}{\Set{ \left( \GT_j, \Chan_j \right) }_{j \in \indexSet[J] \setminus \Set{ l }}} $ is defined and returns a \SGP-process \ST.
			By Case~\ref{algo2:GEnd}, then $ \MapSGPI{\Set{ \PT_i }_{i \in \indexSet}}{\Set{ \left( \GT_j, \Chan_j \right) }_{j \in \indexSet[J]}} $ is defined and returns \ST.
	\end{description}
\end{proof}

Alpha conversion may influence the outcome of the mapping, \ie $ \PT \equiv_{\alpha} \PT' $ does not necessarily imply $ \MapSGPS{\Set{ \PT }}{\Set{ \left( \GT_j, \Chan_j \right) }_{j \in \indexSet[J]}} = \MapSGPS{\Set{ \PT' }}{\Set{ \left( \GT_j, \Chan_j \right) }_{j \in \indexSet[J]}} $, because the renaming of input bounded variables changes the names in the vector of the generated \SGP-system.
Because of that, we assume in this section that no sequence of steps will use alpha conversion to rename input binders.
Apart from that, structural congruence does not influence this mapping, because of the Cases~\ref{algo2:PRes}, \ref{algo2:PPar}, and \ref{algo2:PRep}.

\begin{lemma}
	\label{lem:algoStruc}
	Let $ \mathcal{G} = \Set{ \left( \GT_j, \Chan_j \right) }_{j \in \indexSet[J]} $.\\
	If $ \MapSGPS{\Set{ \PT }}{\mathcal{G}} $ is defined and $ \PT \equiv \PT' $, where no alpha conversion is used to rename input binders, then $ \MapSGPS{\Set{ \PT }}{\mathcal{G}} = \MapSGPS{\Set{ \PT' }}{\mathcal{G}} $.
\end{lemma}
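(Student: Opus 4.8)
The plan is to prove the slightly stronger, symmetric statement: for $\PT \equiv \PT'$ (with no input binder renamed by $\equiv_{\alpha}$), if \emph{either} $\MapSGPS{\Set{\PT}}{\mathcal{G}}$ or $\MapSGPS{\Set{\PT'}}{\mathcal{G}}$ is defined then both are and they coincide; the lemma is the special case where the left one is defined. I would argue by induction on the derivation of $\PT \equiv \PT'$ as the least congruence generated by $\equiv_{\alpha}$ and the listed axioms. Reflexivity is immediate, symmetry is trivial in the symmetric form, and transitivity follows by composing the two resulting equalities (definedness propagating along the chain). The core observation---flagged just before the lemma---is that the algorithm of Definition~\ref{def:algorithm2} consumes a \emph{set} of processes and begins by exhaustively applying the decomposition Cases~\ref{algo2:PRes}, \ref{algo2:PPar}, and \ref{algo2:PRep}, which are exactly the structural axioms for restriction, parallel composition, and recursion read as rewrite steps of the algorithm.

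I would first dispatch each axiom when it is applied at the top level of a member $\PT_k$ of the current process set. Commutativity and associativity of $\mid$ are free, since after (repeated) Case~\ref{algo2:PPar} both sides decompose to the same \emph{set} of parallel components, and the algorithm never inspects order or grouping. For the unit $\PPar{\PT_k}{\PEnd}$, Case~\ref{algo2:PPar} yields $\Set{\PT_k, \PEnd}$ together with the rest; here I rely on the auxiliary fact that $\PEnd$ carries no actor and no communication prefix, so it matches none of the process-form cases and is silently ignored by the type-driven cases and by the terminating Cases~\ref{algo2:Empty} and \ref{algo2:TVar}, giving the same output as on $\Set{\PT_k}$ with the rest. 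Recursion unfolding is immediate because Case~\ref{algo2:PRep} rewrites $\PRep{\PVar}{\PT_k'}$ to precisely $\PT_k'\Set{\Subst{\PRep{\PVar}{\PT_k'}}{\PVar}}$, so the two inputs share the identical recursive call. Restriction swapping and $\PRes{\Chan}{\PEnd}$ are handled by Case~\ref{algo2:PRes} stripping one restriction at a time (either order removes both, and $\PRes{\Chan}{\PEnd}$ collapses to the inert $\PEnd$), and scope extrusion reduces on both sides to the same set $\Set{\PT_1, \PT_2}$ by stripping the restriction and splitting in opposite orders---the side condition $\Chan \notin \FreeNames{\PT_1}$ preventing capture, in keeping with our standing assumption. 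Finally, $\equiv_{\alpha}$ on session channels and on process/type variables does not change the output, because session channels are removed by Case~\ref{algo2:PRes} (the resulting \SGP-process contains none) and the names of vector variables stem solely from input binders annotated with their actors; renaming those binders is excluded by hypothesis.

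The main obstacle is context closure: when an axiom is applied \emph{under} a communication prefix, a session-invitation prefix, or a conditional, the decomposition Cases~\ref{algo2:PRes}--\ref{algo2:PRep} do not reach it. To handle this I would induct not on process structure but on the algorithm's termination measure (the combined size of the process set and the global types, as used in the proof of Theorem~2 of \cite{petersWagnerNestmann19}), and argue by the shape of the enclosing context. Under $\mid$, $\PRes{\Chan}{\cdot}$, or $\PRep{\PVar}{\cdot}$ the matching decomposition case fires on both configurations and the hypothesis applies to the strictly smaller configuration in which the congruent subterms have surfaced. Under a sender or receiver prefix, or a session request/accept prefix, the relevant type-driven Case~\ref{algo2:GCom} or \ref{algo2:PReq} consumes the prefix identically on both sides and re-inserts the continuations---which remain congruent---into the process set, decreasing the measure; under a conditional, Case~\ref{algo2:PCond} copies the set into both branches, where the congruence is preserved. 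The delicate point, which the measure-based induction resolves, is that a single deep congruence step may need several algorithm steps to become visible, so structural induction on the process alone would not suffice; once the type-driven cases have unguarded the affected continuations, the induction hypothesis on the smaller configuration closes every case and simultaneously transfers definedness from one side to the other.
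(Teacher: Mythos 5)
Your proof is correct and its core is the same as the paper's: an induction over the rules of structural congruence in which each axiom is discharged by observing that the decomposition Cases~\ref{algo2:PRes}, \ref{algo2:PPar}, and \ref{algo2:PRep} reduce both sides to the same \emph{set} of processes, with \PEnd{} inert and session channels never reaching the output, and with input binders protected by the standing assumption on alpha conversion. Where you go beyond the paper is in the closure cases: the paper's proof lists only the top-level axiom instances and leaves reflexivity, symmetry, transitivity, and---most importantly---congruence under prefixes implicit, whereas you strengthen the statement to a symmetric one and handle an axiom applied under a send/receive, request/accept, or conditional prefix by letting the type-driven Cases~\ref{algo2:GCom}, \ref{algo2:PReq}, and \ref{algo2:PCond} re-expose the congruent continuations and then appealing to an inner induction hypothesis. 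That extra work buys a genuinely complete proof, at the cost of needing a generalized induction hypothesis over sets of processes whose corresponding members are congruent, rather than over the singleton $ \Set{ \PT } $ of the lemma statement. One repair is needed in that inner induction: the measure you name, the combined size of the process set and the global types, is \emph{not} decreasing---the paper itself notes that Case~\ref{algo2:PRep} blows up one of the processes and Case~\ref{algo2:PCond} copies the remaining ones---so you should instead induct on the derivation tree of the computation of $ \MapSGPI{\Set{ \PT }}{\mathcal{G}} $, which is finite precisely because that mapping is assumed to be defined; with that substitution your argument goes through.
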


\begin{proof}
	Assume that $ \MapSGPS{\Set{ \PT }}{\mathcal{G}} $ is defined.
	We proceed with an induction over the rules of structural congruence that are used to obtain $ \PT \equiv \PT' $.
	\begin{description}
		\item[Case of Alpha Conversion:]
			In this case $ \PT \equiv_{\alpha} \PT' $, \ie $ \PT $ and $ \PT' $ differ only by renamings of names used for restriction binders.
			By Definition~\ref{def:algorithm2} and Lemma~\ref{lem:algo2PRes}, then $ \MapSGPI{\Set{ \PT }}{\mathcal{G}} = \MapSGPI{\Set{ \PT' }}{\mathcal{G}} $.
			Hence, $ \MapSGPS{\Set{ \PT }}{\mathcal{G}} = \MapSGPS{\Set{ \PT' }}{\mathcal{G}} $.
		\item[Case of $ \PPar{\PT[Q]}{\PEnd} \equiv {\PT[Q]} $:]
			In this case $ \PT = \PPar{\PT[Q]}{\PEnd} $ and $ \PT' = \PT[Q] $.
			By Definition~\ref{def:algorithm2}, $ \MapSGPI{\Set{ \PT }}{\mathcal{G}} = \MapSGPI{\Set{ \PT[Q], \PEnd }}{\mathcal{G}} = \MapSGPI{\Set{ \PT[Q] }}{\mathcal{G}} $.
			Then $ \MapSGPS{\Set{ \PT }}{\mathcal{G}} = \MapSGPS{\Set{ \PT' }}{\mathcal{G}} $.
		\item[Case of $ \PPar{\PT[Q]_1}{\PT[Q]_2} \equiv \PPar{\PT[Q]_2}{\PT[Q]_1} $:]
			In this case $ \PT = \PPar{\PT[Q]_1}{\PT[Q]_2} $ and $ \PT' = \PPar{\PT[Q]_2}{\PT[Q]_1} $.
			By Definition~\ref{def:algorithm2}, then $ \MapSGPI{\Set{ \PT }}{\mathcal{G}} = \MapSGPI{\Set{ \PT[Q]_1, \PT[Q]_2 }}{\mathcal{G}} = \MapSGPI{\Set{ \PT' }}{\mathcal{G}} $.
			Then, we have $ \MapSGPS{\Set{ \PT }}{\mathcal{G}} = \MapSGPS{\Set{ \PT' }}{\mathcal{G}} $.
		\item[Case of $ \PPar{\PT[Q]_1}{\left( \PPar{\PT[Q]_2}{\PT[Q]_3} \right)} \equiv \PPar{\left( \PPar{\PT[Q]_1}{\PT[Q]_2} \right)}{\PT[Q]_3} $:]
			In this case $ \PT = \PPar{\PT[Q]_1}{\left( \PPar{\PT[Q]_2}{\PT[Q]_3} \right)} $ and $ \PT' = \PPar{\left( \PPar{\PT[Q]_1}{\PT[Q]_2} \right)}{\PT[Q]_3} $.
			Then $ \MapSGPI{\Set{ \PT }}{\mathcal{G}} = \MapSGPI{\Set{ \PT[Q]_1, \PT[Q]_2, \PT[Q]_3 }}{\mathcal{G}} = \MapSGPI{\Set{ \PT' }}{\mathcal{G}} $, because of Definition~\ref{def:algorithm2}.
			Then, $ \MapSGPS{\Set{ \PT }}{\mathcal{G}} = \MapSGPS{\Set{ \PT' }}{\mathcal{G}} $.
		\item[Case of $ \PRes{\Chan}{\PRes{\Chan'}{\PT[Q]}} \equiv \PRes{\Chan'}{\PRes{\Chan}{\PT[Q]}} $:]
			In this case we have $ \PT = \PRes{\Chan}{\PRes{\Chan'}{\PT[Q]}} $ and $ \PT' = \PRes{\Chan'}{\PRes{\Chan}{\PT[Q]}} $.
			By Definition~\ref{def:algorithm2}, then $ \MapSGPI{\Set{ \PT }}{\mathcal{G}} = \MapSGPI{\Set{ \PT[Q] }}{\mathcal{G}} = \MapSGPI{\Set{ \PT' }}{\mathcal{G}} $.
			Then, we have $ \MapSGPS{\Set{ \PT }}{\mathcal{G}} = \MapSGPS{\Set{ \PT' }}{\mathcal{G}} $.
		\item[Case of $ \PRes{\Chan}{\PEnd} \equiv \PEnd $:]
			In this case $ \PT = \PRes{\Chan}{\PEnd} $ and $ \PT' = \PEnd $.
			By Definition~\ref{def:algorithm2}, $ \MapSGPI{\Set{ \PT }}{\GT} = \MapSGPI{\Set{ \PEnd }}{\GT} $.
			Then $ \MapSGPS{\Set{ \PT }}{\mathcal{G}} = \MapSGPS{\Set{ \PT' }}{\mathcal{G}} $.
		\item[Case of $ \PRep{\PVar}{\PT[Q]} \equiv {\PT[Q]}\Set{ \Subst{\PRep{\PVar}{\PT[Q]}}{\PVar} } $:]
			In this case we have $ \PT = \PRep{\PVar}{\PT[Q]} $ and $ \PT' = {\PT[Q]}\Set{ \Subst{\PRep{\PVar}{\PT[Q]}}{\PVar} } $.
			Then $ \MapSGPI{\Set{ \PT }}{\mathcal{G}} = \MapSGPI{\Set{ \PRep{\PVar}{\PT[Q]} }}{\mathcal{G}} = \MapSGPI{\Set{ {\PT[Q]}\Set{ \Subst{\PRep{\PVar}{\PT[Q]}}{\PVar} } }}{\mathcal{G}} $, because of Definition~\ref{def:algorithm2}.
			Then, we have $ \MapSGPS{\Set{ \PT }}{\mathcal{G}} = \MapSGPS{\Set{ \PT' }}{\mathcal{G}} $.
		\item[Case of $ \PRes{\Chan}{\left( \PPar{\PT[Q]_1}{\PT[Q]_2} \right)} \equiv \PPar{\PT[Q]_1}{\PRes{\Chan}{\PT[Q]_2}} $ if $ \Chan \notin \FreeNames{\PT[Q]_1} $:]
			In this case $ \PT = \PRes{\Chan}{\left( \PPar{\PT[Q]_1}{\PT[Q]_2} \right)} $ and $ \PT' = \PPar{\PT[Q]_1}{\PRes{\Chan}{\PT[Q]_2}} $.
			By Definition~\ref{def:algorithm2}, then $ \MapSGPI{\Set{ \PT }}{\mathcal{G}} = \MapSGPI{\Set{ \PT[Q]_1, \PT[Q]_2 }}{\mathcal{G}} = \MapSGPI{\Set{ \PT' }}{\mathcal{G}} $.
			Finally, we have $ \MapSGPS{\Set{ \PT }}{\mathcal{G}} = \MapSGPS{\Set{ \PT' }}{\mathcal{G}} $.
	\end{description}
\end{proof}

Now we analyse how the original system and its sequentialisation into a \SGP-system are related.
First we prove that \SGP-systems introduce no new behaviour in Theorem~3 of \cite{petersWagnerNestmann19}:
\begin{quote}
	Let $ \mathcal{G} = \Set{ \left( \GT_j, \Chan_j \right) }_{j \in \indexSet[J]} $.
If \PT is well-typed \wrt $ \mathcal{G} $ then for all $ \MapSGPS{\Set{ \PT }}{\mathcal{G}} \Step \SST' $ there exist $ \PT', \mathcal{G}' $ such that $ \PT \Step \PT' $, $ \PT' $ is well-typed \wrt $ \mathcal{G}' $, and $ \MapSGPS{\Set{ \PT' }}{\mathcal{G}'} \SSequiv \SST' $.
\end{quote}

\begin{proof}[Proof of Theorem~3 of \cite{petersWagnerNestmann19}]
	Assume that \PT is well-typed \wrt $ \mathcal{G} $.
	We proceed by an induction on the reduction rules that are used to derive the step $ \MapSGPS{\Set{ \PT }}{\mathcal{G}} \Step \SST' $.
	\begin{description}
		\item[Case of Rule~\textsf{Ass}:]
			In this case we have $ \MapSGPS{\Set{ \PT }}{\mathcal{G}} = \SSys{\Vect{V}}{\SAssign{\tilde{v}}{\tilde{e}}{\ST}} $ and $ \SST' = \Eval{\SSys{\Update{\Vect{V}}{\tilde{v}}{\tilde{\expr}}}{\ST}} $.
			By Lemma~\ref{lem:algo2GCom}, then either $ \SAssign{\tilde{v}}{\tilde{e}}{\ST} $ is the empty assignment $ \tau $ that resulted from mapping a session initialisation (Case~\ref{algo2:PReq}) or this assignment is not empty and resulted from mapping a communication of an initialised session (Case~\ref{algo2:GCom}).
			\begin{description}
				\item[Case of $ \SAssign{\tilde{v}}{\tilde{e}}{\ST} = \tau.\ST $:] Since the value assignment is unguarded,
					$$ \PT \equiv \PRes{\tilde{\Chan}}{\left( \PPar{\PPar{\PReq{\Role[2]..\Role[n]}{\PT_1}}{\PPar{\PAcc{\Role[2]}{\PT_2}}{\PPar{\ldots}{\PAcc{\Role[n]}{\PT_n}}}}}{\PT[Q]} \right)}, $$
					where we do not alpha-convert input binders but use alpha conversion to ensure that $ \Chan $ is not contained in $ \left( \tilde{\Chan} \cup \BoundNames{\PPar{\PT_1}{\PPar{\PT_2}{\PPar{\ldots}{\PT_n}}}} \cup \Names{\PT[Q]} \right) $.
					By Figure~\ref{fig:reductionSemantics}, then $ \PT \Step \PT' = \PRes{\tilde{\Chan}}{\PRes{\Chan}{\left( \PPar{\PPar{\PT_1}{\PPar{\PT_2}{\PPar{\ldots}{\PT_n}}}}{\PT[Q]} \right)}} $, where we do again not alpha-convert input bounded names.
					By the Lemmata~\ref{lem:subjectReduction} and \ref{lem:algo2PReq}, then $ \PT' $ is well-typed \wrt $ \mathcal{G} $.
					By Definition~\ref{def:algorithm2} and Theorem~1 of \cite{petersWagnerNestmann19}, then $ \MapSGPS{\Set{ \PT' }}{\mathcal{G}} = \SST' $.
				\item[Case of $ \SAssign{\tilde{v}}{\tilde{e}}{\ST} \neq \tau.\ST $:] By Definition~\ref{def:algorithm2}, $ \GT_l = \GCom{\Role_1}{\Role_2}{\Set{ \GLab{\Label_i}{\tilde{\Sort}_i}{\GT_{l, i}} }_{i \in \indexSet}} $ for some $ l \in \indexSet[J] $ and $ \Chan_l $ is minimal \wrt to the dependency relation. Since the value assignment is unguarded,
					$$ \PT \equiv \PRes{\tilde{\Chan}}{\left( \PPar{\PPar{\PSend[\Chan_{\mathnormal{l}}]{\Role_1}{\Role_2}{\Label_m}{\tilde{\expr}}{\PT_1}}{\PGet[\Chan_{\mathnormal{l}}]{\Role_2}{\Role_1}{\Set{ \PLab{\Label_j}{\tilde{\Args}_j}{\PT_{2, j}} }_{j \in \indexSet[J]}}}}{\PT[Q]} \right)} $$
					with $ m \in \indexSet $ and $ \indexSet \subseteq \indexSet[J] $, where we do not alpha-convert input bounded names.
					By Figure~\ref{fig:reductionSemantics}, then $ \PT \Step \PT' = \PRes{\tilde{\Chan}}{\left( \PPar{\PPar{\PT_{2,m}\Set{ \Subst{\tilde{\expr}}{\tilde{\Args}_m} }}{\PT_1}}{\PT[Q]} \right)} $, where we do again not alpha-convert input bounded names.
					By the Lemmata~\ref{lem:subjectReduction} and \ref{lem:algo2GCom}, then $ \PRes{\tilde{\Chan}}{\left( \PPar{\PPar{\PT_{2,m}}{\PT_1}}{\PT[Q]} \right)} $ is well-typed \wrt the types $ \mathcal{G}' = \Set{ \left( \GT_j, \Chan_j \right) }_{j \in \indexSet[J] \setminus \Set{ l }} \cup \Set{ \GT_{l, m} } $.
					By Definition~\ref{def:algorithm2} and Theorem~1 of \cite{petersWagnerNestmann19}, then we have $ \MapSGPS{\Set{ \PT' }}{\mathcal{G}'} = \SST' $.
			\end{description}
		\item[Case of Rule~\textsf{Par}:]
			In this case $ \SST = \SSys{\Vect{V}}{\SPar{\ST_1}{\ST_2}} $, $ \SSys{\Vect{V}}{\ST_1} \Step \SSys{\Vect{V}'}{\ST_1'} $, and $ \SST' = \SSys{\Vect{V}'}{\SPar{\ST_1'}{\ST_2}} $.
			By Definition~\ref{def:algorithm2}, Lemma~\ref{lem:algo2split}, and Lemma~\ref{lem:algo2GPar}, then there are some $ \mathcal{G}_{\text{split}} $, $ \indexSet_1 \cup \indexSet_2 = \indexSet $, $ \indexSet[J]_1 \cup \indexSet[J]_2 = \indexSet[J]_{\text{split}} $ such that $ \mathcal{G}' = \Set{ \left( \GT_j', \Chan_j \right) }_{j \in \indexSet[J]_{\text{split}}} $ results from $ \mathcal{G} $ by applications of Case~\ref{algo2:GPar}, $ \indexSet[J]_1 \cap \indexSet[J]_2 = \emptyset $, $ \PT \equiv \PRes{\tilde{\Chan}}{\left( \PPar{\PT_1}{\PT_2} \right)} $, $ \Actors{\PT_k} = \Set{ \Actor{s_j}{r} \mid j \in \indexSet[J]_k \land \Role \in \Roles{\GT_j'} } $, and $ \MapSGPI{\Set{\PT_k}}{\Set{ \left( \GT_j', \Chan_j \right) }_{j \in \indexSet[J]_k}} = \ST_k $ for $ k \in \Set{ 1, 2 } $.
			By the induction hypothesis, $ \MapSGPI{\Set{ \PT_1 }}{\Set{ \left( \GT_j', \Chan_j \right) }_{j \in \indexSet[J]_1}} = \ST_1 $ and $ \SSys{\Vect{V}}{\ST_1} \Step \SSys{\Vect{V}'}{\ST_1'} $ imply that there are $ \PT_1', \mathcal{G}_1' $ such that $ \PT_1 \Step \PT_1' $, $ \PT_1' $ is well-typed \wrt $ \mathcal{G}_1' $, and $ \MapSGPS{\Set{\PT_1'}}{\mathcal{G}_1'} \SSequiv \SSys{\Vect{V}'}{\ST_1'} $.
			By Figure~\ref{fig:reductionSemantics}, then $ P \Step \PT' = \PRes{\tilde{\Chan}}{\left( \PPar{\PT_1'}{\PT_2} \right)} $, where we do not alpha-convert input bounded names.
			By the Lemmata~\ref{lem:subjectReduction}, \ref{lem:algo2PRes}, and \ref{lem:algo2PPar}, then $ \PT' $ is well-typed \wrt $ \mathcal{G}' = \mathcal{G}_1' \cup \Set{ \left( \GT_j', \Chan_j \right) }_{j \in \indexSet[J]_2} $.
			By Definition~\ref{def:algorithm2} and Theorem~1 of \cite{petersWagnerNestmann19}, then $ \MapSGPS{\Set{ \PT' }}{\mathcal{G}'} \SSequiv \SST' $.
		\item[Case of Rule~\textsf{If-T}:]
			In this case we have $ \SST = \SSys{\Vect{V}}{\SCase{\cond}{\ST_1}{\ST_2}} $ and $ \SST' = \SSys{\Vect{V}}{\ST_1} $.
			By Definition~\ref{def:algorithm2}, then $ \PT \equiv \PRes{\tilde{\Chan}}{\left( \PPar{\PCond{\cond}{\PT_1}{\PT_2}}{\PT[Q]} \right)} $, where substitutions of variables to indicate their actor in Case~\ref{algo2:GCom} do not change \cond because the conditional is unguarded, \ie not under an input binder.
			By Figure~\ref{fig:reductionSemantics}, then $ P \Step \PT' = \PRes{\tilde{\Chan}}{\left( \PPar{\PT_1}{\PT[Q]} \right)} $, where we do not need to apply alpha conversion.
			By the Lemmata~\ref{lem:subjectReduction} and \ref{lem:algo2PCond}, then $ \PT' $ is well-typed \wrt $ \mathcal{G} $.
			By Definition~\ref{def:algorithm2} and Theorem~1 of \cite{petersWagnerNestmann19}, then $ \MapSGPS{\Set{ \PT' }}{\mathcal{G}} = \SST' $.
		\item[Case of Rule~\textsf{If-F}:]
			This case is similar to the previous case.
		\item[Case of Rule~\textsf{Struc}:]
			In this case $ \SST \SSequiv \SST_2 $, $ \SST_2 \Step \SST_2' $, and $ \SST' \SSequiv \SST_2' $.
			Let $ \SST_2 = \SSys{\Vect{V}}{\ST_2} $.
			By Lemma~\ref{lem:structuralCongruence} and since $ \Sequiv \; \subseteq \; \equiv $, then $ \ST_2 $ is well-typed \wrt $ \mathcal{G} $.
			By the induction hypothesis, then there are $ \PT', \mathcal{G}' $ such that $ \PT \Step \PT' $, $ \PT' $ is well-typed \wrt $ \mathcal{G}' $, and $ \MapSGPS{\Set{ \PT' }}{\mathcal{G}'} \SSequiv \SST_2' $.
			Because of $ \SST' \SSequiv \SST_2' $, then $ \MapSGPS{\Set{ \PT' }}{\mathcal{G}'} \SSequiv \SST' $.
	\end{description}
\end{proof}

\begin{figure}[tp]
	\[ \begin{array}{c}
		\left( \textsf{Com} \right) \dfrac{j \in \indexSet}{\GCom{\Role_1}{\Role_2}{\Set{ \GLab{\Label_i}{\tilde{\Sort}_i}{\GT_i} }_{i \in \indexSet}} \Step \GT_j}
		\vspace{0.75em}\\
		\left( \textsf{Par-L} \right) \dfrac{\GT_1 \Step \GT_1'}{\GPar{\GT_1}{\GT_2} \Step \GPar{\GT_1'}{\GT_2}}
		\hspace{2em}
		\left( \textsf{Par-R} \right) \dfrac{\GT_2 \Step \GT_2'}{\GPar{\GT_1}{\GT_2} \Step \GPar{\GT_1}{\GT_2'}}
	\end{array} \]
	\vspace{-1em}
	\caption{Reduction Rules of Global Types.}
	\label{fig:reductionRulesGT}
\end{figure}

We define a reduction semantics for global types in Figure~\ref{fig:reductionRulesGT}, where we equate global types by the rules: $ \left( \GPar{\GEnd}{\GT} \right) = \GT = \left( \GPar{\GT}{\GEnd} \right) $ and $ \GRec{\TVar}{\GT} = \GT\Set{ \Subst{\GRec{\TVar}{\GT}}{\TVar} } $.
We show that well-typed processes can follow the reductions of global types.

\begin{lemma}
	\label{lem:SGPsimulateGT}
	Let $ \mathcal{G} = \Set{ \left( \GT_j, \Chan_j \right) }_{j \in \indexSet[J]} $, $ l \in \indexSet[J] $, and let $ \GT_l $ be guarded by a communication that does not depend on another session.
	If \PT is well-typed \wrt $ \mathcal{G} $ and $ \GT_l \Step \GT_l' $ then there is some $ \PT' $ such that $ \mathcal{G'} = \Set{ \left( \GT_j, \Chan_j \right) }_{j \in \indexSet[J] \setminus \Set{ l }} \cup \Set{ \GT_l' } $, $ \MapSGPS{\Set{ \PT }}{\mathcal{G}} \Step^* \MapSGPS{\Set{ \PT' }}{\mathcal{G}'} $, $ \PT \Step^* \PT' $, and $ \PT' $ is well-typed \wrt $ \mathcal{G}' $.
\end{lemma}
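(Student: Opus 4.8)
The plan is to reduce the claim to the single--communication case already settled by Lemma~\ref{lem:algo2GCom}, and then let both \PT and its abstraction follow the branch selected by the global--type step. First I would pin down the shape of the step. Since $\GT_l$ is guarded by a communication, modulo the equations $\GRec{\TVar}{\GT} = \GT\Set{\Subst{\GRec{\TVar}{\GT}}{\TVar}}$ and $\GPar{\GEnd}{\GT} = \GT = \GPar{\GT}{\GEnd}$ we may take $\GT_l = \GCom{\Role_1}{\Role_2}{\Set{\GLab{\Label_i}{\tilde{\Sort}_i}{\GT_i}}_{i \in \indexSet}}$. By Figure~\ref{fig:reductionRulesGT} the only rule applicable to such a type is \textsf{Com}, so $\GT_l \Step \GT_l'$ forces $\GT_l' = \GT_m$ for some $m \in \indexSet$, and hence $\mathcal{G}' = \Set{(\GT_j,\Chan_j)}_{j \in \indexSet[J] \setminus \Set{l}} \cup \Set{\GT_m}$ is precisely the type set in which $\GT_l$ has been replaced by the continuation of its $m$-th branch.

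Next I would build the matching process run. As \PT is well-typed \wrt $\mathcal{G}$ and the guarding communication does not depend on another session, I may first apply the preparation Lemmata~\ref{lem:algo2PRes}, \ref{lem:algo2PPar}, and \ref{lem:algo2PRep} to bring \PT (modulo $\equiv$, which by Lemma~\ref{lem:algoStruc} leaves the abstraction unchanged) into a parallel product of actors, and---if $\Chan_l$ is not yet open---Lemma~\ref{lem:algo2PReq} to initialise the session; none of these alters $\mathcal{G}$ or destroys well-typedness. Lemma~\ref{lem:algo2GCom} then locates a sender and a receiver for this communication, guarded only by conditionals, each of whose conditional branches is a version of $\PSend{\Role_1}{\Role_2}{\Label}{\tilde{\expr}}{\PT[Q]}$ resp.\ of the corresponding $\PGet{\Role_2}{\Role_1}{\ldots}$, implementing only labels specified by $\GT_l$. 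I would take $\PT \Steps \PT'$ to be the run that resolves, via Rules~\textsf{If-T} and \textsf{If-F} of Figure~\ref{fig:reductionSemantics}, exactly the conditionals guarding these two prefixes so as to unguard the branch whose sender carries label $\Label_m$, and then fires the communication by Rule~\textsf{Com}. Well-typedness of $\PT'$ \wrt $\mathcal{G}'$ then follows from Lemma~\ref{lem:subjectReduction} together with Lemma~\ref{lem:algo2GCom}, the latter identifying the residual type set as $\mathcal{G}'$.

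Finally I would mirror this run inside the abstraction. By Definition~\ref{def:algorithm2}, Cases~\ref{algo2:PCond} and \ref{algo2:GCom}, the abstraction $\MapSGPS{\Set{\PT}}{\mathcal{G}}$ is a nest of \SGP-conditionals---one for each process conditional resolved above, carrying the same boolean guards over the shared variable vector---that ultimately guards the value assignment $\SAssign{\tilde{v}}{\tilde{\expr}}{\ST'}$ generated for the $m$-th branch. Resolving these \SGP-conditionals with Rules~\textsf{If-T} and \textsf{If-F} (the identical guards over identical vector entries force the identical branches) and then performing the assignment with Rule~\textsf{Ass} drives $\MapSGPS{\Set{\PT}}{\mathcal{G}}$ to a system whose process part is $\ST'$ and whose vector has been updated accordingly; by Theorem~1 of \cite{petersWagnerNestmann19} and Definition~\ref{def:algorithm2} this system is exactly $\MapSGPS{\Set{\PT'}}{\mathcal{G}'}$, which establishes the remaining conjunct.

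I expect the main obstacle to lie in the coordination of the branch selection across the three layers. I must guarantee that the label $\Label_m$ dictated by the global--type reduction is realised by some conditional branch of the process's sender, and that the very same boolean guards steer \PT and its abstraction into corresponding branches, keeping the two $\Steps$-runs in lock-step; this is exactly where Lemma~\ref{lem:algo2GCom}, tying the implemented labels to those specified by $\GT_l$, and error-freedom (Lemma~\ref{lem:errorFreedom}), ensuring sender and receiver can be unguarded together under the no--dependency assumption, carry the weight. The accompanying bookkeeping---that the conditional guards are transcribed verbatim by the mapping, so that resolving them in the \SGP-system mirrors resolving them in \PT---is delicate but routine given Definition~\ref{def:algorithm2}.
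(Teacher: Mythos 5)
Your treatment of the communication case is essentially the paper's own argument: the same preparation Lemmata~\ref{lem:algo2PRes}, \ref{lem:algo2PPar}, \ref{lem:algo2PRep}, and \ref{lem:algo2PReq}, Lemma~\ref{lem:algo2GCom} to locate the conditional-guarded sender and receiver, subject reduction for the well-typedness of $ \PT' $, and the observation that the abstraction nests the corresponding \SGP-conditionals around the generated value assignment. (Your phrase ``resolve the conditionals so as to unguard the branch whose sender carries label $ \Label_m $'' has the same looseness as the paper's ``since \PT is well-typed, $ m = j $''---conditionals resolve deterministically, so the global-type step must be the one the process actually follows---but this glossing is shared with the paper's own proof, so it is not a point of difference.)

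The gap is that you prove only the case in which $ \GT_l $ is itself a top-level communication. The paper's proof is an induction over the reduction rules of Figure~\ref{fig:reductionRulesGT} and has two further cases, \textsf{Par-L} and \textsf{Par-R}, in which $ \GT_l = \GPar{\GT_{l1}}{\GT_{l2}} $ and the reduced communication lies inside one component. You dismiss these cases by reading ``$ \GT_l $ is guarded by a communication'' as a statement about the top-level shape of $ \GT_l $; the paper reads it as a condition on the communication that is reduced in the step $ \GT_l \Step \GT_l' $ (this is how it restates the hypothesis at the start of its proof), so that $ \GT_l $ may well be a parallel composition. This generality is not pedantry: the lemma is invoked in the proof of Theorem~4 of \cite{petersWagnerNestmann19} for types $ \GT_l $ taken from an arbitrary $ \mathcal{G} $, and global types in $ \mathcal{G} $ can be parallel compositions (the algorithm only splits them when Case~\ref{algo2:GPar} fires), so under your reading the lemma becomes too weak for that use. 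The missing cases are not hard---use Lemma~\ref{lem:algo2GPar} to split $ \PT \equiv \PRes{\tilde{\Chan}}{\left( \PPar{\PT_1}{\PT_2} \right)} $ into partitions implementing the roles of $ \GT_{l1} $ and $ \GT_{l2} $ respectively, apply the induction hypothesis to the component in which the step occurs, and recombine via Rule~\textsf{Par} and subject reduction (Lemma~\ref{lem:subjectReduction})---but they require the proof to be organised as an induction on the derivation of $ \GT_l \Step \GT_l' $ rather than as a single case analysis, so you should restructure accordingly.
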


\begin{proof}%[Proof of Lemma~\ref{lem:SGPsimulateGT}]
	Assume that \PT is well-typed \wrt $ \mathcal{G} $ and $ \GT_l \Step \GT_l' $ for some $ l \in \indexSet[J] $ and the communication that is reduced in $ \GT_l \Step \GT_l' $ does not depend on another session.
	We proceed by an induction on the reduction rules of Figure~\ref{fig:reductionRulesGT} that are used to obtain $ \GT_l \Step \GT_l' $.
	\begin{description}
		\item[Case of Rule~\textsf{Com}:]
			In this case $ \GT_l = \GCom{\Role_1}{\Role_2}{\Set{ \GLab{\Label_i}{\tilde{\Sort}_i}{\GT_{l, i}} }_{i \in \indexSet}} $, $ j \in \indexSet $, and $ \GT_l' = \GT_{l, j} $.
			By the Lemmata~\ref{lem:algo2GCom} and \ref{lem:algo2PReq}, then $ \PT \Steps \equiv \PRes{\tilde{\Chan}}{\left( \PPar{\PPar{\PT_1}{\PT_2}}{\PT_3} \right)} $ such that the steps only initialise sessions, $ m \in \indexSet $, $ \indexSet \subseteq \indexSet[J] $, every conditional branch of $ \PT_1 $ is a version of $ \PSend[\Chan_{\mathnormal{l}}]{\Role_1}{\Role_2}{\Label_m}{\tilde{\expr}}{\PT[Q]} $, every conditional branch of $ \PT_2 $ is a version of $ \PGet[\Chan_{\mathnormal{l}}]{\Role_2}{\Role_1}{\Set{ \PLab{\Label_i}{\tilde{\Args}_i}{\PT[Q]_i} }_{i \in \indexSet[J]}} $, and $ \left( \PPar{\PPar{\PT[Q]_j\Set{ \Subst{\tilde{\Args}_m@\Actor{s_{\mathnormal{l}}}{r_2}}{\tilde{\Args}_j} }}{\PT[Q]}}{\PT[Q]} \right) $ is well-typed \wrt $ \mathcal{G}' = \Set{ \left( \GT_j, \Chan_j \right) }_{j \in \indexSet[J] \setminus \Set{ l }} \cup \Set{ \GT_{l, j} } $.
			Since \PT is well-typed \wrt $ \mathcal{G} $, then $ m = j $.
			By Definition~\ref{def:algorithm2}, then $ \MapSGPI{\Set{ \PT }}{\mathcal{G}} $ is such that every of its conditional branches contains a version of $ \SAssign{\tilde{v}}{\tilde{\expr}'}{\MapSGPI{\Set{ \PPar{\PPar{\PT[Q]}{\PT[Q]_j}}{\PT_3} }}{\mathcal{G}'}} $.
			By the reduction semantics, then we can reduce the guarding conditionals and the value assignment in the respective branch such that $ \MapSGPS{\Set{ \PT }}{\mathcal{G}} \Step^* \SST' = \MapSGPS{\Set{ \PT[Q]_j\Set{ \Subst{\tilde{\Args}_m@\Actor{s_{\mathnormal{l}}}{r_2}}{\tilde{\Args}_j} }, \PT[Q], \PT_3 }}{\mathcal{G}'} $.
			Then, there is some $ \PT' $ such that $ \PT \Step^* \PT' $ initialises sessions, reduces the same conditionals, and then performs the communication step, \ie we have $ \PT' \equiv \PRes{\tilde{\Chan}}{\left( \PPar{\PPar{\PT[Q]_j\Set{ \Subst{\tilde{\expr}}{\tilde{\Args}_j} }}{\PT[Q]}}{\PT_3} \right)} $.
			By Figure~\ref{fig:typingRules}, then $ \PT' $ is well-typed \wrt $ \mathcal{G}' $.
			Since $ \SST' \SSequiv \MapSGPI{\Set{ \PT' }}{\mathcal{G}'} $, then $ \MapSGPS{\Set{ \PT }}{\mathcal{G}} \Step^* \MapSGPS{\Set{ \PT' }}{\mathcal{G}'} $.
		\item[Case of Rule~\textsf{Par-L}:]
			In this case $ \GT_l = \GPar{\GT_{l1}}{\GT_{l2}} $, $ \GT_{l1} \Step \GT_{l1}' $, and $ \GT_l' = \GPar{\GT_{l1}'}{\GT_{l2}} $.
			By Lemma~\ref{lem:algo2GPar}, then $ \PT \equiv \PRes{\tilde{\Chan}}{\left( \PPar{\PT_1}{\PT_2} \right)} $ such that $ \Roles{\GT_{l1}} \subseteq \Roles{\PT_1} $, $ \Roles{\GT_{l2}} \subseteq \Roles{\PT_2} $.
			By Figure~\ref{fig:reductionSemantics}, then $ \PT \Step^* \PT' = \PPar{\PT_1'}{\PT_2} $.
			By Theorem~\ref{lem:subjectReduction} and Figure~\ref{fig:typingRules}, then $ \PT' $ is well-typed \wrt $ \mathcal{G}' = \Set{ \left( \GT_j, \Chan_j \right) }_{j \in \indexSet[J] \setminus \Set{ l }} \cup \Set{ \GT_{l1}', \GT_{l2} } $.
			By Definition~\ref{def:algorithm2} and the reduction semantics, then $ \MapSGPS{\Set{ \PT }}{\mathcal{G}} \Step^* \MapSGPS{\Set{ \PT' }}{\mathcal{G}'} $.
		\item[Case of Rule~\textsf{Par-R}:]
			This case is similar to the case above.
	\end{description}
\end{proof}

The reverse direction of Theorem~3 of \cite{petersWagnerNestmann19} does not hold.
Intuitively, a well-typed system and its sequentionalisation into a \SGP-system have the same steps, but \SGP-systems may force an order on steps that are unordered in the original system.
This happens for global types such as $ \GCom{\Role[1]}{\Role[2]}{\GLab{\Label}{\mathbb{N}}{\GCom{\Role[3]}{\Role[4]}{\GLab{\Label}{\mathbb{N}}{\GEnd}}}} $ that combine causally unrelated communications sequentially.

\begin{example}
	\label{exa:prob_run}
	Consider the global type $ \GT = \GCom{1}{2}{\GLab{\Label}{\mathbb{N}}{\GCom{3}{4}{\GLab{\Label}{\mathbb{N}}{\GEnd}}}} $ that consists of two causally independent communications.
	The system
	\begin{align*}
		\PT ={} & \PReq{2..4}{\PSend{1}{2}{l}{5}{\PEnd}} \mid \PAcc{2}{\PGet{2}{1}{\PLab{l}{x}{\PEnd}}}\\
		& \mid \PAcc{3}{\PSend{3}{4}{l}{4}{\PEnd}} \mid \PAcc{4}{\PGet{3}{4}{\PLab{l}{x}{\PEnd}}}
	\end{align*}
	is a well-typed implementation of this global type.
	The algorithm of Definition~\ref{def:algorithm} maps this process to the \SGP-system $ \MapSGPS{\PT}{\GT} = \SSys{\left( \Args_2, \Args_4 \right)}{\ST} $, where $ \ST = \tau.\SAssign*{x_{2}}{5}{\SAssign*{x_{4}}{4}{\SEnd}} $.
	The process \PT has, modulo structural congruence, two maximal runs
	\begin{center}
		\begin{tikzpicture}[]
			\node (a) at (0, 0.5) {$ \PT $};
			\node (b) at (1, 0.5) {$ \PT' $};
			\node (c) at (5, 1) {$ \PRes{\Chan}{\left( \PSend{3}{4}{l}{4}{\PEnd} \mid \PGet{3}{4}{\PLab{l}{x}{\PEnd}} \right)} $};
			\node (d) at (5, 0) {$ \PRes{\Chan}{\left( \PSend{1}{2}{l}{5}{\PEnd} \mid \PGet{2}{1}{\PLab{l}{x}{\PEnd}} \right)} $};
			\node (e) at (9, 0.5) {$ \PEnd $};
			\path[|->] (a) edge (b);
			\path[|->] (b) edge (2.5, 1);
			\path[|->] (b) edge (2.5, 0);
			\path[|->] (7.5, 1) edge (e);
			\path[|->] (7.5, 0) edge (e);
		\end{tikzpicture}
	\end{center}
	where $ \PT' = \PRes{\Chan}{\left( \PSend{1}{2}{l}{5}{\PEnd} \mid \PGet{2}{1}{\PLab{l}{x}{\PEnd}} \mid \PSend{3}{4}{l}{4}{\PEnd} \mid \PGet{3}{4}{\PLab{l}{x}{\PEnd}} \right)} $.
	But the abstraction \MapSGPS{\PT}{\GT} simulates only the sequence of steps at the top
	\begin{align*}
		& \SSys{\left( \Args_2 = 0, \Args_4 = 0 \right)}{\ST}
		\Step \SSys{\left( \Args_2 = 0, \Args_4 = 0 \right)}{\SAssign*{x_{2}}{5}{\SAssign*{x_{4}}{4}{\SEnd}}}\\
		& \Step \SSys{\left( \Args_2 = 5, \Args_4 = 0 \right)}{\SAssign*{x_{4}}{4}{\SEnd}}
		\Step \SSys{\left( \Args_2 = 5, \Args_4 = 4 \right)}{\SEnd}
	\end{align*}
	in that first process \Role[2] receives the value $ 5 $---and the \SGP-process accordingly updates the variable $ \Args_{\Role[2]} $ of \Role[2]---and then \Role[4] receives the value $ 4 $.
\end{example}

Nonetheless, we can show that each step of the original system can be completed into a sequence that can be simulated.
Assume a step $ \PT \Step \PT' $ of the original system.
We need to find a way to simulate this step in the \SGP-system.
Well-typedness of \PT ensures that the step $ \PT \Step \PT' $ respects the specification, \ie the global types, of this process.
Accordingly, either the types are not influenced by the step $ \PT \Step \PT' $ or the step reduces some part of the global types.
In the first case, the step $ \PT \Step \PT' $ is a session initialisation or reduces a conditional and is simulated by an empty value update or the corresponding reduction of a \SGP-conditional.
If the step $ \PT \Step \PT' $ reduces an unguarded part of one of its global types, \ie performs a communication within a session, then it is simulated by the corresponding value updates in the \SGP-system.
Otherwise, the situation is as described in Example~\ref{exa:prob_run}, \ie we have to find an extension $ \PT' \Steps \PT'' $ of the step $ \PT \Step \PT' $ such that the sequence $ \PT \Steps \PT'' $ can be simulated by the \SGP-system.
Therefore, we reduce all guards in the global types that are necessary to unguard the part of the global type that is reduced in the step $ \PT \Step \PT' $ or that on that this guard depends.
By Lemma~\ref{lem:SGPsimulateGT}, $ \PT' $ can reduce accordingly in a sequence $ \PT' \Steps \PT'' $.
The \SGP-system can simulate steps of the global type by construction.
Thus, we can relate $ \PT'' $ to the corresponding reduction of the \SGP-system.

Theorem~4 of \cite{petersWagnerNestmann19}:
\begin{quote}
	Let $ \mathcal{G} = \Set{ \left( \GT_j, \Chan_j \right) }_{j \in \indexSet[J]} $.
If \PT is well-typed \wrt $ \mathcal{G} $ then for all $ \PT \Step \PT' $ there exist $ \PT'', \mathcal{G}'' $ such that $ \PT' \Step^* \PT'' $, $ \PT'' $ is well-typed \wrt $ \mathcal{G}'' $, and $ \MapSGPS{\Set{ \PT }}{\mathcal{G}} \Step^* \MapSGPS{\Set{ \PT'' }}{\mathcal{G}''}  $.
\end{quote}

\begin{proof}[Proof of Theorem~4 of \cite{petersWagnerNestmann19}]
	Assume that \PT is well-typed \wrt $ \mathcal{G} $ and $ \PT \Step \PT' $.
	By Figure~\ref{fig:reductionSemantics}, $ \PT \Step \PT' $ uses exactly one of the axioms:
	\begin{description}
		\item[Case of Rule~\textsf{Link}:]
			In this case:
			\begin{align*}
				\PT &\equiv \PRes{\tilde{\Chan}}{\left( \PPar{\PReq{\Role[2]..\Role[n]}{\PT_1}}{\PPar{\PAcc{\Role[2]}{\PT_2}}{\PPar{\ldots}{\PAcc{\Role[n]}{\PT_n}}}} \mid \PT[Q] \right)}\\
				\PT' &\equiv \PRes{\tilde{\Chan}, \Chan'}{\left( \PPar{\PT_1}{\PT_2}{\PPar{\ldots}{\PT_n}} \mid \PT[Q] \right)}
			\end{align*}
			By Definition~\ref{def:algorithm2}, we have $ \MapSGPI{\Set{ \PT }}{\mathcal{G}} = \tau.\MapSGPI{\Set{ \PT_1, \ldots, \PT_n, \PT[Q] }}{\mathcal{G}} $ and we have $ \MapSGPI{\Set{ \PT' }}{\mathcal{G}} = \MapSGPI{\Set{ \PT_1, \ldots, \PT_n, \PT[Q] }}{\mathcal{G}} $.
			By reflexivity, $ \PT' \Step^* \PT' $.
			By Lemma~\ref{lem:algo2PReq}, $ \PT' $ is well-typed \wrt $ \mathcal{G} $.
			By the reduction semantics, $ \MapSGPS{\Set{ \PT }}{\mathcal{G}} \Step \MapSGPS{\Set{ \PT' }}{\mathcal{G}} $.
		\item[Case of Rule~\textsf{Com}:]
			In this case
			\begin{align*}
				\PT &\equiv \PRes{\tilde{\Chan}}{\left( \PPar{\PPar{\PSend{\Role_1}{\Role_2}{\Label_j}{\tilde{\expr}}{\PT_1}}{\PGet{\Role_2}{\Role_1}{\Set{ \PLab{\Label_i}{\tilde{\Args}_i}{\PT_{2, i}} }_{i \in \indexSet}}}}{\PT_3} \right)}\\
				\PT' &\equiv \PRes{\tilde{\Chan}}{\left( \PPar{\PPar{\PT_{2, j}\Set{ \Subst{\tilde{\expr}}{\tilde{\Args}_j} }}{\PT_1}}{\PT_3} \right)}
			\end{align*}
			and $ j \in \indexSet $.
			By Definition~\ref{def:algorithm2}, then $ \MapSGP{\Set{ \PT }}{\mathcal{G}} $ maps this communication on a \SGP-value-assignment but may guard it by other conditionals from $ \PT_3 $ or value assignments due to communications in $ \PT_3 $.
			Note that, therefore, all these conditionals and communication prefixes have to be consecutively unguarded in the remainder of $ \PT_3 $ and that the communications are captured in $ \mathcal{G} $.
			By Figure~\ref{fig:reductionRulesGT} and Definition~\ref{def:algorithm2}, then there exists $ \GT_l'' $ such that $ \GT_l \Step^* \GT_l'' $ for some $ l \in \indexSet[J] $ reduces the communications in $ \PT_3 $ that correspond to the value assignments that guard \SGP-value-assignment $ \SAssign{\tilde{\Args}_j@\Actor{s}{r_2}}{\tilde{\expr}}{\MapSGP{\Set{ \PT_{2, j}\Set{ \Subst{\tilde{\Args}_j@\Actor{s}{r_2}}{\tilde{\Args}_j} }, \PT_1, \PT_3' }}{\mathcal{G}''}} $.
			By Lemma~\ref{lem:SGPsimulateGT}, then there is some $ \PT'' $ such that we have $ \MapSGPS{\Set{ \PT }}{\mathcal{G}} \Step^* \MapSGPS{\Set{ \PT'' }}{\mathcal{G}''} $ and $ \PT \Step^* \PT'' $.
			Since well-typedness ensures that there are no conflicts and since $ \PPar{\PSend{\Role_1}{\Role_2}{\Label_j}{\tilde{\expr}}{\PT_1}}{\PGet{\Role_2}{\Role_1}{\Set{ \PLab{\Label_i}{\tilde{\Args}_i}{\PT_{2, i}} }_{i \in \indexSet}}} $ is reduced in both of the sequences $ \PT \Step \PT' $ and $ \PT \Step^* \PT'' $, then confluence implies that also $ \PT' \Step^* \PT'' $.
		\item[Case of Rule~\textsf{If-T}:]
			In this case $ \PT \equiv \PRes{\tilde{\Chan}}{\left( \PPar{\PCond{\cond}{\PT_1}{\PT_2}}{\PT_3} \right)} $ and $ \PT' \equiv \PRes{\tilde{\Chan}}{\left( \PPar{\PT_1}{\PT_3} \right)} $.
			By Definition~\ref{def:algorithm2}, $ \MapSGPI{\Set{ \PT }}{\mathcal{G}} $ maps this conditional on a \SGP-conditional but may guard it by other conditionals from $ \PT_3 $ or value assignments due to communications in $ \PT_3 $.
			Note that all these conditionals and communication prefixes have to be consecutively unguarded in the remainder of $ \PT_3 $ and that the communications are captured in $ \mathcal{G} $.
			By Figure~\ref{fig:reductionRulesGT} and Definition~\ref{def:algorithm2}, then there exists $ \mathcal{G}'' $ such that $ \GT_l \Step^* \GT_l'' $ reduces the communications in $ \PT_3 $ that correspond to the value assignments that guard the \SGP-conditional $ \SCase{\cond}{\MapSGPI{\Set{ \PT_1, \PT_3' }}{\mathcal{G}''}}{\MapSGP{\Set{ \PT_2, \PT_3' }}{\mathcal{G}''}} $, where $ \mathcal{G}'' = \Set{ \left( \GT_j, \Chan_j \right) }_{j \in \indexSet[J] \setminus \Set{ l }} \cup \Set{ \GT_l'' } $.
			By Lemma~\ref{lem:SGPsimulateGT}, then there is $ \PT'' $ such that $ \MapSGPS{\Set{ \PT }}{\mathcal{G}} \Step^* \MapSGPS{\Set{ \PT'' }}{\mathcal{G}''} $ and $ \PT \Step^* \PT'' $.
			Since well-typedness ensures that there are no conflicts and since $ \PCond{\cond}{\PT_1}{\PT_2} $ is reduced in both of the sequences $ \PT \Step \PT' $ and $ \PT \Step^* \PT'' $, then confluence implies that also $ \PT' \Step^* \PT'' $.
		\item[Case of Rule~\textsf{If-F}:]
			This case is similar to the case above.
	\end{description}
\end{proof}

Interestingly, the combination of Theorem~3 and Theorem~4 of \cite{petersWagnerNestmann19} is similar to \emph{(weak) operational correspondence} as it is introduced in \cite{gorla10} as criterion for the quality of encodings.
Encodings are mappings from a source language $ \procS $ into a target language $ \procT $.

\begin{definition}[Weak Operational Correspondence, \cite{petersGlabbeek15}]
	An encoding \newline $ \encoding: \procS \to \procT $ is \emph{weakly operationally corresponding} \wrt $ \RelT $ if it is:
	\begin{description}
		\item[Complete:] $ \forall S, S' \logdot S \Step^* S' $ implies $ \left( \exists T \logdot \Encoding{S} \Step^* T \wedge \left( \Encoding{S'}, T \right) \in \RelT \right) $
		\item[Weakly Sound:] $ \forall S, T \logdot \Encoding{S} \Step^* T $ implies\\
			$ \left( \exists S', T' \logdot S \Step^* S' \wedge T \Step^* T' \wedge \left( \Encoding{S'}, T' \right) \in \RelT \right) $
	\end{description}
\end{definition}

We observe that completeness is similar to Theorem~3 of \cite{petersWagnerNestmann19} and weak soundness is similar to Theorem~4 of \cite{petersWagnerNestmann19}, but with the roles of the languages exchanged.
Accordingly, we change the above definition and use a weak variant of completeness.

\begin{definition}
	An encoding $ \encoding: \procS \to \procT $ is \emph{reversed weakly operationally corresponding} \wrt $ \RelT \subseteq \procT^2 $ if it is:
	\begin{description}
		\item[Weakly Complete:] $ \forall S, S' \logdot S \Step^* S' $ implies\\
			$ \left( \exists S'', T'' \logdot S' \Step^* S'' \wedge \Encoding{S} \Step^* T'' \wedge \left( \Encoding{S''}, T'' \right) \in \RelT \right) $
		\item[Sound:] $ \forall S, T \logdot \Encoding{S} \Step^* T $ implies $ \left( \exists S' \logdot S \Step^* S' \wedge \left( \Encoding{S'}, T \right) \in \RelT \right) $
	\end{description}
\end{definition}

Then the mapping $ \MapSGPS{\cdot}{\cdot} $ from well-typed processes into \SGP-systems is reversed weakly operationally corresponding \wrt $ \SSequiv $.
The paper \cite{petersGlabbeek15} relates weak operational correspondence with so-called correspondence simulation.

\begin{definition}[Correspondence Simulation, \cite{petersGlabbeek15}]
	\label{def:correspondenceSimulation}
	A relation \Rel is a \emph{(weak reduction) correspondence simulation} if for each $ \left( \PT, \PT[Q] \right) \in \Rel $:
	\begin{itemize}
		\item $ \PT \Steps \PT' $ implies $ \exists \PT[Q]'\logdot \PT[Q] \Steps \PT[Q]' \land \left( \PT', \PT[Q]' \right) \in \Rel $
		\item $ \PT[Q] \Steps \PT[Q]' $ implies $ \exists \PT'', \PT[Q]''\logdot \PT \Steps \PT'' \land \PT[Q]' \Steps \PT[Q]'' \land \left( \PT'', \PT[Q]'' \right) \in \Rel $
	\end{itemize}
	Two terms $ \PT, \PT[Q] $ are \emph{correspondence similar}, denoted as $ \PT \corrSim \PT[Q] $, if a correspondence simulation relates them.
\end{definition}

With a similar argumentation as in \cite{petersGlabbeek15} to show that if $ \encoding $ is weakly operationally corresponding \wrt a correspondence simulation then $ S \corrSim \Encoding{S} $, we conclude that if $ \encoding $ is reversed weakly operationally corresponding \wrt a correspondence simulation then $ \Encoding{S} \corrSim S $.
By the Theorems~3 and 4 of \cite{petersWagnerNestmann19}, then the sequentialisation of a system is correspondence similar to the system.

%%%%%%%%%%%%%%
%% Examples %%
%%%%%%%%%%%%%%

\section{Examples}
\label{sec:examples}

\subsection{Toy Example}
\label{sec:toy-example}

Similar to the two Buyer example of \cite{hondaYoshidaCarbone08}, we illustrate our approach by a small example of an auctioneer system consisting of an auctioneer \Role[A] and two alternating bidders \Role[B1] and \Role[B2].
The two bidders alternate in offering bids towards the auctioneer and the auctioneer continues to inform the next bidder about the last bid until the current bid exceeds the maximum of one of the bidders.
As soon as one bidder refuses to offer another bid, the auctioneer informs the respective other bidder that the item was sold to him.
We illustrate the communication structure for the case that \Role[B2] wins the auction:
\begin{center}
\begin{tikzpicture}[yscale=0.5, xscale=1.1, node distance=0.3cm, auto]

  \newcounter{x}\newcounter{y}
  \newlength{\mpsthlp}
  \settowidth{\mpsthlp}{$\Role[B1]$}

  \node (N-0-0) at (0,0) {$\Role[B1]$};
  \node (N-0-1) at (0,-1) {\makebox[\mpsthlp]{$\Role[A]$}};
  \node (N-0-2) at (0,-2) {$\Role[B2]$};

  \foreach \x in {0,1,2}
  {
    \foreach \y in {1,...,10}
    {
      \node (N-\y-\x) at (\y,-\x) {};
    }
  }

  \foreach \x in {0,1,2}
  {
    \path (N-0-\x) edge[-] node {} (N-10-\x);
  }

  \node (1) at (6,-0.5) {\tiny \ldots};
  \node (1) at (6,-1.5) {\tiny \ldots};

  \path (N-1-0.center) edge[-latex] node[anchor=east] {\tiny bid} (N-2-1.center);
  \path (N-2-1.center) edge[-latex] node[anchor=east] {\tiny last bid} (N-3-2.center);
  \path (N-3-2.center) edge[-latex] node[anchor=east] {\tiny bid} (N-4-1.center);
  \path (N-4-1.center) edge[-latex] node[anchor=east] {\tiny last bid} (N-5-0.center);

  \path (N-7-0.center) edge[-latex] node[anchor=west] {\tiny no} (N-8-1.center);
  \path (N-8-1.center) edge[-latex] node[anchor=west] {\tiny sold} (N-9-2.center);
\end{tikzpicture}
\end{center}

Assume a program that implements such an auctioneer system.
An analysis of such a program may want to check \eg whether the bidders indeed alternate in offering bids, \ie no bidder is allowed or forced to bid twice without the other bidder in between, or whether no bid exceeds the internal maximum of a bidder, \ie the amount that he or she is willing to pay.
The former property is clearly a property of the communication structure and can easily be checked with \MPST.
The latter property, however, requires to analyse concrete data.
Since the maximum a bidder is willing to pay is some data that is specific to the bidder or may even be specific to a concrete run, this property does not fit into the set of static properties \MPST were designed for.
We show that if one is willing to pay the price of verifying the communication structure of the program with \MPST, one gets as a side-effect a massive reduction in checking properties about the state of the program, \ie properties that require the consideration of concrete runs or concrete values of variables.

An important property that often requires the consideration of concrete runs or concrete values is termination.
\MPST ensure progress for well-typed systems, \ie there are no deadlocks and all runs of the system will follow its specification that is provided by the global type(s).
Progress immediately implies termination, if the considered system does not contain recursion.
But, as in our toy example, many algorithms to compute some value or some decision, rely on a loop that runs until a suitable value was found or a decision was made.
The progress property, that we can obtain by \MPST for such cases, is a crucial argument for a proof of termination, but does not directly imply termination.
The presented method allows us to prove termination, by analysing the evolution of data in concrete runs, automatically and in an efficient way.

To provide a global type for our example of the auctioneer system, we use different labels to convey the intention of actions of the participants:
\Label[bid] indicates a new bid,
\Label[no] indicates that the bidder refuses to make another bid,
\Label precedes the forwarding of the last bid, and
\Label[s] indicates that the item was sold.
Since the only kind of values that are transmitted in this protocol are bids, we use \Sort[Int] as only sort for integer values.
Let $ \Role[A] = \Role[1] $ be the role of the auctioneer, $ \Role[B1] = \Role[2] $ be the role of the first bidder, and $ \Role[B2] = \Role[3] $ be the role of the second bidder.
The global type \GAsys describes the communication structure of this example from a global point of view.

\begin{example}[Global Type of the Auctioneer System]
	\label{exa:globalTypeAuctioneer}
	\vspace{-0.5em}
	\begin{align*}
		& \GAsys ={} \GCom{\Role[B1]}{\Role[A]}{\GLab{\Label[bid]}{\Sort[Int]}{\GCom{\Role[A]}{\Role[B2]}{\GLab{\Label}{\Sort[Int]}{\GRec{\TVar}{}}}}}\\
		& \big( \GCom{\Role[B2]}{\Role[A]}{} \big\{
		\begin{array}[t]{l}
			\GLab{\Label[bid]}{\Sort[Int]}{\GCom{\Role[A]}{\Role[B1]}{\GLab{\Label}{\Sort[Int]}{\GCom{\Role[B1]}{\Role[A]}{}}}} \{
				\begin{array}[t]{l}
					\GLab{\Label[bid]}{\Sort[Int]}{\GCom{\Role[A]}{\Role[B2]}{\GLab{\Label}{\Sort[Int]}{\TVar}}},\\
					\GLab{\Label[no]}{}{\GCom{\Role[A]}{\Role[B2]}{\GLab{\Label[s]}{\Sort[Int]}{\GEnd}}} \},
				\end{array}\\
			\GLab{\Label[no]}{}{\GCom{\Role[A]}{\Role[B1]}{\GLab{\Label[s]}{\Sort[Int]}{\GEnd}}} \big\} \big)
		\end{array}
	\end{align*}
\end{example}

An example of a well-typed implementation of the global type \GAsys of Example~\ref{exa:globalTypeAuctioneer} is given below, \ie \PAsys is well-typed \wrt \GAsys.
The names $ inc $ and $ max $ are place-holders for the actual functions and natural constants that are provided by \textsf{Promela}.
The functions $ inc_{\Role[B1]}, inc_{\Role[B2]} $ are used by the respective bidder to increase the last bid and the constants $ max_{\Role[B1]}, max_{\Role[B2]} $ denote the maximum a bidder is willing to pay.

\begin{example}[Implementation of the Auctioneer System]
	\label{exa:implementationAuctioneer}
	\vspace{-0.5em}
	\begin{align*}
		\PAsys ={}
		& \PReq{B1,B2}{\PGet{\Role[A]}{\Role[B1]}{\PLab{bid}{b}{\PSend{A}{B2}{l}{b}{ \PRep{\PVar_{\Role[A]}}{} }}}}\\
		& \hspace{2em} \PGet{A}{B2}{ \{
			\begin{array}[t]{l}
				\PLab{bid}{b}{\PSend{A}{B1}{l}{b}{\PGet{A}{B1}{ \{
					\begin{array}[t]{l}
						\PLab{bid}{b}{\PSend{A}{B2}{l}{b}{\PVar_{\Role[A]}}} \\
						\PLab{no}{}{\PSend{A}{B2}{s}{b}{\PEnd}} \}
					\end{array} }}} \\
				\PLab{no}{}{\PSend{A}{B1}{s}{b}{\PEnd}} \} \\
			\end{array} }\\
		\mid \; & \PAcc{B1}{\PSend{B1}{A}{bid}{inc_{\Role[B1]}(0)}{\PRep{\PVar_{\Role[B1]}}{}}}\\
		& \hspace{2em} \PGet{B1}{A}{} \{
			\begin{array}[t]{l}
				\PLab{l}{b}{}
					\begin{array}[t]{l}
						\mathsf{if}~inc_{\Role[B1]}(b) \leq \maxB1
							\begin{array}[t]{l}
								\mathsf{then}~\PSend{B1}{A}{bid}{inc_{\Role[B1]}(b)}{\PVar_{\Role[B1]}} \\
								\mathsf{else}~ \PSend{B1}{A}{no}{}{\PEnd}
							\end{array}
					\end{array}\\
				\PLab{s}{b}{\PEnd} \}
			\end{array}\\
		\mid \; & \PAcc{B2}{\PGet{B2}{A}{\PLab{l}{b}{\PRep{\PVar_{\Role[B2]}}{}}}}\\
		& \hspace{2em} \mathsf{if}~ inc_{\Role[B2]}(b) \leq \maxB2
			\begin{array}[t]{l}
				\mathsf{then}~\PSend{B2}{A}{bid}{inc_{\Role[B2]}(b)}{\PGet{B2}{A}{}} \{
					\begin{array}[t]{l}
						\PLab{l}{b}{\PVar_{\Role[B2]}} \\
						\PLab{s}{b}{\PEnd} \}
					\end{array}\\
				\mathsf{else}~\PSend{B2}{A}{no}{}{\PEnd}
			\end{array}
	\end{align*}
\end{example}

To sequentialise the given implementation utilising our algorithm, let \PT be a process that is well-typed \wrt a global type \GT and $ \ST = \MapSGP{\Set{ \PT }}{\GT} $.
Then the corresponding \SGP-system is $ \MapSGPS{\Set{ \PT }}{\GT} = \SSys{\Vect{V}}{\ST} $, where $ \Vect{V} $ is the vector of names in \ST.
Accordingly, the auctioneer system \PAsys of Example~\ref{exa:implementationAuctioneer} that is well-typed \wrt \GAsys in Example~\ref{exa:globalTypeAuctioneer} translates into the \SGP-system $ \SSys{ \left( b_{\Role[A]}, b_{\Role[B1]}, b_{\Role[B2]} \right) }{\SAsys} $, where $ \SAsys = \MapSGP{\Set{ \PAsys }}{\GAsys} $ is given below.

\begin{example}[Sequentialisation of the Auctioneer System]
	\label{exa:SGPAuctionieer}
	\vspace{-0.5em}
	\begin{align*}
		\SAsys ={}
		& \tau.\SAssign*{b_{\Role[A]}}{inc_{\Role[B1]}(0)}{\SAssign*{b_{\Role[B2]}}{b_{\Role[A]}}{\SRep{\SVar}{}}}\\
		& \hspace{1em}
			\begin{array}[t]{l}
				\textsf{if } inc_{\Role[B2]}(b_{\Role[B2]}) \leq max_{\Role[B2]}\\
				\textsf{then } \SAssign*{b_{\Role[A]}}{inc_{\Role[B2]}(b_{\Role[B2]})}{\SAssign*{b_{\Role[B1]}}{b_{\Role[A]}}{}}
					\begin{array}[t]{l}
						\textsf{if } inc_{\Role[B1]}(b_{\Role[B1]}) \leq max_{\Role[B1]}\\
						\textsf{then } \SAssign*{b_{\Role[A]}}{inc_{\Role[B1]}(b_{\Role[B1]})}{\SAssign*{b_{\Role[B2]}}{b_{\Role[A]}}{\SVar}}\\
						\textsf{else } \tau.\SAssign*{b_{\Role[B2]}}{b_{\Role[A]}}{\SEnd}
					\end{array}\\
				\textsf{else } \tau.\SAssign*{b_{\Role[B1]}}{b_{\Role[A]}}{\SEnd}
			\end{array}
	\end{align*}
\end{example}

Since the causal relation of the communications of \GAsys in Example~\ref{exa:globalTypeAuctioneer} is a total order, all properties that hold for \SAsys are also satisfied by \PAsys.

\subsection{Translating \SGP-Systems into \textsf{Promela}}
\label{sec:promela}

To illustrate the verification of system properties, we use the model checker \textsf{Spin} \cite{spin,spin91} and implement the \SGP-system \SAsys in Example~\ref{exa:SGPAuctionieer} using \textsf{Promela}, the input language of \textsf{Spin}.
Therefore, we provide an algorithm to translate a \SGP-process into \textsf{Promela} code.

\begin{figure}[t]
  \lstinputlisting{modelPrae.pml}
  \vspace{-1.5em}
  \caption{Promela Implementation preamble}\label{imp:pre}
\end{figure}

First we generate a preamble for the \textsf{Promela} program, \ie declare variables and set their initial values.
The variables are obtained from the vector of variables $ \Vect{V} $ in a \SGP-system $ \SSys{\Vect{V}}{S} $.
Sometimes the initial values are directly specified by the implementation or are given as parameters of the implementation.
Otherwise, the developer has to pick suitable initial values respecting their respective sorts.
The preamble for \SAsys of Example~\ref{exa:SGPAuctionieer} is given in Figure~\ref{imp:pre}.
It introduces the three variables $ b_{\Role[A]}, b_{\Role[B1]}, b_{\Role[B2]} $ of the knowledge vector in our example and initialises them with $ 0 $.
Moreover, the preamble introduces five more variables that are used for the implementation and verification of the LTL-Formula that specify the properties we want to check.
We provide in \cite{petersWagnerNestmann19} an algorithm for the translation of \SGP-processes into \textsf{Promela} but expect that the desired properties are already specified as LTL-Formula.

\begin{figure}[t]
  \lstinputlisting{modelProc.pml}
  \vspace{-1.5em}
  \caption{Promela Implementation of the Auctioneer System}\label{imp:spin}
\end{figure}

Figure~\ref{imp:spin} presents the \textsf{Promela} implementation of \SAsys from Example~\ref{exa:SGPAuctionieer}.
Following the translation into \textsf{Promela} of \cite{petersWagnerNestmann19}, first a \textsf{proctype} with the name Model is introduced.

The $ \tau $ at the beginning of \SAsys that resulted from the translation of the session initialisation is translated to \textsf{skip}.
Then there are two subsequent value updates on $ b_{\Role[A]} $ and $ b_{\Role[B2]} $ that precede the loop.
The loop is introduced by declaring its recursion variable \textsf{LX}.
Then the \textsf{if-then-else} statements with their respective value updates follow.
Another instance of the loop is generated by \textsf{goto LX}, whereas \textsf{goto LEnd} terminates the program by jumping to its end.

In addition Figure~\ref{imp:spin} declares the domains of the variables \textsf{incB1}, \textsf{incB2}, \textsf{maxB1}, and \textsf{maxB2}.
Note that these variables were not specified by the implementation in \PAsys of Example~\ref{exa:implementationAuctioneer} and thus are not provided by \SAsys.
The variables \textsf{incB1} and \textsf{incB2} denote the value by that bidders increment the last bid.
The domain $ 1 .. 10 $ tells us, that this value to increment the last bid is chosen non-deterministically between $ 1 $ and $ 10 $.
The variables \textsf{maxB1} and \textsf{maxB2} specify the internal maximum a bidder is willing to pay.
We use arbitrary values between $ 50 $ and $ 100 $.
Similar to the initial values of the variables in the knowledge vector of \SAsys, we expect that the developer provides suitable domains.

\subsection{Analysing the Properties of Implementations}
\label{sec:properties}

\begin{figure}[t]
  \lstinputlisting{modelLTL.pml}
  \vspace{-1.5em}
  \caption{Promela LTL Formulae}\label{imp:ltl}
\end{figure}

Finally, the developer has to add to the \textsf{Promela} program the LTL-formula for the properties that he or she is interested in.
We add the following six LTL-formulae, where Figure~\ref{imp:ltl} presents their \textsf{Promela} representation:
\begin{align*}
	\eventually\always (\bA > 0 \wedge \bA = \bB1 \wedge \bA = \bB2) & \tag{P1}\\
	(\bB1 < \bB2 \vee \bB2 < \bB1 \vee (\bB1 = 0 \wedge \bB2 = 0)) \until (\always (\bB1 = \bB2)) & \tag{P2} \label{notInMalicious}\\
	\always (\bA = \bB1 \vee \bA = \bB2) & \tag{P3} \label{wrong}\\
	\always ((( sold = 1) \rightarrow \always (sold = 1)) \wedge ((sold = 2) \rightarrow \always (sold = 2))) & \tag{P4}\\
	\always (((sold = 1) \rightarrow (\bA \leq \maxB1)) \wedge ((sold = 2) \rightarrow (\bA \leq \maxB2))) & \tag{P5} \label{maximum}\\
	\eventually \always (sold > 0) & \tag{P6} \label{termination}
\end{align*}
These formulae have following meanings:
\begin{enumerate}
\item there exists one state from which onward all participants always have the same bid value,
\item the bidder always bid higher than the other one until both have always the same bid,
\item the auctioneer has always the same bid as one bidder,
\item only one process can win the auction,
\item the winner did not bid more than its limit, and
\item eventually one bidder will win.
\end{enumerate}
As checked by \textsf{SPIN}, only Property~\ref{wrong} is not satisfied.
This is because only one variable assignment can happen at any time, thus updating two variables to a new value would take at least two steps.
More precisely, Property~\ref{wrong} is violated then the auctioneer receives a new bid.
If \Role[A] receives a bid from \Role[B2] then $ b_{\Role[A]} $ is updated with the value $ \textsf{incB2} + b_{\Role[B2]} $.
In the next step, the variable $ b_{\Role[B1]} $ is updated with $ b_{\Role[A]} $ which restores Property~\ref{wrong}.
But in between these two value updates the value of $ b_{\Role[A]} $ is neither equal to $ b_{\Role[B2]} $ (unless $ \textsf{incB2} = 0 $) nor equal to $ b_{\Role[B2]} $.

Above we validated some interesting properties of our toy example such as that no bidder exceeds its internal maximum (Property~\ref{maximum}) or termination (Property~\ref{termination}).
Does that means, that our implementation is a good implementation of the auctioneer algorithm?
Unfortunately, this is not so easy.

Consider a malicious bidder \Role[B2] that instead of increasing the bid always resubmits the last bid of \Role[B1].
This way the chances to win the auction, if the internal maxima of the two bidders are close, is significantly increased by \Role[B2] in a very unfair way.
Moreover, if \Role[B2] is willing to pay more, \ie if he or she is supposed to win the auction, than this strategy ensures that the bid of \Role[B2] also stays below the limit of \Role[B1], \ie reduces the amount of money \Role[B2] needs to pay.
Obviously, this behaviour is malicious and should be rejected by the auctioneer.
To implement this malicious behaviour it suffices to instantiate \textsf{incB2} with $ 0 $.

This kind of malicious behaviour is detected by the properties in Figure~\ref{imp:ltl}.
More precisely, by setting $ \textsf{incB2} = 0 $, the Property~\ref{notInMalicious} is violated.
Thus, we can detect that there are implementations of this algorithm, that are not acceptable.
We can now use this knowledge to improve the specification.
Indeed we observe, that in our specification \GAsys in Example~\ref{exa:globalTypeAuctioneer}, the auctioneer does not decide anything, \ie this specification does not provide the communication structure that the auctioneer need to check the validity of bids and to reject them.

We provide a revised version of this auctioneer system, where the auctioneer checks the validity of the bids and rejects a bid that was not valid.

\begin{example}[Global Type of the Revised Auctioneer System]
	\label{exa:globalFixedTypeAuctioneer}
	\vspace{-0.5em}
	\begin{align*}
		& \GRAsys = \GCom{\Role[B1]}{\Role[A]}{\GLab{\Label[bid]}{\Sort[Int]}{\GCom{\Role[A]}{\Role[B2]}{\GLab{\Label}{\Sort[Int]}{\GRec{\TVar}{}}}}}\\
		& \hspace{1em} \big( \GCom{\Role[B2]}{\Role[A]}{} \big\{
		\begin{array}[t]{l}
			\GLab{\Label[bid]}{\Sort[Int]}{\GCom{\Role[A]}{\Role[B1]}} \{\\
			\hspace{1em} \begin{array}[t]{l}
				\GLab{\Label}{\Sort[Int]}{\GCom{\Role[B1]}{\Role[A]}{}} \{\!
					\begin{array}[t]{l}
						\GLab{\Label[bid]}{\Sort[Int]}{\GCom{\Role[A]}{\Role[B2]}} \{\!
							\begin{array}[t]{l}
								{\GLab{\Label}{\Sort[Int]}{\TVar}},\\
								\GLab{\Label[s]}{\Sort[Int]}{\GCom{\Role[A]}{\Role[B1]}{\GLab{\Label[r]}{}{\GEnd}}} \},
							\end{array}\\
						\GLab{\Label[no]}{}{\GCom{\Role[A]}{\Role[B2]}{\GLab{\Label[s]}{\Sort[Int]}{\GEnd}}} \},
					\end{array}\\
				\GLab{\Label[s]}{\Sort[Int]}{\GCom{\Role[A]}{\Role[B2]}{\GLab{\Label[r]}{}{\GEnd}}} \},
			\end{array}\\
			\GLab{\Label[no]}{}{\GCom{\Role[A]}{\Role[B1]}{\GLab{\Label[s]}{\Sort[Int]}{\GEnd}}} \big\} \big)
		\end{array}
	\end{align*}
\end{example}

Here, the auctioneer can send two kinds of messages, \ie initiates a branching, after receiving a bid.
Either the auctioneer considers the bid as valid and forwards it to the respective other role using a message with label \Label---as we already encountered in the previous example.
Or the auctioneer detects an invalid bid and transmits a message with the label \Label[r], signalling the bidder that the bid was rejected and, hence, that the other bidder won the auction.

\subsection{Promela}
\label{sec:promela-1}

In Example~\ref{exa:globalFixedTypeAuctioneer} we present $ \GRAsys $; a revised version of the global type of our toy example.
The \textsf{Promela} implementation of this global type is presented below.
Note that the differences between the Figure~\ref{imp:spin} and the code below reflect the additional choice implemented by the auctioneer that allows him to reject invalid bids.

\lstinputlisting{modelFixedProc.pml}

Due to the interleaving of independent actions, the state space of a concurrent system is in the worst case exponentially larger than of its sequentialisation.
As an example, we implemented the \emph{Needham-Schroeder public key protocol} with 10 pairs of processes that interact with the same server.

\lstinputlisting{modelNS.pml}

Next we present the \textsf{Promela} implementation of the implementation of its sequentialisation.

\lstinputlisting{modelNSSeq.pml}

%\textsf{Spin} generated for the original system more than 35 million states (matching more than 154 million states and more than 7,5GB memory) before crashing after 969 seconds.
%For the sequentialisation \textsf{Spin} computed the complete model in only 62 seconds generating
%75 million states.

%%%%%%%%%%%%%%%%%%
%% Bibliography %%
%%%%%%%%%%%%%%%%%%

\bibliographystyle{splncs04}
\bibliography{MPSTtoSequential}

\begin{thebibliography}{10}
\providecommand{\url}[1]{\texttt{#1}}
\providecommand{\urlprefix}{URL }
\providecommand{\doi}[1]{https://doi.org/#1}

\bibitem{synchMPST}
Bejleri, A., Yoshida, N.: {Synchronous Multiparty Session Types}. {Electronic
  Notes in Theoretical Computer Science}  \textbf{241},  3--33 (2009).
  \doi{10.1016/j.entcs.2009.06.002}

\bibitem{BettiniAtall08}
Bettini, L., Coppo, M., D~Antoni, L., De~Luca, M., Dezani-Ciancaglini, M.,
  Yoshida, N.: {Global Progress in Dynamically Interleaved Multiparty
  Sessions}. In: Proceedings of CONCUR. LNCS, vol.~5201, pp. 418--433 (2008).
  \doi{10.1007/978-3-540-85361-9\_33}

\bibitem{bocchiChenDemangeonHondaYoshida13}
Bocchi, L., Chen, T.C., Demangeon, R., Honda, K., Yoshida, N.: {Monitoring
  networks through multiparty session types}. In: Proceedings of FORTE. pp.
  50--65. No.~7892 in LNCS (2013). \doi{10.1007/978-3-642-38592-6\_5}

\bibitem{DemangeonHonda12}
Demangeon, R., Honda, K.: {Nested Protocols in Session Types}. In: Proceedings
  of CONCUR. LNCS, vol.~7454, pp. 272--286 (2012).
  \doi{10.1007/978-3-642-32940-1\_20}

\bibitem{gorla10}
Gorla, D.: {Towards a Unified Approach to Encodability and Separation Results
  for Process Calculi}. Information and Computation  \textbf{208}(9),
  1031--1053 (2010). \doi{10.1016/j.ic.2010.05.002}

\bibitem{spin91}
Holzmann, G.J.: {Design and Validation of Computer Protocols}. Prentice Hall
  (1991)

\bibitem{spin}
Holzmann, G.J.: {The model checker SPIN}. IEEE Transactions on software
  engineering  \textbf{23}(5),  279--295 (1997). \doi{10.1109/32.588521}

\bibitem{hondaYoshidaCarbone08}
Honda, K., Yoshida, N., Carbone, M.: {Multiparty Asynchronous Session Types}.
  In: Proceedings of POPL. vol.~43, pp. 273--284. ACM (2008).
  \doi{10.1145/1328438.1328472}

\bibitem{hondaYoshidaCarbone16}
Honda, K., Yoshida, N., Carbone, M.: {Multiparty Asynchronous Session Types}.
  Journal of the ACM (JACM)  \textbf{63}(1) (2016). \doi{10.1145/2827695}

\bibitem{milnerParrowWalker92}
Milner, R., Parrow, J., Walker, D.: {A Calculus of Mobile Processes}.
  Information and Computation  \textbf{100}(1),  1--77 (1992).
  \doi{10.1016/0890-5401(92)90008-4}

\bibitem{parrowCoupled92}
Parrow, J., Sj{\"o}din, P.: {Multiway synchronization verified with coupled
  simulation}. In: Proceedings of CONCUR. pp. 518--533. No.~630 in LNCS (1992).
  \doi{10.1007/BFb0084813}

\bibitem{petersGlabbeek15}
Peters, K., van Glabbeek, R.: {Analysing and Comparing Encodability Criteria}.
  In: Proceedings of EXPRESS/SOS. EPTCS, vol.~190, pp. 46--60 (2015).
  \doi{10.4204/EPTCS.190.4}

\bibitem{petersWagnerNestmann19}
Peters, K., Wagner, C., Nestmann, U.: {Taming Concurrency for Verification
  Using Multiparty Session Types}. In: Proceedings of ICALP (2019), to appear.

\bibitem{yoshida10}
Yoshida, N., Deni{\'e}lou, P.M., Bejleri, A., Hu, R.: {Parameterised Multiparty
  Session Types}. In: Proceedings of FoSSaCS. LNCS, vol.~6014, pp. 128--145
  (2010). \doi{10.1007/978-3-642-12032-9\_10}

\end{thebibliography}

\end{document}